\newtheorem{theorem}{Theorem}
\newtheorem{lemma}[theorem]{Lemma}
\newtheorem{proposition}[theorem]{Proposition}
\newtheorem{corollary}[theorem]{Corollary}
\theoremstyle{definition}
\theoremstyle{remark}
\newcommand{\yes}{\ensuremath{\mathsf{yes}}\xspace}
\newcommand{\midcolon}{:}
\newcommand{\symdiff}{\mathbin{\triangle}}
\newcommand{\weig}{w} 
\newcommand{\Redset}[1]{#1^{\mathrm{red}}}
\newcommand{\Blueset}[1]{#1^{\mathrm{blue}}}
\newcommand{\Ered}{\Redset{E}}
\newcommand{\Eblue}{\Blueset{E}}
\newcommand{\Ared}{\Redset{A}}
\newcommand{\Ablue}{\Blueset{A}}
\newcommand{\andsc}{\textsc{and}\xspace}
\newcommand{\orsc}{\textsc{or}\xspace}
\newcommand{\andorsc}{\textsc{and/or}\xspace}
\newcommand{\Vand}{V_{\andsc}}
\newcommand{\Vor}{V_{\orsc}}
\newcommand{\indegree}[1]{\rho_{#1}}
\newcommand{\onestep}{\leftrightarrow}
\newcommand{\reachable}{\leftrightsquigarrow}
\newcommand{\reconfseq}[1]{\langle #1 \rangle}
\newcommand{\ini}{\mathrm{ini}}
\newcommand{\tar}{\mathrm{tar}}
\newcommand{\ncl}{\textsc{NCL}\xspace}
\newcommand{\BCSR}{\textsc{BCSR}\xspace}
\newcommand{\oriG}{\hat{G}}
\newcommand{\oriA}{\hat{A}}
\newcommand{\oriV}{\hat{V}}
\newcommand{\oriE}{\hat{E}}
\newcommand{\Vmid}{V_{\textsc{mid}}}
\newcommand{\Eand}{E_{\andsc}}
\newcommand{\Eor}{E_{\orsc}}
\newcommand{\Var}{X} 
\newcommand{\Ecns}{F} 
\newcommand{\Dommap}{D}
\newcommand{\Dom}[1]{\Dommap(#1)}
\newcommand{\Cnsmap}{\mathcal{C}}
\newcommand{\Cns}[1]{\Cnsmap(#1)}
\newcommand{\mapf}{\Gamma}
\newcommand{\doms}{d}
\newcommand{\NB}{p}
\def\ProblemA{{\sc Problem A}\xspace}
\newcommand{\kreconfig}[1][$k$]{\textsc{Reconfiguration}\xspace}
\newcommand{\X}[1]{}
\newcommand{\YES}{\ensuremath{\mathsf{yes}}\xspace}
\newcommand{\NO}{\ensuremath{\mathsf{no}}\xspace}
\newcommand{\ctoc}{\textsc{C2C}\xspace}
\newcommand{\ctoe}{\textsc{C2E}\xspace}
\newcommand{\kernelmark}{\ensuremath{\dagger}}
\newtheorem{myclaim}{Claim}
\newtheorem{reduction}{Reduction rule}
\tikzset{
        edge/.style={thick, gray},
        medge/.style={decorate,very thick,decoration={snake}},
        nedge/.style={very thick,dashed,black},
        vertex/.style={shape=circle,thick,draw,node distance=3em}
}
\title{Fixed-Parameter Algorithms for Graph Constraint Logic} 
\author[1]{Tatsuhiko Hatanaka\thanks{Partially supported by JSPS KAKENHI Grant Number JP16J02175, Japan.}}
\author[2]{Felix Hommelsheim}
\author[1]{Takehiro Ito\thanks{Partially supported by JSPS KAKENHI Grant Numbers JP18H04091 and JP19K11814, Japan.}}
\author[3]{Yusuke Kobayashi\thanks{Partially supported by JSPS KAKENHI Grant Numbers 17K19960, 18H05291, and JP20K11692, Japan.}}
\author[4]{Moritz M\"uhlenthaler}
\author[1]{Akira Suzuki\thanks{Partially supported by JSPS KAKENHI Grant Numbers JP18H04091 and JP20K11666, Japan.}}
\affil[1]{Graduate School of Information Sciences, Tohoku University, Sendai, Japan}
\affil[2]{Fakult\"at f\"ur Mathematik, TU Dortmund University, Germany}
\affil[3]{Research Institute for Mathematical Sciences, Kyoto University, Japan}
\affil[4]{Laboratoire G-SCOP, Grenoble INP, Universit\'e Grenoble Alpes, France}
\begin{document}

\maketitle

\begin{abstract}
	Non-deterministic constraint logic (\ncl) is a simple model of
	computation based on orientations of a constraint graph with edge weights and
	vertex demands. \ncl captures \PSPACE\xspace and has been a useful tool for
	proving algorithmic hardness of many puzzles, games, and
	reconfiguration problems.  
	In particular, its usefulness stems from the fact that it remains \PSPACE-complete
	even under severe restrictions of the weights (e.g., only edge-weights one and two
	are needed) and the structure of the constraint graph (e.g., planar \andorsc
	graphs of bounded bandwidth). While such restrictions on the structure
	of constraint graphs do not seem to limit the expressiveness of \ncl,
	the building blocks of the constraint graphs cannot be limited without
	losing expressiveness: We consider as parameters the number of weight-one edges and
	the number of weight-two edges of a constraint graph, as well as the number of
	\andsc or \orsc vertices of an \andorsc constraint graph. We show that \ncl is
	fixed-parameter tractable (FPT) for any of these parameters. In particular, for \ncl
	parameterized by the number of weight-one edges or the number of \andsc
	vertices, we obtain a linear kernel. It follows that, in a sense, \ncl
	as introduced by Hearn and Demaine is defined in the most economical
	way for the purpose of capturing \PSPACE.
\end{abstract}

\newpage

\section{Introduction}
  Non-deterministic constraint logic (\ncl) has been introduced by Hearn and
  Demaine~\cite{HD:05} as a model of computation in order to show that many
  puzzles and games are complete in their natural complexity classes. For
  instance, they showed that the 1-player games Sokoban and Rush Hour are
  \PSPACE-complete~\cite{HD:05} and there are many follow-up results showing
  hardness of a large number of puzzles, games, and reconfiguration
  problems. An \ncl constraint graph is a graph with edge-weights
  one and two and a \emph{configuration} is given by an orientation of the
  constraint graph, such that the in-weight at each vertex is at least two. Two
  configurations are \emph{adjacent} if they differ with respect to the
  orientation of a single edge. The question whether two given configurations
  are connected by a path, i.e., a sequence of adjacent configurations, is
  known to be \PSPACE-complete, even if the constraint graph is a planar graph
  of maximum degree three (in fact, a planar \andorsc graph, to be defined
  shortly)~\cite{HD:05}. Similar hardness results are known for the
  question whether it is possible to reverse a single given edge, or whether
  there is a transformation between two configurations, such that each edge is
  reversed at most once.

  One of the main advantages of \ncl, apart from its simplicity, is its
  hardness on constraint graphs with a severely restricted structure, which
  entails strong hardness results for other problems. 
  In particular, \ncl is \PSPACE-complete on \emph{\andorsc graphs}, which are
  cubic graphs, where each vertex is either incident to three weight-two edges
  (``\orsc vertex'') or exactly one weight-two edge (``\andsc vertex''), see
  Figure~\ref{fig:andor}. It remains \PSPACE-complete if in addition we assume that the constraint graphs are
  planar~\cite{HD:05} and have bounded bandwidth~\cite{Zanden}.
  We investigate the possibility 
  of obtaining a further strengthening by
  restricting the \emph{composition} of the constraint graph. In particular we
  consider constraint graphs with a bounded number of weight-one or weight-two
  edges, and \andorsc graphs with a bounded number of AND or OR vertices.
  Our main result is that \ncl parameterized by any of the four quantities
  admits an FPT algorithm. That is, for the purpose of capturing \PSPACE, the definition of \ncl given by Hearn and Demaine is as
  economical as possible. We furthermore hope that based on our results, \ncl
  may become of interest for investigating the parameterized complexity of
  puzzles, games, and reconfiguration problems.
 
  In the following we adhere to the historical convention that an edge of
  weight one (resp., weight two) of a constraint graph is called \emph{red}
  (resp., \emph{blue}). We refer to the question whether a given configuration
  of a constraint graph is reachable from another given configuration as
  \emph{configuration-to-configuration} (\emph{\ctoc}). Furthermore, by
  \emph{configuration-to-edge} (\ctoe) we refer to the question whether, we can
  reach from a given configuration another one such that a given edge is
  reversed.

  \begin{figure}[tbh]
	  \begin{center}
		  \begin{subfigure}[b]{0.4\linewidth}
			  \centering
			  \begin{tikzpicture}[node distance=5em]
				\node[vertex,label=right:$\geq 2$] (c) {};  
				\node[above=3em of c] (u) {};  
				\node[below left=3em of c] (v) {};  
				\node[below right=3em of c] (w) {};  

				\draw[line width=2pt,red!80] (v) -- (c) node[midway,label=left:1] {};
				\draw[line width=2pt,red!80] (w) -- (c) node[midway,label=right:1] {};
				\draw[blue!80,line width=4pt] (u) -- (c) node[midway,label=right:2] {};
			  \end{tikzpicture}
			  \caption{\andsc vertex}
		  \end{subfigure}
		  \begin{subfigure}[b]{0.4\linewidth}
			  \centering
			  \begin{tikzpicture}[node distance=5em]
				\node[vertex,label=right:$\geq 2$] (c) {};  
				\node[above=3em of c] (u) {};  
				\node[below left=3em of c] (v) {};  
				\node[below right=3em of c] (w) {};  

				\draw[blue!80,line width=4pt] (v) -- (c) node[midway,label=left:2] {};
				\draw[blue!80,line width=4pt] (w) -- (c) node[midway,label=right:2] {};
				\draw[blue!80,line width=4pt] (u) -- (c) node[midway,label=right:2] {};
			  \end{tikzpicture}
			  \caption{\orsc vertex}
		  \end{subfigure}
		  \caption{The two types of vertices that occur in \andorsc constraint graphs. Edges must be oriented such that the in-weight at each vertex is at least two. By convention, weight-one edges are red and weight-two edges are blue.\label{fig:andor}}
	  \end{center}
  \end{figure}
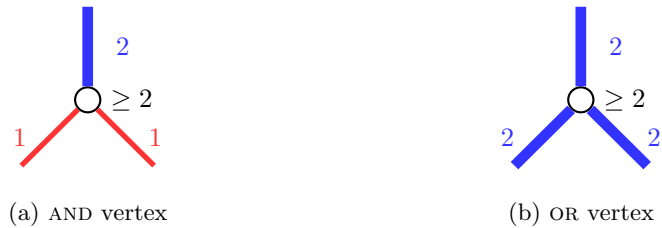


\paragraph*{Our Contribution}

\begin{table}[t]
	\vspace{-1.5em}
	\centering \caption{Parameterized Complexity of \ncl. For entries marked with $\kernelmark$ we obtain a linear kernel.\label{tab:FPT:overview}}
	\begin{tabular}{ l | l | l  }
		\toprule
		Parameter(s) & \ctoc & \ctoe \tabularnewline
		\midrule
		treewidth and max.~degree~\cite{Zanden} & \PSPACE-c  & \PSPACE-c \tabularnewline
		transformation length~\cite{Zanden}  & $\W[1]$-hard & $\W[1]$-hard \tabularnewline
		transformation length and max.~degree~\cite{Zanden} & \FPT & \FPT \tabularnewline
		\# of \andsc vertices (\andorsc graphs) & $\FPT^\kernelmark$~(Cor.~\ref{cor:ncl-and-or:red-edges-kernel}) & $\FPT^\kernelmark$~(Cor.~\ref{cor:red-edges-kernel:c2e}) \tabularnewline
		\# of \orsc vertices (\andorsc graphs) & \FPT~(Thm.~\ref{the:FPT_OR})  & \FPT~(Thm.~\ref{the:FPT_OR}) \tabularnewline
		\# of red edges & $\FPT^\kernelmark$~(Thm.~\ref{thm:ncl-red-edges-bound-kernel}) & $\FPT^\kernelmark$~(Cor.~\ref{cor:red-edges-kernel:c2e}) \tabularnewline
		\# of blue edges & \FPT~(Thm.~\ref{thm:blue01}) & \FPT~(Cor.~\ref{cor:blue01}) \tabularnewline
		\bottomrule
	\end{tabular}
\end{table}
We consider four natural parameterizations of \ncl and show that the
corresponding parameterized problems admit FPT algorithms. In particular we consider as parameters
\begin{enumerate}
	\item the number of \andsc vertices of an \andorsc graph,
	\item the number of \orsc vertices of an \andorsc graph,
	\item the number of red edges of a constraint graph, and
	\item the number of blue edges of a constraint graph.
\end{enumerate}
Note that none of these parameterizations trivially leads to an \XP~algorithm
that just enumerates all orientations for the constant number of red/blue edges
according to the parameter.
For an overview of the parameterized complexity results on \ncl, including our results,
please refer to Table~\ref{tab:FPT:overview}. 

\ncl is known to be \PSPACE-complete on \andorsc constraint graphs, which are
undirected edge-weighted graphs where each vertex is either and \andsc vertex
or an \orsc vertex as shown in Figure~\ref{fig:andor}. We show that \ctoc and \ctoe
parameterized by the number of \andsc vertices or the number of \orsc vertices
admits an FPT algorithm.
The algorithm first performs a preprocessing step followed by a reduction
to the problem \textsc{Binary Constraint Satisfiability Reconfiguration}
(\BCSR{} for short). Hatanaka et al.~have shown that \BCSR{} can be solved in
time $O^*(\doms^{O(\NB)})$, where $\doms$ and $\NB$ are the maximum size of a
domain and the number of non-Boolean variables, respectively~\cite{HIZ18}.

On general constraint graphs we obtain a linear kernel for \ctoc parameterized
by the number of red edges. For this purpose we introduce three reduction
rules, which, when applied exhaustively, yield a kernel of linear size. To the
best of our knowledge, this is the first polynomial kernel for a
parameterization of \ncl. 
The first rule states that each component containing at least two blue cycles
can be replaced by a gadget of constant size for each red edge that is attached
to the component. 
The second rule states that vertices incident to a blue edge only can be
deleted, since the orientation of this edge is the same for every orientation.
The third rule is inverse to subdividing a blue edge: any vertex incident to
precisely two blue edges can be deleted and replaced by a single blue edge
connecting its former neighbors.  Note that the number of red edges in an
\andorsc graph is precisely the number of \andsc vertices in an \andorsc graph.
Hence, a linear kernel for \ncl parameterized by the number of red edges
implies a linear kernel for \ncl parameterized by the number of \andsc vertices
of an \andorsc graph.  Furthermore, we show that slightly modified reduction
rules can be applied in order to obtain a linear kernel for \ctoe.

Finally, we consider \ctoc and \ctoe parameterized by the number $k$ of blue edges and
show that it admits an FPT algorithm.
Our key idea is to partition the set of feasible orientations of the constraint
graph into $2^{O(k)}$ classes, such that in each class, all blue edges are
oriented in the same way and the red edges have the same indegree sequence.
Denote the set of these classes by $\mathcal{F}$, and 
define a mapping $\phi$ from the set of orientations of the constraint graph to $\mathcal{F}$ (see Section~\ref{sec:classA} for the details). 
Then, in Section~\ref{sec:reconfF}, we define an adjacency relation 
between elements in $\mathcal{F}$, 
which is consistent with the reachability of configurations of the constraint graph in some sense.   
In our algorithm, instead of the original reconfiguration problem, 
we first solve the reconfiguration problem in $\mathcal{F}$, which can be done in $2^{O(k)} \cdot {\rm poly}(|V|)$ time, where $V$ is the set of vertices of the constraint graph.  
If it is impossible to reach the target configuration in $\mathcal{F}$, then
we can conclude that it is also impossible with respect to the original
constraint graph.  Otherwise, we can reduce the original problem to the case
that the blue edges agree in the initial and target configuration and the set
of red edges in the initial configurations whose orientation differs from the
target configuration consists of arc-disjoint dicycles (see
Section~\ref{sec:algo}). Finally, in Section~\ref{sec:blueprobA}, we test
whether the direction of each dicycle can be reversed or not. 

\vspace{-0.5em}
\paragraph*{Related Work}

A large number of puzzles, games, and
reconfiguration problems have been shown to be hard using reductions from \ncl
and its variants. Examples include motion planning problems, where rectangular
pieces have to be moved to certain final positions and sliding block puzzles
such as Rush Hour~\cite{FlakeBlaum,HD:05}, Sokoban~\cite{HD:05}, Snowman~\cite{HZY:17} and other
puzzle games such as Bloxors~\cite{vdZB:15}. In the \emph{bounded length}
version of \ncl, the orientation of each edge may be reversed at most once.
This variant has been used to show \NP-completeness of the games Klondike,
Mahjong Solitaire and Nonogram~\cite{Acyclic:14}. Note that \ncl gives a
uniform view on games as computation and often allows for simpler proofs and
strengthenings of known complexity results in this area.
Furthermore, 
deciding 
proof equivalence  in multiplicative linear logic has been shown to be 
\PSPACE-complete by a reduction from \ncl~\cite{HH:16}. 

\ncl is also very useful for showing hardness of reconfiguration problems. In
a reconfiguration problem we are given two configurations and agree on some
simple ``move'' that produces a new configuration from a given one. The
question is whether we can reach the second configuration from the first by a
sequence of moves.  For surveys on reconfiguration problems, please
refer to~\cite{Nishimura:18,Heuvel:13}. For many reconfiguration problems, such
as 
token sliding on graphs~\cite{HD:05}, a variant of 
independent set reconfiguration~\cite{Kaminski:12}, as well as 
vertex cover reconfiguration~\cite{Ito:11}, 
dominating set reconfiguration~\cite{Haddadan:16}, 
reconfiguration of paths~\cite{Paths:19},
and deciding Kempe-equivalence of 3-colorings~\cite{Kempe:19}, 
reductions from \ncl establish \PSPACE-hardness even on planar graphs of low
maximum degree.  Van der Zanden showed that there is some constant $c$, such
that \ncl is \PSPACE-complete on planar subcubic graphs of bandwidth at most
$c$~\cite{Zanden}. Note that this property is often maintained in the
reductions~\cite{Kempe:19,Paths:19,Haddadan:16,HD:05,Ito:11} 
and it implies 
that \ncl remains hard on graphs of bounded treewidth.

Tractable special cases of \ncl have received much less attention. 
Concerning parameterized complexity, \ncl remains \PSPACE-complete when
parameterized by treewidth and maximum degree of the constraint graph. On the
other hand, \ncl parameterized by the length of the transformation is
$\W[1]$-hard and it becomes FPT when parameterized by the length of the
transformation and the maximum degree~\cite{Zanden}.
If additionally each edge may be reversed at most once in a transformation,
\ncl is FPT when parameterized by treewidth and the maximum degree, or by the
length of the transformation~\cite{Zanden}.

\paragraph*{Organization}

The paper is organized as follows. In the next section we give some preliminaries
about \ncl and introduce notation used throughout the paper.
Section \ref{sec:OR} contains our FPT algorithm for \ncl parameterized by the number of \orsc vertices.
The linear kernel for \ncl parameterized by the number of red edges, which also
implies the result for \andsc vertices, can be found in Section~\ref{sec:red}.
Finally, in Section \ref{sec:blue} we give an FPT algorithm for \ncl parameterized by the number of blue edges.
Section~\ref{sec:conclusion} concludes the paper and gives some open problems.

\section{Preliminaries}

Let $G=(V, E)$ be an undirected graph, which may have multiple edges and (self)
loops.
We denote by $V(G)$ (resp., $E(G)$) the
set of vertices (resp., set of edges) $G$.  Each edge in an undirected
graph which joins two vertices $x$ and $y$ is represented as an unordered pair
$xy$ (or equivalently $yx$).  On the other hand, each arc in a digraph which
leaves $x$ and enters $y$ is written as an ordered pair $(x,y)$.  Let $(V, E, w)$ be a constraint graph, that is, an undirected graph $(V, E)$ with edge
weights $w : E \to \{1, 2\}$.  We denote by $\Ered$ and $\Eblue$ the
sets of red (weight one) and blue (weight two) edges in $E$, respectively, and have that $E = \Ered \cup \Eblue$.  We
denote by $\Vand(G)$ and $\Vor(G)$ the sets of \andsc and \orsc vertices in a
graph $G$, respectively; we sometimes drop $G$, and simply write $\Vand$ and
$\Vor$ if it is clear from the context.  A constraint graph is called
\emph{\andorsc graph} if each vertex is an \andsc or \orsc vertex; thus,
an \andorsc graph is $3$-regular.
 
An \emph{orientation} $A$ of $E$ is a multi-set of arcs obtained by replacing each edge in $E$ with a single arc having the same end vertices. 
We refer to $G$ as the underlying graph of the digraph $(V,A)$. 
For an orientation $A$ of $E$, we always denote by $\Ared$ and $\Ablue$ the subsets of $A$ corresponding to $\Ered$ and $\Eblue$, respectively. 
For any arc subset $B \subseteq A$ and a vertex $v \in V$, let $\indegree{B}(v)$ denote the number of arcs in $B$ that enter $v$. 
Then, $\indegree{B}$ can be regarded as a vector in $\mathbb{Z}_{\ge 0}^V$, where $\mathbb{Z}_{\ge 0}$ is the set of all nonnegative integers. 
An orientation $A$ of $E$ is {\em feasible} if $\indegree{\Ared}(v) + 2\cdot\indegree{\Ablue}(v) \ge 2$ for every $v \in V$; 
a feasible orientation is synonymously referred to as \emph{configuration}.

For two orientations $B$ and $B'$ of an edge subset $F \subseteq E$, we write $B \onestep B'$ if $B = B'$ or there exists an arc $(x,y) \in B$ such that $B' = (B \setminus \{(x,y)\}) \cup \{(y,x)\}$.  
For notational convenience, we simply write $B' = B - (x,y) + (y,x)$ in the latter case. 
For an orientation $B$ of $F$, \emph{reversing} the direction of an edge $xy
\in F$ is the operation which yields from $B$ an orientation $B'$ of $F$, such
that $B' = B-(x,y)+(y,x)$ if $(x, y) \in B$ and $B-(y,x)+(x,y)$ otherwise.
For two feasible orientations $A$ and $A^\prime$ of $E$, a sequence $\reconfseq{A_0, A_1, \ldots, A_\ell}$ of feasible orientations of $E$ is called a \emph{reconfiguration sequence} between $A$ and $A'$ if $A_0 = A$, $A_\ell = A'$, and $A_{i-1} \onestep A_i$ for all $i \in \{1,2,\ldots, \ell\}$.  
We write $A \reachable A'$ if there exists a reconfiguration sequence between
$A$ and $A'$ (or $A \not\leftrightsquigarrow  A'$ if not).  Given a constraint graph $G$ and two feasible orientations
$A_{\ini}$ and $A_{\tar}$ of $E(G)$, the problem \ctoc asks whether $A_{\ini}
\reachable A_{\tar}$ or not. Similarly, given a constraint graph $(G, w)$, a
feasible orientation $A_{\ini}$ of $E(G)$, and an edge $e \in E(G)$, the problem
\ctoe asks whether there is a feasible orientation $A_{tar}$, such that $A_{\ini}
\reachable A_{\tar}$ and the direction of $e$ is different in $A_{\ini}$ and
$A_{\tar}$.  
We denote by a triple $(G, A_{\ini}, A_{\tar})$ an instance of \ctoc and by a
triple $(G, A_{\ini}, vw)$ an instance of \ctoe.

\section{NCL for AND/OR graphs}
\label{sec:OR}
	In this section, we consider \ncl when restricted to \andorsc constraint graphs.
	Recall that \ncl remains \PSPACE-complete on \andorsc
	graphs~\cite{HD:05}.  We thus prove that \ctoc and \ctoe on \andorsc
	constraint graphs is fixed-parameter tractable when parameterized by
	the number of \orsc vertices. An analogous result for \ctoc and \ctoe
	parameterized by the number of \andsc vertices follows from our FPT
	result for \ncl parameterized by the number of red edges in the next
	section (see Theorem~\ref{thm:ncl-red-edges-bound-kernel}).  Therefore, the main result here is
	the following theorem.
	\begin{theorem} \label{the:FPT_OR}
		\ctoc  and \ctoe on \andorsc constraint graphs with $n$ vertices
		parameterized by the number $k$ of \orsc vertices admits a
		$2^{O(k)} \cdot {\rm poly}(n)$-time algorithm.
	\end{theorem}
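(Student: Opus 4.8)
The plan is to reduce \ctoc on an \andorsc constraint graph $G$ to an instance of \BCSR{} in which only $O(k)$ variables are non-Boolean and each has domain of constant size, and then to invoke the $O^*(\doms^{O(\NB)})$-time algorithm of Hatanaka et al.~\cite{HIZ18}. \ctoe will be handled by essentially the same construction (asking for reachability of a configuration in which the designated edge is reversed) or by a standard self-reduction of \ctoe to \ctoc.

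The starting point is a structural description of \andorsc graphs. Since every red edge joins two \andsc vertices, the red subgraph is a vertex-disjoint union of cycles $C_1,\dots,C_t$ through the \andsc vertices, and every \andsc vertex is incident to exactly one blue edge, leading either to an \orsc vertex or to another \andsc vertex. Feasibility at an \andsc vertex $v$ on a red cycle $C$ amounts to the condition: the blue edge at $v$ points into $v$, or both red edges of $C$ at $v$ point into $v$. Hence, in every configuration the set of vertices of $C$ whose blue edge points outward is an \emph{independent set} of $C$, and conversely every independent set of $C$ extends to a feasible orientation of the red edges of $C$. I would upgrade this to a reconfiguration statement --- a \emph{red-forgetting lemma}: once the blue edges incident to $C$ are oriented so that the outward set is independent, any two feasible orientations of the red edges of $C$ are connected by single-edge reversals that touch no other edge; and reversing one blue edge incident to $C$ while keeping the outward set independent can be realized in $G$ by first re-orienting, feasibly, the red edges of $C$ around the affected vertex. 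In other words, the red edges can be discarded: \ctoc on $G$ is equivalent to a reconfiguration problem on the blue edges alone, under the constraints (i) each \orsc vertex has at least one incoming blue edge and (ii) on each red cycle no two cyclically consecutive blue edges point outward at the same time.

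Before building the \BCSR{} instance I would run a preprocessing step that (a) replaces every red cycle of length less than three, and every red cycle having a blue edge with both endpoints on it, by an equivalent gadget of constant size, and (b) subdivides every blue edge joining two \orsc vertices (this preserves reachability, as in the blue-edge subdivision rule of Section~\ref{sec:red}); neither operation changes $k$. Afterwards every red cycle is simple, has length at least three and pairwise distinct incident blue edges, and no two \orsc vertices are adjacent. Now I would set up \BCSR{}: introduce one variable $z_o$ for every \orsc vertex $o$, whose domain is the set of feasible orientations of its three incident blue edges (of size at most seven), and one Boolean variable $x_e$ for every blue edge $e$ joining two \andsc vertices; the orientation of a blue edge incident to an \orsc vertex is recorded only inside the corresponding $z_o$. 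Constraint (i) becomes the domain of $z_o$ together with, for each degree-two vertex created by a subdivision, a binary constraint between the two variables $z_o$ at its ends; constraint (ii) becomes, for each red cycle and each pair of cyclically consecutive incident blue edges, a binary ``not both outward'' constraint between the two variables recording those two edges. By the red-forgetting lemma, the map sending a configuration of $G$ to the induced assignment is reachability-preserving in both directions (a red reversal corresponds to no change, and every blue reversal corresponds to changing a single variable), so the \ctoc instance is a yes-instance if and only if the \BCSR{} instance is. The non-Boolean variables are exactly the $z_o$, there are $k$ of them, each of domain size at most seven, so the algorithm of \cite{HIZ18} runs in time $7^{O(k)}\cdot{\rm poly}(n)=2^{O(k)}\cdot{\rm poly}(n)$.

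The crux, and the step I expect to cost the most work, is the red-forgetting lemma, specifically its ``completeness'' direction: lifting a reconfiguration sequence of the \BCSR{} instance back to $G$. When a variable change flips the blue edge at an \andsc vertex $v$ from pointing into $v$ to pointing out of $v$, one must first bring both red edges of $v$'s cycle at $v$ into $v$ without breaking feasibility elsewhere on that cycle, and one must verify that the red reversals needed to enable one move never undo commitments forced by an earlier move. This is precisely where the fact that the outward set stays independent along each red cycle --- which the binary constraints (ii) enforce exactly --- is used, and it is also the reason the degenerate red cycles (short ones, and ones carrying an internal blue chord), for which the clean correspondence ``one blue reversal $=$ one vertex of the cycle'' breaks down, have to be eliminated during preprocessing.
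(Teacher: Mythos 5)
Your overall architecture coincides with the paper's: preprocess, reduce to \BCSR{} with one seven-valued variable per \orsc vertex and Boolean variables elsewhere, and invoke the $O^*(\doms^{O(\NB)})$-time algorithm of Hatanaka et al. The substantive difference is your treatment of the red edges. The paper simply keeps \emph{every} edge incident to an \andsc vertex --- red edges included --- as a Boolean edge-variable, and encodes the in-weight requirement at an \andsc vertex $v$ with incident edges $vv_1,vv_2$ (red) and $vv_3$ (blue) directly as two binary constraints (``$vv_1$ or $vv_3$ points to $v$'' and ``$vv_2$ or $vv_3$ points to $v$''). Since Boolean variables cost nothing in the $O^*(\doms^{O(\NB)})$ bound, there is no reason to eliminate the red edges: the correspondence between configurations and proper solutions becomes a bijection, one edge reversal corresponds to one variable change (an \orsc-variable change unfolds into at most three reversals, ordered so that incoming reversals precede outgoing ones), and the equivalence proof is routine. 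Your route instead discards the red edges via the ``red-forgetting lemma'' and re-encodes each red cycle as an independence condition on the set of outward blue edges. This buys nothing in the running time, and it is precisely where all of the difficulty in your argument is concentrated.

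Concerning that difficulty, there are two genuine gaps. First, the red-forgetting lemma --- which you correctly identify as the crux --- is only sketched, and its completeness direction is not uniformly true: a blue edge joining two \emph{cyclically adjacent} \andsc vertices of the same red cycle (i.e., running parallel to a red edge of that cycle) is permanently frozen, because reversing it would require the shared red edge to point into the newly outward endpoint while that red edge is pinned into the currently outward endpoint. So such edges need a dedicated unary ``frozen'' rule; your binary constraint (ii) alone does not capture this. Second, your preprocessing step (a) cannot work as stated: a red cycle carrying a blue edge with both endpoints on it may have $\Omega(n)$ vertices, each with its own blue edge leaving the cycle, so it cannot be ``replaced by an equivalent gadget of constant size.'' The degenerate structures (red loops, red $2$-cycles, blue edges parallel to red cycle edges) are all handleable by local unary or binary constraints rather than whole-cycle replacement, but as written this step does not go through. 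The paper's direct encoding of red edges as Boolean variables sidesteps every one of these case distinctions; I would recommend adopting it.
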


	In the reminder of this section, we give an overview of the proof
	of Theorem~\ref{the:FPT_OR}.  Our strategy is to give an FPT-reduction
	from \ctoc on \andorsc constraint graphs to the \textsc{binary
	constraint satisfiability reconfiguration} problem (\BCSR, for
	short)~\cite{HIZ18}, which will be defined in
	Section~\ref{sec:reduction_BCSR}. To do so, we first apply some
	preprocessing to a given instance of \ctoc on an \andorsc graph (in
	Section~\ref{sec:modification}), and then give our FPT-reduction to
	\BCSR (in Section~\ref{sec:reduction_BCSR}). By similar arguments we obtain the result for \ctoe.
		
\subsection{Preprocessing}
\label{sec:modification}
	The preprocessing subdivides each blue edge that is not a loop into two
	blue edges. It is not hard to see that a single subdivision yields an
	equivalent instance: Let $uv$ be a blue edge of a constraint graph
	$\oriG$ and consider the constraint graph $G$ obtained by subdividing
	$uv$ into two blue edges $uz$ and $zv$, where $z$ is a new vertex we
	call \emph{middle vertex}.  Let $G$ be the resulting constraint graph
	and observe that from any feasible orientation $\oriA$ of $\oriG$ we
	may obtain a feasible orientation $A$ of $G$ by letting $A = \oriA -
	(u, v) + (u, z) + (z, v)$ if $(u, v) \in \oriA$ and $A = \oriA - (v, u)
	+ (v, z) + (z, u)$ otherwise.

	Furthermore, in any feasible orientation of $\oriG$, we can
	transfer in-weight from, say, $u$ to $v$ by reversing the arc $(v, u)$
	iff the in-weight at $u$ is at least four.
	Furthermore, due to the orientation of $uv$, the corresponding arc
	contributes to the in-weight of precisely one of $u$ and $v$.
	Conversely, in an orientation of $G$, we can transfer in-weight from,
	say, $u$ to $v$ by reversing the directions of the arcs corresponding
	to $uz$ and $zv$ iff the in-weight at $u$ is at least four.
	Furthermore, in any orientation of $G$, the arcs corresponding to $uz$
	and $zv$ contribute in-weight to at most one of $u$ and $v$. Hence, by
	subdividing a blue edge of $\oriG$ from an instance $(\oriG,
	\oriA_\ini, \oriA_\tar)$ of \ctoc, we obtain an equivalent instance.
	Let $(G, A_\ini, A_\tar)$ be the instance of \ctoc obtained from
	$(\oriG, \oriA_\ini, \oriA_\tar)$ by subdividing each blue edge of
	$\oriG$ that is not a loop. By repetition of the above argument we
	obtain the following result.

	\begin{lemma} \label{clm:cons2}
		$(G, A_{\ini}, A_{\tar})$ is a $\yes$-instance if and only if $(\oriG, \oriA_{\ini}, \oriA_{\tar})$ is.
	\end{lemma}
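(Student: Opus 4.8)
The statement asserts that subdividing every non-loop blue edge of a constraint graph preserves the answer to \ctoc. The plan is to prove this for a single subdivision and then iterate, since the graph obtained after one subdivision is again a constraint graph to which the same argument applies. So fix a blue edge $uv$ of $\oriG$ that is not a loop, let $z$ be the new middle vertex, and let $G$ be the result of replacing $uv$ by the blue path $uz$, $zv$.

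First I would make precise the correspondence between orientations already sketched in the text. Define a map $\mapf$ from feasible orientations of $\oriG$ to feasible orientations of $G$ by $\mapf(\oriA) = \oriA - (u,v) + (u,z) + (z,v)$ if $(u,v) \in \oriA$, and $\mapf(\oriA) = \oriA - (v,u) + (v,z) + (z,u)$ otherwise; all other arcs are left untouched. I would check that $\mapf(\oriA)$ is feasible: at $u$ and $v$ the in-weight is unchanged (each receives $2$ from the blue path iff it received $2$ from $uv$), and at $z$ exactly one of the two new blue arcs enters, giving in-weight $2$. Conversely, among the feasible orientations of $G$ only those in which the two blue arcs at $z$ are oriented consistently (both ``through'' $z$ in the same direction, i.e. one entering and one leaving) can be in the image; but in fact \emph{every} feasible orientation of $G$ has this property, since if both blue arcs entered $z$ some endpoint $u$ or $v$ would be missing the contribution, and — crucially — $z$ has degree exactly $2$ with only blue edges, so its feasibility constraint ($\indegree{\Ablue}(z) \ge 1$) together with the fact that reversing a single blue arc at $z$ always changes $z$'s in-weight by $2$ means the two arcs can never both point away from $z$ either. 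Hence $\mapf$ is a bijection onto the feasible orientations of $G$ whose restriction off the gadget agrees with some feasible orientation of $\oriG$, and I would identify its inverse $\cani$ by contracting the path back to an edge.

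Next I would establish that $\mapf$ respects the reconfiguration relation. For the forward direction: if $\oriA \onestep \oriA'$ by reversing an edge $e \ne uv$, then $\mapf(\oriA) \onestep \mapf(\oriA')$ by reversing the same edge; if $e = uv$, then $\mapf(\oriA)$ and $\mapf(\oriA')$ differ on both blue arcs of the gadget, but — as the text observes — reversing $uv$ in $\oriG$ is possible exactly when the in-weight at the losing endpoint, say $u$, is at least $4$, and in that situation in $G$ one can first reverse $uz$ (legal because $u$ has in-weight $\ge 4$, so it drops to $\ge 2$, and $z$ goes from in-weight $2$ to $4$) and then reverse $zv$ (legal because $z$ drops back to $2$ and $v$ rises by $2$), yielding $\mapf(\oriA')$ via a length-$2$ subsequence of feasible orientations of $G$. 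So $\oriA \reachable \oriA'$ in $\oriG$ implies $\mapf(\oriA) \reachable \mapf(\oriA')$ in $G$. For the converse I would argue along a reconfiguration sequence in $G$ from $A_\ini$ to $A_\tar$: by the bijectivity discussion every orientation in the sequence has its $z$-gadget consistently oriented, so it equals $\mapf$ of some feasible orientation of $\oriG$; a single move in $G$ that reverses an edge outside the gadget pulls back to a single move in $\oriG$, and a single move reversing one of $uz$, $zv$ either leaves the gadget's ``effective direction'' unchanged relative to its neighbors (a no-op after applying $\cani$) or completes a pair of such moves whose combined effect is the reversal of $uv$ — and the latter is feasible in $\oriG$ precisely because, as noted, it requires in-weight $\ge 4$ at the losing endpoint, a condition that held in $G$ along the two-step segment. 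Collapsing consecutive pairs and deleting no-ops yields a reconfiguration sequence in $\oriG$ from $\cani(A_\ini) = \oriA_\ini$ to $\cani(A_\tar) = \oriA_\tar$.

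Combining the two directions gives $\oriA_\ini \reachable \oriA_\tar$ in $\oriG$ iff $A_\ini \reachable A_\tar$ in $G$, i.e.\ the single-subdivision equivalence. Applying this once for each non-loop blue edge — the order is irrelevant since subdivisions of distinct edges commute — proves the lemma. The main obstacle, and the only part requiring genuine care rather than bookkeeping, is the converse direction: one must argue that an arbitrary reconfiguration sequence in $G$ can be \emph{rewritten} into one that treats each blue gadget atomically (never leaving $z$'s gadget in an intermediate ``half-reversed'' state for longer than necessary), so that it descends to a valid sequence in $\oriG$. The key enabling fact is that $z$'s degree-$2$, all-blue structure forces the in-weight bookkeeping at $z$ to be rigid, so the only way the gadget can change is the controlled two-step reversal analyzed above, whose legality is governed by exactly the same in-weight-$\ge 4$ condition that governs reversing $uv$ in $\oriG$.
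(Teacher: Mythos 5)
Your forward direction is sound and matches the paper's: simulate a red-edge (or unaffected-edge) reversal directly, and simulate the reversal of a subdivided blue edge by two moves through the intermediate orientation in which both gadget arcs enter $z$ (in-weight $4$ at $z$), the legality being inherited from the feasibility of the next orientation in the sequence over $\oriG$. The genuine gap is in the converse. Your claim that \emph{every} feasible orientation of $G$ has the two blue arcs at $z$ oriented consistently (one in, one out) is false: the orientation with both arcs entering $z$ is feasible ($z$ has in-weight $4$, and $u$, $v$ may receive their in-weight from other edges) -- indeed it is exactly the intermediate state your own forward simulation passes through. Consequently your map is not a bijection onto the feasible orientations of $G$, and an arbitrary reconfiguration sequence in $G$ may contain orientations that are not of the form $\mapf(\oriA)$ for any feasible $\oriA$ of $\oriG$, so the step ``every orientation in the sequence equals $\mapf$ of some feasible orientation of $\oriG$'' does not hold. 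Your fallback -- rewrite the sequence so that each gadget is handled atomically -- is only asserted, not proved, and it is more delicate than you suggest: the two half-moves of a gadget need not be consecutive, a gadget can be half-reversed and then flipped back without any effective reversal, and reordering moves in a reconfiguration sequence is nontrivial because feasibility of each move depends on the current global state.

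The paper avoids the rewriting entirely by defining a projection on \emph{all} feasible orientations of $G$, including the half-reversed ones: for the gadget subdividing $xy$, project to the arc $(x,y)$ if $(z,y)\in A_i$ and to $(y,x)$ otherwise (so a half-reversed gadget is simply assigned one of the two directions). One then checks two easy facts: the projection of a feasible orientation of $G$ is feasible, because at every original vertex the projected in-weight is at least the in-weight in $A_i$; and a single move in $G$ changes the projection in at most one arc, since the projected direction of $xy$ depends only on the orientation of the single edge $zy$. Hence projecting the given sequence in $G$ and deleting repetitions yields a reconfiguration sequence from $\oriA_\ini$ to $\oriA_\tar$ in $\oriG$. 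To repair your proof you should replace the bijection claim and the atomic-rewriting sketch by such an everywhere-defined projection (your $\cani$ extended to half-reversed states) together with these two verifications; with that change, and noting that your one-edge-at-a-time iteration is equivalent to the paper's simultaneous subdivision, the argument goes through.
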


\subsection{FPT-reduction to \BCSR}
\label{sec:reduction_BCSR}
	In this subsection, we sketch our FPT-reduction to \BCSR.  We start by
	formally defining the problem \BCSR.  Let $H=(\Var,\Ecns)$ be an
	undirected graph.  We call each vertex $x \in \Var$ a \emph{variable}.
	Each $x \in \Var$ has a finite set $\Dom{x}$, called a \emph{domain of}
	$x$.  A variable $x$ is called a \emph{Boolean variable} if $|\Dom{x}|
	\le 2$, and  otherwise called a \emph{non-Boolean variable}.  Each edge
	$xy \in \Ecns$ has a subset $\Cns{xy} \subseteq \Dom{x}\times \Dom{y}$,
	called a \emph{\textup{(}binary\textup{)} constraint of} $xy$.  A
	mapping $\mapf \colon \Var \to \bigcup_{x \in \Var} \Dom{x}$ is a
	\emph{solution} of $H$ if $\mapf(x) \in \Dom{x}$ for every $x \in
	\Var$.  In addition, a solution $\mapf$ of $H$ is \emph{proper} if
	$\mapf(x)\mapf(y) \in \Cns{xy}$ for every $xy \in \Ecns$.  For two
	solutions $\mapf$ and $\mapf'$, we write $\mapf \onestep \mapf'$ if
	$|\{x \in \Var \midcolon \mapf(x) \ne \mapf'(x)\}|=1$.	Given an
	undirected graph $H$, a domain $\Dom{x}$ for each $x \in \Var$, a
	constraint $\Cns{xy}$ for each $xy \in \Ecns$, and two proper solutions
	$\mapf_{\ini}$ and $\mapf_{\tar}$ of $H$, the \textsc{binary constraint
	satisfiability reconfiguration} problem (\BCSR) asks whether there
	exists a sequence $\reconfseq{\mapf_0,\mapf_1,\ldots,\mapf_\ell}$ of
	proper solutions of $H$ such that $\mapf_0=\mapf_\ini$,
	$\mapf_\ell=\mapf_\tar$, and $\mapf_{i-1} \onestep \mapf_i$ for each $i
	\in \{1,2,\ldots, \ell\}$.  Let 
	$(H,\Dommap,\Cnsmap,\mapf_\ini,\mapf_\tar)$ an instance of \BCSR. 

	It is known that \BCSR can be solved in time $O^*(\doms^{O(\NB)})$,
	where $\doms := \max_{x \in X} |\Dom{x}|$ and $\NB$ is the number of
	non-Boolean variables in $X$~\cite[Theorem~18]{HIZ18}.  To prove
	Theorem~\ref{the:FPT_OR}, given an instance $(\oriG, \oriA_\ini,
	\oriA_\tar)$ of \ctoc on an \andorsc constraint graph with at most $k$
	\andorsc vertices, we first perform the preprocessing from
	Section~\ref{sec:modification} to obtain an instance $(G, A_\ini,
	A_\tar)$ of \ctoc. Note that $G$ is not an \andorsc graph, and $V$ can
	be partitioned into $\Vand(G)$, $\Vor(G)$ and $\Vmid(G)$, where
	$\Vmid(G)$ (or simply $\Vmid$) is the set of middle vertices in $G$.
	By Lemma~\ref{clm:cons2}, we have that $(G, A_\ini, A_\tar)$ is a
	\yes-instance if and only if $(\oriG, \oriA_{\ini}, \oriA_{\tar})$. 
	Hence, to conclude the proof of Theorem~\ref{the:FPT_OR}, we provide an
	FPT-reduction from a preprocessed instance $(G, A_{\ini}, A_{\tar})$ of
	\ctoc with the parameter $|\Vor(G)| = |\Vor(\oriG)| \le k$ to an
	instance $(H,\Dommap,\Cnsmap,\mapf_\ini,\mapf_\tar)$ of \BCSR such that
	both $\doms$ and $\NB$ are bounded by some computable functions
	depending only on $k$.  

	Due to the preprocessing,  observe that the constraint
	graph $G$ has no two parallel blue edges.  In addition, no
	edge in $G$ joins an \andsc vertex and an \orsc vertex,
	and hence we can partition $E$ into two sets $\Eand$ and $\Eor$,
	defined as follows: $\Eand$ is the set of edges of $G$ that are
	incident to an \andsc vertex; $\Eor$ is the set of edges
	of $G$ that are incident to an \orsc vertex. The high-level idea
	of the reduction to \BCSR{} is the following. For each \orsc vertex
	$v$, we create an \orsc variable $x_v$. Observe that the in-weight
	requirement at $v$ is violated only if each arc is pointing away from
	$v$. We forbid such orientations by giving each \orsc variable $x_v$ a
	domain of size seven corresponding to the seven legal orientations of
	the incident edges of $v$. 
	
	The remaining in-weight requirements and consistency requirements are
	modelled by adding constraints, which also define the set of edges in
	$H$. For each edge $e$ of $G$, we create a Boolean edge-variable $x_e$,
	whose domain represents the two possible orientations of an edge.  The
	construction of domains above ensures that in-weight requirement is
	satisfied for each \andsc vertex. To ensure the same property for all
	other vertices, we add three types of constraints for middle vertices
	and \andsc vertices, to enforce the following constraints:
	\begin{description}
		\item[Type 1:] Constraints for middle vertices.
		
		Let $v$ be a middle vertex between two vertices $v_1$ and $v_2$.
		Since both $v_1 v$ and $v v_2$ are blue edges, the in-weight requirement at $v$ is satisfied
		if and only if $v_1v$ or $vv_2$ points to $v$. 

		\item[Type 2-1:] Constraints for \andsc vertices having loops.
		
            Let $v$ be an \andsc vertex having a loop $vv$. So $vv$ must be red and the remaining edge $vv_3 \in \Eand$ is blue where $v_3$ is a middle vertex. 
            Then, the in-weight requirement at $v$  is satisfied if and only if $vv_3$ is oriented towards $v$.
            
		\item[Type 2-2:] Constraints for \andsc vertices without loops.

	    Let $v$ be an \andsc vertex, and let $vv_1$, $vv_2$, $vv_3$ be
	    three (distinct) edges incident to $v$ such that $vv_1$ and $vv_2$
	    are red, and $vv_3$ is blue; it may hold that $v_1 =v_2$.  Then,
	    the in-degree requirement at $v$ is satisfied if and only if i)
	    $vv_1$ or $vv_3$ are oriented towards $v$ and ii) $vv_2$ or $vv_3$
	    are oriented towards $v$. 
	\end{description}

	By the construction of constraints above, we know that a solution
	$\mapf$ of $H$ is proper if and only if the corresponding orientation
	$A_{\mapf}$ of $E$ is feasible.  Therefore, we can define proper
	solutions $\mapf_{\ini}$ and $\mapf_{\tar}$ of $H$ which correspond to
	feasible orientations $A_{\ini}$ and $A_{\tar}$ of $E$, respectively.
	In this way, from a preprocessed instance $(G, A_{\ini}, A_{\tar})$ of
	\ctoc with the parameter $|\Vor(G)| \le k$, we have constructed in
	polynomial time a corresponding equivalent instance
	$(H,\Dommap,\Cnsmap,\mapf_\ini,\mapf_\tar)$ of \BCSR such that $\doms =
	\max_{x \in X} |\Dom{x}| = 7$ and $\NB \le |\Vor(G)| \le k$.

\section{\ncl parameterized by the number of red edges}
\label{sec:red}
	Our main result in this section is a linear kernel for
	\ctoc parameterized by the number of red edges of the constraint graph.

	\begin{theorem} \label{thm:ncl-red-edges-bound-kernel}
		There is a polynomial-time algorithm that, given an instance
	of \ctoc on a constraint graph with $k$ red edges, outputs an equivalent instance of \ctoc of size $O(k)$.
	\end{theorem}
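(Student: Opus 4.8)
The plan is to establish the kernel via three reduction rules whose exhaustive application shrinks the instance to size $O(k)$. I rely throughout on the fact that reachability decomposes over the connected components of the constraint graph $G$: $A_\ini \reachable A_\tar$ holds if and only if $A_\ini|_C \reachable A_\tar|_C$ holds for every connected component $C$ of $G$, since reversals in different components do not interfere and can be interleaved. In particular I may treat components separately and, as a first clean-up step, for every component $C$ that contains no red edge I decide $A_\ini|_C \reachable A_\tar|_C$ — a task that can be solved in polynomial time on an all-blue constraint graph — and either output a trivial \no-instance (if the answer is \no) or delete $C$ (if it is \yes). After this step every remaining component contains a red edge, so there are at most $k$ components.

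I then apply three rules, roughly as described in the introduction (each must be stated so as to treat parallel edges and loops correctly). \emph{Rule 1}: if a component $C$ contains at least two blue cycles and is larger than its prospective replacement, then — after checking and exploiting that the ``high-slack'' blue part of $C$ does not constrain the red edges incident to it — replace that part of $C$ by one constant-size gadget per incident red edge, each gadget leaving the orientation of its red edge unconstrained. \emph{Rule 2}: delete any degree-one vertex whose unique incident edge is blue; such a blue edge must point towards this vertex in every feasible orientation, can never be reversed, and contributes nothing elsewhere, so deleting it changes neither feasibility nor reachability. \emph{Rule 3}: if a vertex $z$ is incident to exactly two blue edges, going to distinct neighbours $u$ and $v$, delete $z$ and add the blue edge $uv$; this is precisely the inverse of a blue subdivision, so equivalence follows from the argument behind Lemma~\ref{clm:cons2}. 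Each rule strictly decreases $|V(G)| + |E(G)|$ and can be applied in polynomial time, hence the rules can be applied exhaustively in polynomial time, the initial and target orientations being updated along the way in the obvious manner.

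For the size bound, consider the reduced instance. There are at most $k$ components and at most $k$ red edges in total, hence at most $2k$ vertices incident to a red edge. Fix a component $C$ and look at its blue subgraph $B_C$: since Rule 1 no longer applies, $B_C$ contains at most one cycle. Each tree component $T$ of $B_C$ has the property that every leaf of $T$ is incident to a red edge (otherwise Rule 2 would apply) and every degree-two vertex of $T$ is incident to a red edge (otherwise Rule 3 would apply); since in a tree the number of branch vertices is less than the number of leaves, $|V(T)|$, and therefore $|E(T)|$, is $O(a_T)$, where $a_T$ is the number of vertices of $T$ incident to a red edge. The at most one unicyclic component of $B_C$ is treated identically after deleting its cycle, the surviving cycle being itself $O(1)$ plus a term linear in its red-incident vertices. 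Summing over all pieces of all components, the number of blue vertices and edges is $O(k)$, and together with the $k$ red edges the reduced instance has size $O(k)$, as required. The statement for \ctoe is obtained by the same three rules with minor modifications, recorded in Corollary~\ref{cor:red-edges-kernel:c2e}.

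The crux — and the step I expect to be the main obstacle — is the soundness of Rule 1: one must prove that a connected blue graph with at least two independent cycles behaves, with respect to any red edges attached to it, exactly like the disjoint union of the replacement gadgets. This requires showing that such a blue graph can route in-weight arbitrarily: any orientation of the incident red edges extends to a feasible orientation of the blue part, and from any feasible orientation of the whole graph one can, by reversals confined to the blue part, both realise any other feasible orientation of the blue part and create enough in-weight surplus at the endpoint of any chosen incident red edge to allow that red edge to be reversed. The ``two cycles'' hypothesis is exactly what makes this possible; with at most one blue cycle (or none) the corresponding blue structure is too rigid for the replacement to be sound, which is why those cases are instead handled by Rules 2 and 3 (and by the clean-up step for red-free components). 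Getting the reachability analysis of orientations with variable in-degree lower bounds right, and transferring it to the interaction with the boundary red edges, is the technical heart of the proof.
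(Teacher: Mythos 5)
Your overall plan — the same gadget replacement for blue components with two cycles, deletion of blue degree-one vertices, contraction of blue degree-two vertices, plus a counting argument — is the paper's plan, and your size analysis (leaves and degree-two vertices of the surviving blue tree pieces must be red-incident, at most one cycle per blue piece, $O(1)$ overhead per gadget) is essentially sound, up to the harmless slip that a component of $G$ may contain several unicyclic blue pieces rather than ``at most one''. The problem is that you have not proved the one step that carries the whole theorem: the safety of the two-cycles replacement rule. You explicitly defer it (``the crux\dots the technical heart of the proof''), but this is precisely where almost all of the paper's work in Proposition~\ref{prop:reduction} goes. The paper proves it by constructing canonical orientations: starting from $A_{\ini}$ one reaches, by reversals only inside the blue component $C$, an orientation $A^\circ_{\rm ini}$ in which the two chosen cycles $K_1,K_2$ (taken so that $E(K_1)\cup E(K_2)$ is minimal) are directed cycles and every vertex of $C$ has an incoming blue arc of $C$ (Claim~\ref{myclaim:alpha1-alpha2}); then, distinguishing whether $K_1$ and $K_2$ are disjoint or intersect in a path, one reaches from $A_{\tar}$ an orientation $A^\circ_{\rm tar}$ that agrees with $A^\circ_{\rm ini}$ on all of $E(C)$ (Claim~\ref{myclaim:beta1-beta2}); finally the equivalence with the gadget instance follows from two simulations, one skipping all moves on $E(C)$, the other skipping all moves on gadget edges, both legitimate because every red vertex of $C$ permanently receives in-weight two (from $C$ in the canonical orientation, from its gadget in the reduced instance). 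Without an argument of this kind your Rule~1 is only a plausible claim, and in fact the property you propose to prove is stated too strongly: it is not true in general that blue-only reversals can ``create enough in-weight surplus at the endpoint of any chosen incident red edge to allow that red edge to be reversed'' (a red vertex attached to $C$ by a single blue edge can never get more than in-weight two from $C$); what is actually needed, and what the gadget provides, is a constant supply of in-weight two at each attached red vertex together with the agreement-on-$E(C)$ canonicalization above.

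Two smaller gaps in the same spirit: your clean-up step asserts without proof that \ctoc is polynomial-time solvable on all-blue components; this is true (tree pieces admit no feasible orientation, unicyclic red-free components are frozen, and components with two cycles are fully reconfigurable), but the last case again requires the canonicalization argument you are postponing, so it cannot be taken for granted. Also, your contraction rule drops the paper's non-adjacency condition on the two neighbours; the subdivision argument behind Lemma~\ref{clm:cons2} does extend to adjacent neighbours, but you should say so explicitly (and keep the exclusion of the case $u=v$, as you do), since the paper's stated rule and its exception are what make its counting of surviving degree-two blue vertices go through.
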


	In particular, Theorem~\ref{thm:ncl-red-edges-bound-kernel} implies
	that \ctoc parameterized by the number of red edges admits a
	$O^*(2^{O(k)})$-time algorithm. By observing that in any \andorsc
	constraint graph, the number of red edges is equal to the number of
	\andsc vertices, we immediately obtain the following result.
\begin{corollary} \label{cor:ncl-and-or:red-edges-kernel}
	\ctoc on \andorsc graphs parameterized by the number $k'$ of \andsc vertices admits a kernel of size $O(k')$.
\end{corollary}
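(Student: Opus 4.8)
The plan is to deduce Corollary~\ref{cor:ncl-and-or:red-edges-kernel} directly from Theorem~\ref{thm:ncl-red-edges-bound-kernel} by observing that, on \andorsc graphs, the parameter ``number of \andsc vertices'' coincides exactly with the parameter ``number of red edges.'' First I would record the underlying counting identity: for every \andorsc constraint graph $G$ we have $|\Ered(G)| = |\Vand(G)|$. To see this, recall that $G$ is $3$-regular, that every \orsc vertex is incident only to blue edges (hence is incident to no red edge), and that every \andsc vertex is incident to exactly one blue edge and two red edges, as depicted in Figure~\ref{fig:andor}. In particular every red edge of $G$ has both endpoints in $\Vand(G)$. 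Counting red edge-endpoints in two ways via the handshake identity,
\[
2\,|\Ered(G)| \;=\; \sum_{v \in V(G)} \deg_{\mathrm{red}}(v) \;=\; \sum_{v \in \Vand(G)} \deg_{\mathrm{red}}(v) \;=\; 2\,|\Vand(G)|,
\]
where $\deg_{\mathrm{red}}(v)$ denotes the number of red edge-endpoints at $v$; the first equality is the handshake lemma, the second uses that \orsc vertices contribute nothing, and the third uses that each \andsc vertex contributes exactly $2$. Hence $|\Ered(G)| = |\Vand(G)|$.

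With the identity in hand, the corollary is immediate. Given an instance $(G, A_{\ini}, A_{\tar})$ of \ctoc in which $G$ is an \andorsc graph with $k'$ \andsc vertices, the graph $G$ has exactly $k'$ red edges, so $(G, A_{\ini}, A_{\tar})$ is in particular an instance of \ctoc on a constraint graph with $k'$ red edges. Applying the polynomial-time kernelization algorithm of Theorem~\ref{thm:ncl-red-edges-bound-kernel} then yields, in polynomial time, an equivalent instance of \ctoc of size $O(k')$. As the running time is polynomial and the output size is bounded by a linear function of the parameter $k'$, this is exactly a kernel of size $O(k')$ for \ctoc parameterized by the number of \andsc vertices of an \andorsc graph.

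The one point I would address explicitly is that the instance produced by Theorem~\ref{thm:ncl-red-edges-bound-kernel} need not itself be an \andorsc graph, since the reduction rules behind that theorem (in particular the one replacing a component containing at least two blue cycles by a constant-size gadget per attached red edge) do not preserve the \andorsc structure. If one insists that the kernel be an \andorsc instance, one simply composes the above with the standard gadget construction of Hearn and Demaine~\cite{HD:05} transforming an arbitrary constraint graph into an equivalent \andorsc graph with only a constant-factor increase in the number of edges, hence in the number of red edges and therefore in the number of \andsc vertices; this produces an \andorsc instance of size $O(k')$ with $O(k')$ \andsc vertices. There is no real obstacle here: the entire content of the corollary is the parameter-matching identity $|\Ered(G)| = |\Vand(G)|$, and everything else is inherited from Theorem~\ref{thm:ncl-red-edges-bound-kernel}.
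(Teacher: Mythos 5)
Your proposal is correct and follows essentially the same route as the paper: the paper derives the corollary from Theorem~\ref{thm:ncl-red-edges-bound-kernel} precisely via the observation that in an \andorsc graph the number of red edges equals the number of \andsc vertices, which is exactly your handshake identity. Your closing remark about the kernel output not being an \andorsc instance (and fixing this, if desired, by re-encoding into an \andorsc graph) is a point the paper glosses over, but it does not change the argument.
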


\begin{figure}
    \centering
    \begin{tikzpicture}[vertex/.style={shape=circle,thick,draw,node distance=3em}]
        \node[vertex,label=right:$v_1$, fill=RoyalBlue] (v1) {};
        \node[vertex,label=above right:$v_0$,above right of=v1, fill= RoyalBlue] (v0) {};
        \node[vertex,label=below:$r$,right of=v0, fill=Red] (r) {};
        \node[vertex,label=left:$v_2$,left of=v1, fill=RoyalBlue] (v2) {};
        \node[vertex,label=right:$v_4$,above left of=v0, fill=RoyalBlue] (v4) {};
        \node[vertex,label=left:$v_3$, left of=v4, fill=RoyalBlue] (v3) {};
        \draw[->,line width=1mm,color=RoyalBlue] (v0) -- (r);
        \draw[->,line width=1mm,color=RoyalBlue] (v0) -- (v1);
        \draw[->,line width=1mm,color=RoyalBlue] (v1) -- (v2);
        \draw[<-,line width=1mm,color=RoyalBlue] (v1) -- (v3);
        \draw[<-,line width=1mm,color=RoyalBlue] (v0) -- (v4);        
        \draw[<-,line width=1mm,color=RoyalBlue] (v2) -- (v4);        
        \draw[->,line width=1mm,color=RoyalBlue] (v3) -- (v4);       
        \draw[->,line width=1mm,color=RoyalBlue] (v2) -- (v3);
    \end{tikzpicture}
    \caption{The gadget used in reduction rule \ref{red-rule:twocycles}.\label{fig:felix}}
\end{figure}
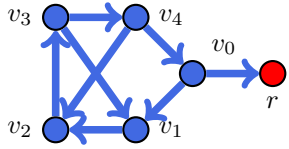

	It can be shown by similar arguments that there is also a linear kernel
	for \ctoe parameterized by the number of red edges of the constraint
	graph. In the remainder of this section, we prove
	Theorem~\ref{thm:ncl-red-edges-bound-kernel}.  Let $I = (G, A_{\ini},
	A_{\tar})$ be an instance of \ctoc, where $G$ is any constraint graph
	with $k$ red edges.  
	We give four reduction rules, and show that applying
	them repeatedly preserves the answer.  Furthermore, we show that
	applying them exhaustively yields an instance of size $O(k)$, where $k
	= |\Ered|$. To conclude the proof, we show that the reduction can be
	applied in polynomial time.

	We say that a vertex is \emph{blue} if all its incident edges are blue.
	Otherwise, if at least one incident edge is red, we call the
	vertex \emph{red}.  A subset $V' \subseteq V$ is called a \emph{blue
	component} if it is a connected component in the graph $(V, E^{ \rm
	blue})$. Note that a blue component may contain red vertices of $G$. 
	The first reduction rule removes blue components of $G$ that are directed cycles. Observe that no arc in such a component can be reversed. The second reduction rule removes blue components that
	contain at least two cycles and attaches to each red vertex $v$ of the
	component a copy of the gadget shown in Figure~\ref{fig:felix}.  
	The
	gadget consists of a cycle on five new vertices $\{v_0, v_1, v_2, v_3,
	v_4 \}$ with two chords $\{v_1, v_3 \}$ and $\{v_2, v_4 \}$.
	Additionally we add an edge joining $v$ and $v_0$. All edges of the
	gadget have weight two. The third reduction rule removes blue vertices
	of degree one and the last rule removes the center vertex of a blue
	path on three vertices.
	
	While modifying $G$ we also modify $A_{\rm ini}$
	and $A_{\rm tar}$ accordingly.  That is, if we delete edges of $G$,
	these edges are also deleted in $A_{\rm ini}$ and $A_{\rm tar}$.  If we
	add a gadget to $G$, then the arcs in $A_{\rm ini}$ and $A_{\rm tar}$ have
	the same orientation on the  gadget.
 Note that the number $k$ of red vertices is
not altered by an application of any of the rules.
Here is a more formal description of the four rules:

\begin{reduction}
	\label{red-rule:onecycle}
	Let $C$ be a component of $G$ that is a blue chordless cycle.  If the
	orientations $A_{\rm ini}$ and $A_{\rm tar}$ agree on $C$, then we
	remove $C$ from the graph and adjust $A_{\rm ini}$ and $A_{\rm tar}$
	accordingly. Otherwise we output a \NO-instance.	
\end{reduction}

\begin{reduction}
  \label{red-rule:twocycles}
  Let $C$ be a blue component that contains at least two 
  cycles. Then we remove from $G$ every blue vertex in $C$ and attach to each red
  vertex in $C$ a copy of the gadget in \figurename~\ref{fig:felix}. 
  Additionally we modify $A_{\rm ini}$ and $A_{\rm tar}$ accordingly such that both agree on each copy of the gadget.
\end{reduction}

\begin{reduction}
  \label{red-rule:degree1}
  If $G$ has a blue vertex $v$ of degree one, delete $v$ and its incident edge from $G$ and remove the corresponding arc(s) from $A_\ini$ and $A_\tar$.
\end{reduction}

\begin{reduction}
  \label{red-rule:2-path}
  Suppose $G$ has a blue vertex $v$ of degree 2, such that the two neighbors
  $u$ and $w$ of $v$ are non-adjacent in $G$. Then delete $v$ and its incident
  edges from $G$ and add the blue edge $uw$. Remove any arcs incident to $v$ from
  $A_{\ini}$ and $A_{\tar}$. Finally, add $(u, w)$ to $A_{\ini}$ (resp.
  $A_\tar$) if $(u, v) \in A_\ini$ (resp., $A_\tar$) and $(w, u)$ otherwise.
\end{reduction}

We show that applying any of the four rules is \emph{safe}, that is any
application results in a $\yes$-instance if and only if $I$ is a $\yes$-instance.

\begin{proposition}
  \label{prop:reduction}
  Reduction rules~\ref{red-rule:onecycle}--\ref{red-rule:2-path} are safe for \ctoc.
\end{proposition}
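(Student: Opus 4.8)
The plan is to check the four rules one at a time; Rules~\ref{red-rule:onecycle}, \ref{red-rule:degree1}, and~\ref{red-rule:2-path} are short, and Rule~\ref{red-rule:twocycles} is the real work.

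\emph{The three easy rules.} For Rule~\ref{red-rule:onecycle}, I would first observe that if $C$ is a whole component of $G$ that is a blue chordless cycle, then each of its vertices has exactly two incident edges, both blue, so feasibility forces in-degree exactly one everywhere, and reversing any single arc drops the in-degree of its head to zero. Hence $C$ is completely frozen and, being a separate component, does not interact with the rest of $G$: deleting $C$ yields an equivalent instance when $A_\ini$ and $A_\tar$ agree on $C$, and the instance is a \no-instance otherwise. For Rule~\ref{red-rule:degree1}, a degree-one blue vertex $v$ forces its unique blue edge $uv$ to point towards $v$ in every feasible orientation, and that arc contributes $0$ to the in-weight of $u$; deleting $v$ and $uv$ therefore gives a bijection between feasible orientations of $G$ and of $G-v$ that commutes with single reversals, so reconfiguration sequences correspond one-to-one. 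For Rule~\ref{red-rule:2-path}, I would note that it is precisely the inverse of the blue-edge subdivision analysed right before Lemma~\ref{clm:cons2}: if $v$ is a blue vertex of degree two whose neighbours $u,w$ are distinct (guaranteed by non-adjacency), then $G$ arises from $G':=(G-v)+uw$ (with $uw$ blue) by subdividing the blue edge $uw$, with $v$ the middle vertex, and the orientation translation prescribed by the rule coincides with the one used in Section~\ref{sec:modification}; since a single blue-edge subdivision yields an equivalent instance, so does its inverse. Non-adjacency is used only to ensure we un-subdivide a non-loop.

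\emph{Rule~\ref{red-rule:twocycles}: isolating the interface behaviour.} This is the main obstacle. Write $C$ for the blue component (with at least two independent cycles), $v_1,\dots,v_t$ for its red vertices, and $M^{\star}$ for the replacement module (the vertices $v_1,\dots,v_t$ with the red edges at and among them unchanged, plus one copy of the gadget of Figure~\ref{fig:felix} attached to each $v_i$ by a blue edge). The plan is to identify the structural property that both $C$ and $M^{\star}$ share. Call an orientation of the blue edges of a blue component \emph{self-sufficient} if every vertex of the component has blue in-degree at least one. Observe that a red vertex with $r$ incident red arcs pointing in and $b$ incident blue arcs pointing in is feasible iff $r\ge 2$ or $b\ge 1$, and a short case distinction shows that $b\ge 1$ always suffices to reverse any single incident red edge of that vertex while keeping it feasible. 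The property I want for both $C$ and $M^{\star}$ is: \textup{(i)} for every orientation of the incident red edges, the blue edges admit a feasible self-sufficient orientation; and \textup{(ii)} any two feasible orientations of the blue edges are connected by feasibility-preserving single reversals of blue edges inside the component --- in particular, from any feasible orientation one can reach a self-sufficient one, and all self-sufficient orientations are mutually reachable. The place where ``at least two cycles'' is used is exactly \textup{(ii)}: the existence of a minimum-in-degree-one orientation is the classical fact that a connected non-forest has one, while the connectivity statement follows by repeatedly pushing in-degree along a directed blue path from a vertex one wishes to relieve to a vertex of blue in-degree at least two, and such a vertex is always available because the component has more blue edges than vertices --- for a single blue cycle both \textup{(ii)} and the usefulness of the rule fail. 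The same two properties for the gadget are a routine finite verification: it has three cycles and a single red vertex $v$ whose only blue edge can always be oriented into $v$.

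\emph{Rule~\ref{red-rule:twocycles}: the exchange argument.} Finally I would cash in \textup{(i)}--\textup{(ii)}. Declare a feasible orientation $A$ of the $C$-graph \emph{compatible} with a feasible orientation $A^{\star}$ of the $M^{\star}$-graph if they agree on all edges outside the two modules and on all red edges incident to $v_1,\dots,v_t$; by \textup{(i)}, each feasible $A$ has a compatible $A^{\star}$ and conversely, and the orientations chosen by the rule for the initial and target instances form compatible pairs (on each gadget they take a fixed self-sufficient orientation). To turn a reconfiguration sequence of the $C$-graph into one of the $M^{\star}$-graph, I would mirror every reversal of an edge outside $C$ verbatim; mirror every reversal of a red edge at some $v_i$ after first using \textup{(ii)} to drive $M^{\star}$ into a self-sufficient internal orientation --- this is invisible to the rest of the graph and, by the case distinction above, makes the mirrored reversal feasible at $v_i$, the other endpoint being handled by compatibility, or by self-sufficiency if it is another $v_j$; and match every reversal of a blue edge internal to $C$ by doing nothing on the $M^{\star}$-side, which preserves compatibility. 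A last application of \textup{(ii)} on $M^{\star}$ reaches the prescribed target orientation. The reverse translation is symmetric, so $(G,A_\ini,A_\tar)$ and the reduced instance are equivalent, completing the proof of Proposition~\ref{prop:reduction}.
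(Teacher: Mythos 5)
Your treatment of Rules~\ref{red-rule:onecycle}, \ref{red-rule:degree1} and~\ref{red-rule:2-path} is fine (for Rule~\ref{red-rule:2-path} you invoke the blue-subdivision equivalence instead of the paper's direct argument, which is acceptable), and your simulation framework for Rule~\ref{red-rule:twocycles} --- keep the two worlds compatible on shared edges, mirror outside and interface moves, ignore internal blue moves, and re-synchronize the module when an interface red edge must flip --- is a legitimate alternative to the paper's strategy of first normalizing $A_\ini$ and $A_\tar$ so that they agree on $E(C)$ and then freezing the component/gadget. The problem is that your whole argument for Rule~\ref{red-rule:twocycles} rests on property (ii), and that is exactly the nontrivial heart of the matter, which you dispatch in one sentence. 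Your justification --- ``push in-degree along a directed blue path from the vertex one wishes to relieve to a vertex of blue in-degree at least two, and such a vertex always exists because the component has more blue edges than vertices'' --- is not sound: the global existence of a surplus vertex does not give a directed path from the vertex you want to relieve to such a vertex. For instance, if the component has a pendant subtree oriented away from its cyclic core, the set of vertices reachable from a vertex $y$ in that subtree can consist entirely of vertices of blue in-degree one, so no surplus can be routed to $y$ by your mechanism; whether (ii) nevertheless holds there (because such arcs are frozen in every feasible orientation, or because a red-exempt vertex provides slack) requires a separate argument you do not give. Moreover, your converse simulation needs the full strength of (ii): at the end you must drive the blue edges of $C$ from a self-sufficient orientation to the \emph{arbitrary} feasible orientation $A_\tar|_{E(C)}$, so (ii) cannot be weakened to connectivity among self-sufficient orientations only.

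For comparison, the paper does not prove (ii) in general; it proves precisely the two instances it needs, and this is where the real work sits. Its Claim~1 constructs, by feasible moves inside $C$ only, an orientation $A^\circ_{\rm ini}$ in which the two independent cycles $K_1,K_2$ are directed cycles and all remaining edges of the component point away from $K_1$, so every vertex of $C$ has an internal incoming blue arc; its Claim~2 then shows that $A_{\rm tar}$ can be driven to agree with $A^\circ_{\rm ini}$ on $E(C)$, with an explicit case distinction on whether $K_1$ and $K_2$ intersect and, in the disjoint case, using the connecting path to shuttle surplus into $K_2$, fix it, reverse the path, and then fix $K_1$ --- exactly the kind of routing your one-line sketch glosses over. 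After this normalization no internal move is ever needed again, so the transfer of sequences between $G$ and the gadget graph reduces to skipping moves on $E(C)$ (resp.\ on gadget edges), the gadget of \figurename~\ref{fig:felix} being kept frozen in a feasible orientation that always feeds in-weight two to each attached red vertex. To repair your proposal you would either have to prove (ii) honestly (handling pendant trees, bridges, and red-exempt vertices) or, more economically, replace (ii) by the two concrete normalization statements above, at which point your argument essentially collapses into the paper's.
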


By applying a depth-first-search we can check if any of the rules can be applied. 
Thus we have the following.

\begin{proposition}
  \label{prop:running-time:rules} Reduction rules 
  \ref{red-rule:onecycle}-\ref{red-rule:2-path} can be applied exhaustively 
  in time $O(|V| \cdot (|V| + |E|))$.
\end{proposition}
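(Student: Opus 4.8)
The plan is to show that each of the four reduction rules can be tested for applicability, and applied, in time $O(|V|+|E|)$, and that at most $O(|V|)$ rounds of rule applications suffice, giving the claimed $O(|V|\cdot(|V|+|E|))$ bound. First I would precompute, by a single depth-first search on the blue subgraph $(V,\Eblue)$, the blue components together with, for each blue component, its number of vertices, its number of blue edges, and the list of red vertices it contains. From these numbers the cyclomatic number of each blue component is immediate (it is $|\Eblue(C)|-|V(C)|+1$), so we can recognise in this one pass whether a component is a blue chordless cycle (cyclomatic number $1$ and every vertex of degree exactly $2$ inside the component, which we check against the global degree since blue vertices have no red edges — and a red vertex in such a component would make it not a blue \emph{cycle} component in the required sense, a point to be careful about), or contains at least two cycles (cyclomatic number at least $2$). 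Testing whether $A_{\ini}$ and $A_{\tar}$ agree on such a cycle, and whether a vertex is blue of degree one (Reduction~\ref{red-rule:degree1}) or blue of degree two with non-adjacent neighbours (Reduction~\ref{red-rule:2-path}), are all local checks done while scanning the adjacency lists.

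Next I would argue the bound on the number of applications. The key observation, already flagged in the text, is that the number $k$ of red vertices is invariant under all four rules, and that every rule strictly decreases a suitable potential. Reductions~\ref{red-rule:onecycle} and~\ref{red-rule:degree1} and~\ref{red-rule:2-path} each delete at least one vertex, so they can be applied at most $|V|$ times in total; Reduction~\ref{red-rule:twocycles} replaces a blue component containing $b\ge 2$ cycles, and with $r$ red vertices, by $5r$ brand-new vertices and $r$ edges joining them to the old red vertices, while deleting all the blue (non-red) vertices of the component — since a blue component with two cycles has at least two blue edges beyond a spanning tree it has at least $\min$ so it contains at least, say, two vertices of blue-degree $\ge 2$ that are not red endpoints, and in particular at least one purely blue vertex to delete; moreover each copy of the gadget is a blue component with cyclomatic number $2$ and \emph{no} red vertex other than the attached $v$, hence the rule never reapplies to a gadget it created. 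So each application of Reduction~\ref{red-rule:twocycles} strictly decreases the number of blue components with cyclomatic number $\ge 2$, bounding its applications by $|V|$ as well. Altogether the total number of rule applications is $O(|V|)$.

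Finally I would combine the two ingredients: in each round we scan the current graph once in time $O(|V|+|E|)$ to find an applicable rule (or certify that none applies), perform the constant-or-linear-time modification to $G$, $A_{\ini}$, $A_{\tar}$, and repeat. Since the graph only shrinks except for the bounded-size gadget additions of Reduction~\ref{red-rule:twocycles}, which contribute at most $O(k)\le O(|V|)$ extra vertices and edges overall, the size stays $O(|V|+|E|)$ throughout, so each round is $O(|V|+|E|)$ and there are $O(|V|)$ rounds, yielding the claimed running time.

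The main obstacle I expect is the accounting for Reduction~\ref{red-rule:twocycles}: one must verify carefully that attaching the five-vertex gadget to each red vertex of a two-cycle blue component genuinely creates only blue components that are \emph{not} themselves subject to Reduction~\ref{red-rule:twocycles} (so that the rule cannot cascade indefinitely on its own output), and that a two-cycle blue component always contains at least one deletable purely-blue vertex so the component count really drops; handling the degenerate cases (small components, components consisting almost entirely of red vertices, parallel blue edges before preprocessing) is where the routine-looking argument needs the most care.
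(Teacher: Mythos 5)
Your proposal matches the paper's argument: the paper likewise runs a DFS on the blue subgraph each round to detect applicable rules in $O(|V|+|E|)$ time and bounds the number of rounds by $O(|V|)$, so the approach is essentially the same. If anything you are more careful than the paper about the one delicate point, namely that Reduction rule~\ref{red-rule:twocycles} adds gadget vertices (the paper simply asserts that each iteration deletes a vertex), and your replacement potential --- counting blue components with at least two cycles while exempting the freshly created gadget components, on which the rule only reproduces an isomorphic instance --- correctly patches that accounting.
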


Theorem~\ref{thm:ncl-red-edges-bound-kernel} now follows by the previous 
propositions and a simple counting argument.
Using similar arguments we show that there is also a linear kernel for \ctoe parameterized by the number $k$ of red edges. The main difference is that in 
reduction rule~\ref{red-rule:twocycles} we only add the gadget to each red vertex that is
part of a cycle or connected to two distinct cycles by two disjoint paths.
Furthermore, if the edge $e$ that we wish to reverse is part of of a component
containing two cycles, we add a gadget to the tail of $e$.

\begin{corollary}
	\label{cor:red-edges-kernel:c2e}
	\ctoe parameterized by the number $k$ of red edges admits a kernel of
	size $O(k)$. Furthermore, \ctoe on \andorsc graphs parameterized by the
	number $k'$ of  \andsc vertices admits a kernel of size $O(k')$.
\end{corollary}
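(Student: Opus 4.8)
The plan is to adapt the linear-kernel construction for \ctoc to \ctoe by making three modifications and then reusing the machinery of Propositions~\ref{prop:reduction}, \ref{prop:running-time:rules} and the counting argument behind Theorem~\ref{thm:ncl-red-edges-bound-kernel}. First I would fix an instance $(G, A_\ini, e)$ of \ctoe with $e = vw$ and $k = |\Ered|$ red edges. The three rules \ref{red-rule:onecycle}, \ref{red-rule:degree1}, \ref{red-rule:2-path} are essentially unchanged, except that whenever the vertex being deleted is an endpoint of $e$ we must be careful: rule~\ref{red-rule:degree1} can only delete a blue degree-one vertex that is not an endpoint of $e$ (and a blue degree-one vertex incident to $e$ forces $e$ to be orientable in only one way, so if $e$ is such an edge we can immediately decide the instance), and in rule~\ref{red-rule:2-path} if $v$ is an endpoint of $e$ we simply do not apply the rule there; since $e$ is a single edge this excludes only $O(1)$ applications and does not affect the kernel bound. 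The genuinely new point is rule~\ref{red-rule:twocycles}: as stated in the excerpt, in the \ctoe version we attach the gadget of Figure~\ref{fig:felix} only to each red vertex of the blue component that lies on a cycle or is joined to two distinct cycles by two vertex-disjoint paths, and, crucially, if $e$ itself lies inside a blue component with at least two cycles, we additionally attach a gadget to the tail of $e$ in $A_\ini$.

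The key steps, in order, would be: (1) \textbf{Reducibility of the new rule.} Show that in a blue component $C$ with at least two cycles, every blue edge can be individually reversed, and more strongly that the set of feasible orientations of $C$ that agree with a prescribed orientation on the red-vertex ``ports'' is connected; this is exactly the property exploited in Proposition~\ref{prop:reduction} for \ctoc, and the gadget of Figure~\ref{fig:felix} was designed to simulate such a port. For \ctoe the extra requirement is that the gadget, when attached to the tail of $e$, still allows $e$ to be reversed; this follows because a blue component with two cycles always carries enough surplus in-weight to ``push'' two units across any given edge, so the orientation of $e$ can be flipped while keeping all vertices feasible and without touching any red edge outside $C$. (2) \textbf{Safety.} Combine step (1) with Proposition~\ref{prop:reduction}: reversing $e$ is possible in the original instance iff it is possible after contracting $C$ to gadgets, because any reconfiguration sequence can be rerouted to use only moves inside $C$ for the ``internal'' part and the gadget faithfully reproduces the reachable port-orientations. (3) \textbf{Running time.} Note that identifying blue components and testing the two-cycle condition and the ``on a cycle / joined to two cycles'' condition for each red vertex is again doable by depth-first search, so Proposition~\ref{prop:running-time:rules} carries over verbatim, giving $O(|V|\cdot(|V|+|E|))$ time. (4) \textbf{Size bound.} Run the same counting argument: after exhaustive application, every blue component either is a single blue edge/loop incident to red vertices, or has been replaced by constant-size gadgets, one per relevant red vertex; since there are at most $2k$ red vertices and the number of red edges is $k$, the reduced instance has $O(k)$ vertices and edges. (5) \textbf{\andorsc corollary.} In an \andorsc graph the number of red edges equals the number of \andsc vertices $k'$, and the kernelization preserves enough structure (or can be followed by a trivial re-expansion of gadgets into \andorsc form), yielding a kernel of size $O(k')$ for \ctoe on \andorsc graphs.

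The main obstacle I expect is step~(1)--(2) for the tail-of-$e$ gadget: one must argue that attaching a gadget to the tail of $e$ does not change whether $e$ is reversible, i.e., that the gadget supplies exactly the in-weight flexibility that the original two-cycle blue component did, \emph{including} the ability to absorb or emit the two units of in-weight that a reversal of $e$ shifts between its endpoints. Concretely, one has to verify that in the gadget of Figure~\ref{fig:felix} the port vertex $v_0$ can be made to both send and receive two units of in-weight along the edge $vv_0$ while all five gadget vertices stay feasible, and that these two states are mutually reachable; this is the crux of why the gadget was chosen to be a 5-cycle with two chords. Once this local lemma is in place, the rest is a routine recombination of the already-established \ctoc arguments, and I would present it by stating the local reversibility lemma, noting it is the only place the \ctoe proof differs, and then invoking Propositions~\ref{prop:reduction} and~\ref{prop:running-time:rules} together with the counting argument of Theorem~\ref{thm:ncl-red-edges-bound-kernel} mutatis mutandis.
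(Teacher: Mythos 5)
There is a genuine gap, and it sits exactly where you located the crux. Your ``local reversibility lemma'' --- that in a blue component $C$ with at least two cycles every blue edge can be individually reversed because $C$ ``carries enough surplus in-weight to push two units across any given edge'' --- is false. The surplus created by the two cycles can only be propagated \emph{away} from the cycles along the pendant part of $C$. To reverse an arc $(v,w)$ of $C$ that points away from the cycles, the head $w$ must obtain in-weight at least $2$ from its own red edges or from blue edges leading further away from the cycles, and whether this is possible depends on the red edges attached to the vertices downstream of $w$. Concretely, attach to a theta-subgraph of $C$ an edge $e^*=u_0u_1$ where $u_0$ lies on the theta-subgraph and $u_1$ is a blue vertex of degree three whose other two neighbours $u_2,u_3$ are red vertices, each incident to exactly one red edge. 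Then $u_2$ and $u_3$ need their blue edge oriented towards them in \emph{every} feasible orientation, so $u_1$ can never receive in-weight except through $e^*$, and $e^*$ is never reversible --- even though $C$ contains two cycles; if instead $u_2$ had two red edges, $e^*$ might well be reversible. Hence reversibility of $e^*$ is not a local property of a gadget at the tail $v$; it is determined by the subgraph of $C$ reachable from the head $w$ together with its red attachments, and your modified rule (which attaches gadgets only at the cycles and at the tail, and says nothing about preserving the blue structure below $w$) discards exactly this information. The ``crux'' you propose to verify about the gadget of Figure~\ref{fig:felix} sending and receiving two units at $v_0$ is therefore aimed at the wrong endpoint of $e^*$.

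The paper's proof of Corollary~\ref{cor:red-edges-kernel:c2e} instead splits on whether $e^*$ is red (then the original rules apply unchanged) or blue, and in the latter case performs a case analysis on the position of $e^*$ in $C$ relative to the normalised orientation $A^\circ_{\rm ini}$ constructed in the proof of Proposition~\ref{prop:reduction}: if $e^*$ lies on one of the two cycles or on a path joining them, or if the desired final direction of $e^*$ points away from the cycles, the answer is \YES outright; otherwise, writing $e^*=(v,w)$, the modified rule \emph{retains} $e^*$ and the set $L$ of vertices of $C$ reachable from $w$ in $A^\circ_{\rm ini}$, attaches gadgets to $v$ and to the red vertices of $C$ outside $L$, and deletes only the remainder of $C$. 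Since $L$ is a forest, Rules~\ref{red-rule:degree1} and~\ref{red-rule:2-path} and the counting argument of Theorem~\ref{thm:ncl-red-edges-bound-kernel} still yield an $O(k)$ bound. Your steps (3)--(5) and your treatment of Rules~\ref{red-rule:degree1} and~\ref{red-rule:2-path} near the endpoints of $e^*$ are essentially in line with the paper, but steps (1)--(2) must be replaced by this case analysis and the retention of $L$; as written, your kernel would return \NO on instances where $e^*$ is reversible only via the descendants of $w$.
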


\section{\ncl parameterized by the number of blue edges}
\label{sec:blue}

The objective of this section is to show that \ctoc parameterized by the number $k$ of
blue edges is fixed parameter tractable.

\begin{theorem}\label{thm:blue01}
	\ctoc parameterized by the number $k$ of blue edges can be solved in time
	$2^{O(k)} \cdot {\rm poly}(|V|)$.
\end{theorem}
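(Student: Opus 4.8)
The strategy is to reduce \ctoc to a reconfiguration problem on a bounded-size abstraction of configurations, where the only ``hard'' degrees of freedom are the $2^{O(k)}$ joint orientations of the $k$ blue edges together with the indegree-sequence of the red edges at the endpoints of blue edges. I would first define the class map $\phi$ sketched in the introduction: two feasible orientations $A, A'$ of $G$ are equivalent if their blue arcs coincide and, for every vertex $v$, $\indegree{\Ared}(v) = \indegree{\Ared'}(v)$ restricted to vertices incident to a blue edge (equivalently, the ``red in-weight budget'' that each blue endpoint receives is the same). Let $\mathcal F$ be the set of these classes; since there are $2^k$ blue orientations and the relevant red indegrees at the $O(k)$ blue-incident vertices are bounded by the degree, naively $|\mathcal F| = 2^{O(k)}\cdot n^{O(k)}$, so a first lemma must show that in fact only $2^{O(k)}$ of these classes are \emph{nonempty} — the red indegree at a blue endpoint is constrained to a bounded range once we fix whether that vertex needs red in-weight (namely $0$, $1$, or $2$ suffices to track, because feasibility only cares about reaching total in-weight $2$). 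This is the combinatorial heart of the reduction: characterizing which $(\text{blue orientation},\ \text{red-budget vector})$ pairs are realizable by \emph{some} feasible orientation, which amounts to a flow/orientation feasibility condition on the red subgraph (a red edge subset can be oriented to meet prescribed lower bounds on indegrees exactly when a Hall-type / Hakimi-type condition holds, checkable in polynomial time).

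Next I would carry out the construction in three stages matching the section outline. In Section~\ref{sec:classA} I define $\phi$ and prove the realizability characterization above, giving $|\mathcal F| = 2^{O(k)}$ and a $2^{O(k)}\cdot\mathrm{poly}(|V|)$-time algorithm to enumerate $\mathcal F$. In Section~\ref{sec:reconfF} I define adjacency on $\mathcal F$: two classes are adjacent if they can be ``bridged'' by a single edge reversal, i.e. $F \onestep F'$ iff there exist representatives $A \in F$, $A' \in F'$ with $A \onestep A'$. The key lemma here — call it the \emph{consistency lemma} — is that $A_{\ini} \reachable A_{\tar}$ in $G$ implies $\phi(A_{\ini})$ reaches $\phi(A_{\tar})$ in $\mathcal F$ (immediate, projecting the reconfiguration sequence), and conversely, if $\phi(A_{\ini})$ reaches $\phi(A_{\tar})$ in $\mathcal F$, then we can reduce to the special case where $A_{\ini}$ and $A_{\tar}$ already lie in the same class: all blue arcs agree, and the symmetric difference of the red arcs forms an Eulerian subdigraph (equal indegrees at the relevant vertices force, after also canceling red arcs at purely-red vertices, that the red difference decomposes into arc-disjoint dicycles). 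This last claim needs a small argument: at a purely red vertex the red indegree is unconstrained by the class, so I must show that reconfiguring within a class we may additionally equalize those too, or handle them directly when reversing dicycles.

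Then in Section~\ref{sec:algo}, having reduced to ``same class, red difference = arc-disjoint dicycles,'' and in Section~\ref{sec:blueprobA} the final step: show that each such red dicycle can be reversed independently (reversing a dicycle preserves all indegrees, hence feasibility is maintained along a suitable order of single-edge reversals around the cycle — this is the classic fact that a directed cycle in a feasible orientation can be flipped edge-by-edge provided we flip in cyclic order so indegree never drops below the requirement; since every vertex on the cycle keeps the same net red indegree, and we only ever temporarily change one vertex by $\pm 1$, feasibility is kept as long as that vertex had slack, which a short case analysis on red/blue incidence and the order of flips guarantees). Running the $\mathcal F$-reconfiguration search by BFS over $2^{O(k)}$ states, each with $\mathrm{poly}(|V|)$ work to compute neighbors via the realizability check, gives the claimed $2^{O(k)}\cdot\mathrm{poly}(|V|)$ bound.

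\textbf{Main obstacle.}
The delicate point is the converse direction of the consistency lemma: lifting a reconfiguration sequence in $\mathcal F$ back to one in $G$. Adjacency in $\mathcal F$ only promises \emph{some} representatives differ by one flip, not that we can thread a single global sequence through; so I expect the real work is a ``normalization'' argument showing that within any class the set of feasible orientations is connected (or at least that we can move freely enough to line up with the next class's bridging representative), and separately that arc-disjoint red dicycles can be reversed in \emph{some} order even when they share vertices — the shared-vertex case is where naive independent reversal can violate feasibility mid-way and needs care about flip order and about the unconstrained red indegree at purely-red vertices.
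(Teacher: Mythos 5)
Your high-level architecture (classify orientations by blue orientation plus capped red in-degree budgets at blue-incident vertices, check realizability by a flow argument, do a BFS over the $2^{O(k)}$ classes, then reduce to the case of equal classes and a red difference consisting of arc-disjoint dicycles) matches the paper. But the final step contains a genuine error, not just an unfinished detail: you claim that each red dicycle in $A^{\rm red}_{\rm ini}\setminus A^{\rm red}_{\rm tar}$ can always be reversed by flipping its arcs in a suitable cyclic order because ``feasibility is kept as long as that vertex had slack, which a short case analysis \ldots guarantees.'' No such guarantee exists. Flipping around a dicycle necessarily passes through intermediate orientations in which one cycle vertex has its in-weight reduced by one, and there are instances in which \emph{no} reachable orientation ever has slack at \emph{any} cycle vertex. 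For example, take $k=0$ and an all-red $K_5$: every feasible orientation has in-degree exactly $2$ at every vertex, so no single arc can ever be reversed; two orientations differing on a dicycle lie in the same (unique) class, have identical in-degree sequences, yet the answer is \no. Your algorithm, as described, would answer \yes on such instances. The paper's resolution is Lemma~\ref{lem:blue06}: a dicycle $C$ is reversible if and only if one can \emph{reach} (by moves outside $C$) some configuration with in-weight at least $3$ at some vertex of $C$; deciding this is the auxiliary problem \ProblemA, which itself requires a second classification $\mathcal{F}_u$ with budgets in $\{0,1,2,3\}^{X\cup\{u\}}$ (Section~\ref{sec:blueprobA}), and a negative answer to \ProblemA is precisely what licenses outputting \no at this stage. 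This entire mechanism is absent from your proposal, and it is where the hardness of the problem actually sits.

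Two secondary points, both of which you flag as obstacles but which the paper settles with one lemma you do not have. First, your adjacency on $\mathcal{F}$ is representative-based (``there exist $A\in F$, $A'\in F'$ with $A\onestep A'$''), which creates the lifting problem you describe; note that within-class connectivity is false (the $K_5$ example again), so that hoped-for normalization cannot work as stated. The paper instead defines adjacency combinatorially (equal $d$ and blue orientations one flip apart, or equal blue orientations with arbitrary $d$) and proves the lifting statement (Lemma~\ref{lem:blue05}) via Lemma~\ref{lem:blue02}: between any two red orientations one can flip edges one at a time so as to reach the \emph{target in-degree sequence} while never letting any vertex drop below the pointwise minimum of the initial and target in-degrees. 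Second, the same Lemma~\ref{lem:blue02}, applied with the full in-degree sequences over all of $V$ (not just over $X$), is what equalizes the red in-degrees at purely red vertices and makes the difference decompose into arc-disjoint dicycles — the step you leave open. Lemma~\ref{lem:blue02} does not resolve the dicycle-reversal question, however, so even with it your final stage still needs the Lemma~\ref{lem:blue06}/\ProblemA machinery.
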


In the remainder of this section, we prove Theorem~\ref{thm:blue01}. 
Let $I = (G, A_{\ini}, A_{\tar})$ be an instance of \ctoc, where $G$ is any constraint graph
and denote by $V$ and $E$ the set of vertices and edges of $G$, respectively.

Let $\mathcal{A}$ denote the set of all feasible orientations of $E$. 
Our key idea is to classify the feasible orientations into $2^{O(k)}$ classes, 
where each class is determined by the orientation $A^{\rm blue}$ of $E^{\rm blue}$ and 
the indegree sequence of $A^{\rm red}$.
Denote the set of these classes by $\mathcal{F}$, and 
define a mapping $\phi$ from $\mathcal{A}$ to $\mathcal{F}$ (see Section~\ref{sec:classA} for details). 
Then, in Section~\ref{sec:reconfF}, we define a reconfiguration relation $\underset{\mathcal{F}}{\leftrightsquigarrow}$ between elements in $\mathcal{F}$, 
which is consistent with $\leftrightsquigarrow$ in some sense.   
In our algorithm, instead of the original reconfiguration problem in $\mathcal{A}$, 
we first solve the reconfiguration problem in $\mathcal{F}$, which can be done in $2^{O(k)} \cdot {\rm poly}(|V|)$ time.  
If it has no reconfiguration sequence, then
we can conclude that there is no reconfiguration sequence in the original problem.  
Otherwise, we can reduce the original problem to the case when  $A^{\rm blue}_{\rm ini} = A^{\rm blue}_{\rm tar}$ and 
$A^{\rm red}_{\rm ini} \setminus A^{\rm red}_{\rm tar}$ consists of arc-disjoint dicycles (see Section~\ref{sec:algo}). 
Finally, in Section~\ref{sec:blueprobA}, we test whether the direction of  each dicycle can be reversed or not. 

Before starting the main part of the proof of Theorem~\ref{thm:blue01}, 
we show the following lemma that plays an important role in our argument. 
Roughly, it says that we can change the orientation of $E^{\rm red}$ keeping a certain indegree constraint. 

\begin{lemma}
\label{lem:blue02}
Let $A^{\rm red}_{\rm ini}$ and $A^{\rm red}_{\rm tar}$ be orientations of $E^{\rm red}$. 
Then, there exists a sequence $A^{\rm red}_0, A^{\rm red}_1, \dots , A^{\rm red}_l$ of orientations of $E^{\rm red}$
such that $A^{\rm red}_0 = A^{\rm red}_{\rm ini}$, $\rho_{A^{\rm red}_l} = \rho_{A^{\rm red}_{\rm tar}}$, $A^{\rm red}_{i-1} \leftrightarrow A^{\rm red}_i$ for $i=1, \dots , l$, and 
$\rho_{A^{\rm red}_i}(v) \ge \min \{ \rho_{A^{\rm red}_{\rm ini}}(v), \rho_{A^{\rm red}_{\rm tar}}(v)\}$ for any $v \in V$ and any $i \in \{0, 1, \dots , l\}$. 
\end{lemma}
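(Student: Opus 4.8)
We want to transform one orientation $A^{\rm red}_{\rm ini}$ of $E^{\rm red}$ into another, $A^{\rm red}_{\rm tar}$, by single-arc reversals, so that at the end the indegree vectors agree, and—crucially—during the whole process the indegree at each vertex never drops below the pointwise minimum of the initial and target indegrees. The plan is to look at the set of red edges on which $A^{\rm red}_{\rm ini}$ and $A^{\rm red}_{\rm tar}$ disagree, call this arc set $D$ (taking, say, the orientation inherited from $A^{\rm red}_{\rm ini}$). The key structural observation is that, since $A^{\rm red}_{\rm ini}$ and $A^{\rm red}_{\rm tar}$ are two orientations of the *same* underlying edge set, the "flow difference" $\rho_{A^{\rm red}_{\rm tar}} - \rho_{A^{\rm red}_{\rm ini}}$ is a circulation: at every vertex $v$, the number of arcs of $D$ entering $v$ equals the number leaving $v$ is *not* quite right, but the net change $\rho_{A^{\rm red}_{\rm tar}}(v) - \rho_{A^{\rm red}_{\rm ini}}(v)$ equals (outdegree of $v$ in $D$) minus (indegree of $v$ in $D$). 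Hence $D$, viewed as a digraph, has the property that reversing all of $D$ realizes exactly the indegree change we want.

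**The Eulerian/decomposition step.** I would decompose $D$ into a collection of simple dipaths and dicycles as follows: repeatedly, while $D$ is nonempty, if $D$ has a vertex of positive outdegree minus indegree (a "source" of the circulation-defect), start a maximal trail there following arcs of $D$ until we get stuck at a sink, extract that dipath, and remove it; once no such vertices remain, $D$ is Eulerian in each component and decomposes into dicycles, which we extract one by one. (Equivalently: orient things so that $D \cup (-(\text{reverse of arcs already correct}))$... but the cleaner route is the direct path/cycle decomposition of $D$.) This gives $D = P_1 \uplus \dots \uplus P_s \uplus C_1 \uplus \dots \uplus C_t$, a partition into arc-disjoint dipaths and dicycles. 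Now the reconfiguration sequence simply reverses these pieces one at a time, and within each piece reverses the arcs in path/cycle order. The point of reversing a *single* dipath or dicycle at a time, in order, is that at every intermediate step the partially-reversed piece still contributes a consistent amount of in-weight: reversing the first $j$ arcs of a dipath $P_i = (x_0,x_1,\dots,x_m)$ changes the indegree only at $x_0$ (decreases by $1$ once, at the very first reversal) and at $x_m$ (increases by $1$ at the very last). Intermediate vertices see one arc flip from "incoming" to "outgoing" and simultaneously, one step later, the next arc flip the other way—so after each completed sub-step the internal vertices are back to their original indegree, and during the half-completed moment a vertex temporarily has its indegree changed by at most... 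I need to be careful here, and this is the crux.

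**The main obstacle: staying above the pointwise minimum.** The delicate point is that reversing arcs one at a time along a path does momentarily perturb internal vertices, and I must guarantee the indegree never goes below $\min\{\rho_{A^{\rm red}_{\rm ini}}(v),\rho_{A^{\rm red}_{\rm tar}}(v)\}$. The fix is to reverse each dipath $P_i$ in the *correct direction*: traverse $P_i$ from its tail $x_0$ (where indegree will decrease) to its head $x_m$ (where indegree will increase), reversing arc $(x_{j-1},x_j)$ at step $j$. Then immediately before step $j$, vertex $x_j$ has its "old" incoming arc $(x_{j-1},x_j)$ still present but has already lost its old incoming status on... no: let me reconsider and reverse in the *opposite* order, from head to tail. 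Reversing $(x_{m-1},x_m)$ first makes $x_{m-1}$ gain in-weight and $x_m$ lose it—bad, $x_m$ is supposed to *gain*. So the right order is tail-to-head after all: reverse $(x_0,x_1)$ first—$x_0$ loses in-weight (allowed, since its target indegree is lower, so $\min = $ target $= $ already reached value), $x_1$ gains in-weight temporarily. Then reverse $(x_1,x_2)$: now $x_1$ loses the in-weight it just gained (back to baseline, fine) and $x_2$ gains. Inductively, each internal $x_j$ only ever sits at its baseline value or one above it, and the endpoints move monotonically toward their targets. For dicycles there is no defect, every vertex returns to baseline after the full cycle is reversed, and during the reversal each vertex is at baseline or one above—so the minimum-indegree bound is trivially satisfied (it is $\ge$ the common value $\rho_{A^{\rm red}_{\rm ini}}=\rho_{A^{\rm red}_{\rm tar}}$ on a cycle component... not exactly, but $\ge \rho_{A^{\rm red}_{\rm ini}}(v) \ge \min\{\dots\}$). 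I would write this as a clean lemma: reversing a single arc-disjoint dipath from tail to head, or a single dicycle in cyclic order, keeps every indegree $\ge$ its value at the start of that phase; then chain the phases and note that after phase handling $P_i$ or $C_j$, the indegrees of all already-processed endpoints are at least their target values (hence $\ge \min$), and the not-yet-processed arcs haven't moved, so the invariant $\rho_{A^{\rm red}_i}(v)\ge\min\{\rho_{A^{\rm red}_{\rm ini}}(v),\rho_{A^{\rm red}_{\rm tar}}(v)\}$ persists throughout. Concatenating all phases yields the desired sequence $A^{\rm red}_0,\dots,A^{\rm red}_l$. The main work is in bookkeeping this invariant precisely across phase boundaries; the decomposition itself is standard.
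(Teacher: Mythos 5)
Your decomposition-based approach is genuinely different from the paper's, which is a short induction on $|A^{\rm red}_{\rm ini} \setminus A^{\rm red}_{\rm tar}|$: pick a vertex $u$ with $\rho_{A^{\rm red}_{\rm ini}}(u) > \rho_{A^{\rm red}_{\rm tar}}(u)$, reverse a single disagreeing arc entering $u$, and recurse; the only vertex whose indegree ever drops is one that currently has strict surplus, so the invariant is immediate. Your route can be made to work, but as written it fails at exactly the point you identify as the crux. First, the dipath order: for a dipath $P=(x_0,\dots,x_m)$ of disagreement arcs oriented as in $A^{\rm red}_{\rm ini}$, reversing $(x_{j-1},x_j)$ makes $x_{j-1}$ \emph{gain} an incoming arc and $x_j$ \emph{lose} one, so reversing all of $P$ increases the indegree of the tail $x_0$ and decreases that of the head $x_m$ --- the opposite of what you assert in several places (``reverse $(x_0,x_1)$ first---$x_0$ loses in-weight'' is backwards). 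With your chosen tail-to-head order, after reversing $(x_{j-1},x_j)$ and before reversing $(x_j,x_{j+1})$, the internal vertex $x_j$ sits at baseline \emph{minus} one, which violates the invariant whenever $\min\{\rho_{A^{\rm red}_{\rm ini}}(x_j),\rho_{A^{\rm red}_{\rm tar}}(x_j)\}=\rho_{A^{\rm red}_{\rm ini}}(x_j)$. The correct order is head-to-tail, so that each internal vertex first gains and then loses, and only the head (which genuinely must lose indegree) ever drops.

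Second, and more seriously, the dicycles. The claim that during the reversal of a dicycle ``each vertex is at baseline or one above'' is false: whichever arc of a dicycle $C$ you reverse first, its head loses its unique incoming $C$-arc and gains nothing, so after every proper nonempty prefix of the reversal some vertex is at baseline minus one. In general a dicycle in $E^{\rm red}$ \emph{cannot} be reversed without some vertex's red indegree dropping below its initial value --- this is precisely why the paper needs Lemma~\ref{lem:blue06} and \ProblemA later in Section~\ref{sec:blueprobA}. The repair is that you should not touch the dicycles at all: the lemma only asks for $\rho_{A^{\rm red}_l}=\rho_{A^{\rm red}_{\rm tar}}$, not $A^{\rm red}_l=A^{\rm red}_{\rm tar}$, and the dicycles of the decomposition contribute nothing to the indegree difference. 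With these two corrections (reverse only the dipaths, each from head to tail, plus the phase-boundary bookkeeping you defer) your argument goes through, but it is considerably heavier than the paper's one-arc-at-a-time induction.
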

\begin{proof}
We prove the lemma by induction on 
$|A^{\rm red}_{\rm ini} \setminus A^{\rm red}_{\rm tar}|$. 
If $\rho_{A^{\rm red}_{\rm ini}} = \rho_{A^{\rm red}_{\rm tar}}$, then the claim is obvious, because 
the sequence consisting of only one orientation $A^{\rm red}_0 = A^{\rm red}_{\rm ini}$ satisfies the conditions. 
Thus, it suffices to consider the case when $\rho_{A^{\rm red}_{\rm ini}} \not= \rho_{A^{\rm red}_{\rm tar}}$. 
In this case, there exists a vertex $u \in V$ such that $\rho_{A^{\rm red}_{\rm ini}} (u) > \rho_{A^{\rm red}_{\rm tar}}(u)$, 
because 
$\sum_{v \in V} \rho_{A^{\rm red}_{\rm ini}} (v) = \sum_{v \in V}  \rho_{A^{\rm red}_{\rm tar}}(v)$. 
Then, there exists an arc $a \in A^{\rm red}_{\rm ini} \setminus A^{\rm red}_{\rm tar}$ that enters $u$. 
Let $A^{\rm red}_1$ be the orientation of $E^{\rm red}$ obtained from $A^{\rm red}_{\rm ini}$ by reversing the direction of $a$. 
Since $|A^{\rm red}_1 \setminus A^{\rm red}_{\rm tar}|  < |A^{\rm red}_{\rm ini} \setminus A^{\rm red}_{\rm tar}|$, by induction hypothesis, 
there exists a sequence $A^{\rm red}_1, \dots , A^{\rm red}_l$ of orientations of $E^{\rm red}$
such that $\rho_{A^{\rm red}_l} = \rho_{A^{\rm red}_{\rm tar}}$, $A^{\rm red}_{i-1} \leftrightarrow A^{\rm red}_i$ for $i=2, \dots , l$, and 
$\rho_{A^{\rm red}_i}(v) \ge \min \{ \rho_{A^{\rm red}_1}(v), \rho_{A^{\rm red}_{\rm tar}}(v)\}$ for any $v \in V$ and any $i \in \{1, \dots , l\}$. 
By letting $A^{\rm red}_0 = A^{\rm red}_{\rm ini}$, the sequence $A^{\rm red}_0, A^{\rm red}_1, \dots , A^{\rm red}_l$ satisfies the conditions, 
because $A^{\rm red}_0 \leftrightarrow A^{\rm red}_1$, 
$\rho_{A^{\rm red}_1}(v) \ge \rho_{A^{\rm red}_{\rm ini}}(v)$ for each $v \in V \setminus \{u\}$, and 
$\min \{ \rho_{A^{\rm red}_1}(u), \rho_{A^{\rm red}_{\rm tar}}(u)\} = \rho_{A^{\rm red}_{\rm tar}}(u) = \min \{ \rho_{A^{\rm red}_{\rm ini}}(u), \rho_{A^{\rm red}_{\rm tar}}(u)\}$. 
\end{proof}

The proof of Lemma~\ref{lem:blue02} is constructive, and hence 
we can find such a sequence efficiently.

\subsection{Classification of $\mathcal A$}
\label{sec:classA}

In this subsection, we classify the feasible orientations into $2^{O(k)}$ classes. 
Let $X \subseteq V$ be the set of all vertices to which 
edges in $E^{\rm blue}$ are incident.  
Define $\mathcal F$ as the set of all pairs $(A^{\rm blue}, d)$ 
where $A^{\rm blue}$ is an orientation of $E^{\rm blue}$ and $d$ is a vector in $\{0, 1, 2\}^X$
satisfying the following conditions:  
\begin{enumerate}
\item[(1)]
$2 \rho_{A^{\rm blue}} (v) + d(v) \ge 2$ for any $v \in X$. 
\item[(2)]
There exists an orientation $A^{\rm red}$ of $E^{\rm red}$ such that for any $v \in V$,  
\[
\rho_{A^{\rm red}} (v)
\begin{cases}
= 0 & \mbox{if $v \in X$ and $d(v)=0$,} \\
= 1 & \mbox{if $v \in X$ and $d(v)=1$, and} \\
\ge 2 & \mbox{otherwise.}
\end{cases}
\]
\end{enumerate}
We note that $|\mathcal {F}| \le 2^{|E^{\rm blue}|} \cdot 3^{|X|} = 2^{O(k)}$, because $|X| \le 2 |E^{\rm blue}|$. 
For a vector $d \in \{0, 1, 2\}^X$, 
we say that an orientation $A^{\rm red}$ of $E^{\rm red}$ {\em realizes} $d$ if 
$A^{\rm red}$ satisfies the condition (2) above. 
We can easily see that 
if $(A^{\rm blue}, d) \in \mathcal F$ holds and $A^{\rm red}$ realizes $d$, then 
$A := A^{\rm blue} \cup A^{\rm red}$ is a feasible orientation of $E$. 
Conversely, if 
$A = A^{\rm blue} \cup A^{\rm red}$ is a feasible orientation of $E$ (i.e., $A \in \mathcal{A}$), then 
the vector $d \in \{0, 1, 2\}^X$ defined by $d(v) = \min \{\rho_{A^{\rm red}}(v) , 2\}$ for each $v \in X$ 
satisfies that 
$(A^{\rm blue}, d) \in \mathcal F$. 
This defines a mapping $\phi$ from $\mathcal A$ to $\mathcal F$. 

We can also see that the membership problem of $\mathcal{F}$ can be decided in polynomial time. 

\begin{lemma}
\label{lem:blue04}
For an orientation $A^{\rm blue}$ of $E^{\rm blue}$ and a vector $d \in \{0, 1, 2\}^X$, 
we can test whether $(A^{\rm blue}, d) \in \mathcal F$ or not in polynomial time. 
\end{lemma}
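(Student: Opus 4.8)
The plan is to reduce the membership test for $\mathcal F$ to two subroutines: a trivial check of condition~(1), and a degree-constrained orientation (equivalently, feasible-flow) computation for condition~(2). Condition~(1) is the conjunction of the $|X|$ inequalities $2\rho_{A^{\rm blue}}(v)+d(v)\ge 2$ over $v\in X$; after computing the indegree vector $\rho_{A^{\rm blue}}$ in linear time, each inequality is checked in constant time. If one fails we report $(A^{\rm blue},d)\notin\mathcal F$; otherwise we proceed to condition~(2).

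The substance is condition~(2): deciding whether $E^{\rm red}$ admits an orientation $A^{\rm red}$ with $\rho_{A^{\rm red}}(v)$ equal to $0$, equal to $1$, or at least $2$, according to whether $v\in X$ with $d(v)=0$, $v\in X$ with $d(v)=1$, or otherwise. I would cast this as a degree-constrained orientation problem: assign to each $v\in V$ a lower bound $\ell(v)$ and an upper bound $u(v)$ on $\rho_{A^{\rm red}}(v)$, namely $\ell(v)=u(v)=0$ in the first case, $\ell(v)=u(v)=1$ in the second, and $\ell(v)=2$, $u(v)=$ (number of non-loop red edges incident to $v$) otherwise. Since a red loop at $v$ contributes exactly $1$ to $\rho_{A^{\rm red}}(v)$ for every orientation, as preprocessing I would delete all red loops and decrease $\ell(v)$ and $u(v)$ by the number of deleted loops at $v$ (truncating $\ell(v)$ at $0$, and reporting that condition~(2) fails if some $u(v)$ becomes negative). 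After this, condition~(2) holds if and only if the loopless red graph admits an orientation with $\ell(v)\le\rho_{A^{\rm red}}(v)\le u(v)$ for every $v$.

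To test the latter I would build the standard network: a source $s$, a sink $t$, a node $n_e$ for each red edge $e$, a node $m_v$ for each vertex $v$, an arc $s\to n_e$ of lower bound and capacity $1$, arcs $n_e\to m_x$ and $n_e\to m_y$ of capacity $1$ for $e=xy$, and an arc $m_v\to t$ of lower bound $\ell(v)$ and capacity $u(v)$. Integral feasible $s$-$t$ flows correspond bijectively to orientations of $E^{\rm red}$ obeying the bounds, with the flow on $m_v\to t$ recording $\rho_{A^{\rm red}}(v)$; feasibility of a flow with lower bounds is decidable in polynomial time via the textbook reduction to maximum flow. (Alternatively one may solve the same problem as a degree-constrained subgraph problem in the bipartite incidence graph of $E^{\rm red}$ and $V$.) Composing the two subroutines yields the claimed polynomial-time algorithm. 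I do not expect a genuine obstacle here; the only points requiring care are the bookkeeping for red loops and parallel red edges (parallel edges merely produce parallel arcs in the network, which is harmless) and stating the orientation-to-flow correspondence precisely enough that its equivalence with condition~(2) is immediate.
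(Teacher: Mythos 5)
Your proposal is correct and follows essentially the same route as the paper: check condition~(1) directly, and reduce condition~(2) to an integral feasible-flow problem with lower bounds on the network $s \to (\text{red-edge nodes}) \to (\text{vertex nodes}) \to t$, solved by a standard maximum-flow computation. Your explicit preprocessing of red loops is harmless but unnecessary, since in the paper's (and your) network a loop edge's node has a single outgoing arc and thus automatically contributes exactly one unit of in-degree to its endpoint.
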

\begin{proof}
We can easily check the condition (1). 
To check the condition (2), 
we construct a digraph $\hat G=(\hat V, \hat A)$ and consider a network flow problem in it. 
Introduce a new vertex $w_e$ for each $e \in E^{\rm red}$ and two new vertices $s$ and $t$, 
and define $\hat V := V \cup \{w_e \mid e \in E^{\rm red}\} \cup \{s, t\}$. 
Define the arc set $\hat A:= \hat A_1 \cup \hat A_2 \cup \hat A_3$ by
\begin{align*}
\hat A_1 &:= \{(s, w_e) \mid e \in E^{\rm red} \}, \\
\hat A_2 &:= \{(w_e, v) \mid e \in E^{\rm red}, v\in V, \mbox{$e$ is incident to $v$ in $G$} \}, \\
\hat A_3 &:= \{(v, t) \mid v\in V \}. 
\end{align*}
For each $a \in \hat A$, define the lower bound $l(a)$ and the upper bound $u(a)$ of 
the amount of flow through $a$ as follows. 
\begin{itemize}
\item
For each $(s, w_e) \in \hat A_1$, define $l(s, w_e) := u (s, w_e) := 1$. 
\item
For each $(w_e, v) \in \hat A_2$, define $l(w_e, v) := 0$ and $u (w_e, v) := 1$. 
\item
For each $(v, t) \in \hat A_3$, define $l(v, t) := u (v, t) := d(v)$ if $v \in X$ and $d(v) \in \{0, 1\}$, 
and define $l(v, t) := 2$ and $u (v, t) := + \infty$ otherwise. 
\end{itemize}
Then, the condition (2) holds if and only if $\hat G$ has an integral $s$-$t$ flow satisfying the above constraint.  
This can be tested in polynomial time 
by a standard maximum flow algorithm (see e.g. \cite[Corollary 11.3a]{Schrijver:03}). 
\end{proof}

\subsection{Reconfiguration in $\mathcal{F}$}
\label{sec:reconfF}

In this subsection, we consider a reconfiguration between elements in $\mathcal{F}$. 
For $(A^{\rm blue}_1, d_1), (A^{\rm blue}_2, d_2) \in \mathcal F$, 
we denote $(A^{\rm blue}_1, d_1) \xleftrightarrow[\mathcal F]{} (A^{\rm blue}_2, d_2)$
if 
\begin{itemize}
\item
$d_1 = d_2$ and $A^{\rm blue}_1 \leftrightarrow A^{\rm blue}_2$, or 
\item
$A^{\rm blue}_1 = A^{\rm blue}_2$. 
\end{itemize}
If there exists a sequence 
$(A^{\rm blue}_0, d_0), (A^{\rm blue}_1, d_1), \dots , (A^{\rm blue}_l, d_l) \in \mathcal F$ such that 
$(A^{\rm blue}_{i-1}, d_{i-1}) \xleftrightarrow[\mathcal F]{} (A^{\rm blue}_i, d_i)$ for $i=1, \dots , l$, then
we denote 
$(A^{\rm blue}_0, d_0) \underset{\mathcal{F}}{\leftrightsquigarrow} (A^{\rm blue}_l, d_l)$. 
Then, we can easily see the following. 
\begin{lemma}\label{lem:blue08}
	Let $A_{\rm ini}, A_{\rm tar} \in \mathcal{A}$.
	If $A_{\rm ini} \leftrightsquigarrow A_{\rm tar}$, then 
	$\phi(A_{\rm ini}) \underset{\mathcal{F}}{\leftrightsquigarrow} \phi(A_{\rm tar})$. 
\end{lemma}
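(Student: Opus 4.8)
The plan is to prove Lemma~\ref{lem:blue08} by taking a reconfiguration sequence $\langle A_0, A_1, \dots, A_\ell \rangle$ in $\mathcal{A}$ witnessing $A_{\rm ini} \leftrightsquigarrow A_{\rm tar}$ and showing that its image $\langle \phi(A_0), \phi(A_1), \dots, \phi(A_\ell) \rangle$ is (after possibly collapsing repeated consecutive entries) a valid reconfiguration sequence in $\mathcal{F}$. Each single step $A_{i-1} \leftrightarrow A_i$ reverses exactly one arc, which is either blue or red, and I will analyze these two cases separately to check that $\phi(A_{i-1}) \xleftrightarrow[\mathcal{F}]{} \phi(A_i)$.

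First I would set up notation: write $A_i = A_i^{\rm blue} \cup A_i^{\rm red}$, let $d_i \in \{0,1,2\}^X$ be the vector with $d_i(v) = \min\{\rho_{A_i^{\rm red}}(v), 2\}$, so that by definition $\phi(A_i) = (A_i^{\rm blue}, d_i)$. Case one: the reversed arc at step $i$ is red. Then $A_{i-1}^{\rm blue} = A_i^{\rm blue}$, and by the second bullet in the definition of $\xleftrightarrow[\mathcal{F}]{}$ we immediately get $\phi(A_{i-1}) \xleftrightarrow[\mathcal{F}]{} \phi(A_i)$; we do not even need $d_{i-1} = d_i$ here. (We do of course need both $\phi(A_{i-1})$ and $\phi(A_i)$ to actually lie in $\mathcal{F}$, which holds because $A_{i-1}$ and $A_i$ are feasible orientations, as already observed right after the definition of $\phi$.) Case two: the reversed arc at step $i$ is blue. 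Then $A_{i-1}^{\rm red} = A_i^{\rm red}$, so $\rho_{A_{i-1}^{\rm red}} = \rho_{A_i^{\rm red}}$ and hence $d_{i-1} = d_i$; moreover reversing a single blue arc means exactly $A_{i-1}^{\rm blue} \leftrightarrow A_i^{\rm blue}$, so the first bullet of the definition applies and again $\phi(A_{i-1}) \xleftrightarrow[\mathcal{F}]{} \phi(A_i)$.

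Finally I would note that $\xleftrightarrow[\mathcal{F}]{}$ allows equality of endpoints (indeed a self-loop in the relation), so even the degenerate sub-case $A_{i-1} = A_i$ in the original sequence poses no problem, and concatenating the relations $\phi(A_0) \xleftrightarrow[\mathcal{F}]{} \phi(A_1) \xleftrightarrow[\mathcal{F}]{} \cdots \xleftrightarrow[\mathcal{F}]{} \phi(A_\ell)$ yields $\phi(A_{\rm ini}) = \phi(A_0) \underset{\mathcal{F}}{\leftrightsquigarrow} \phi(A_\ell) = \phi(A_{\rm tar})$ by the definition of $\underset{\mathcal{F}}{\leftrightsquigarrow}$. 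There is no real obstacle here — the lemma is a direct unpacking of the definitions — so the only thing to be careful about is the bookkeeping that both the red-arc case and the blue-arc case land in one of the two bullets defining $\xleftrightarrow[\mathcal{F}]{}$, and that membership in $\mathcal{F}$ is preserved along the way.
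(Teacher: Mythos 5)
Your proof is correct and follows the same route as the paper: the paper's proof simply observes that a single step $A_{i-1} \leftrightarrow A_i$ yields $\phi(A_{i-1}) \xleftrightarrow[\mathcal F]{} \phi(A_i)$ "by definition" and then iterates, and your red-arc/blue-arc case analysis (plus the observation that feasibility of each $A_i$ keeps $\phi(A_i)$ in $\mathcal F$) is exactly the unpacking of that one-line claim. Nothing is missing.
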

\begin{proof}
	If $A_{\rm ini} \leftrightarrow A_{\rm tar}$, then 
	$\phi(A_{\rm ini}) \xleftrightarrow[\mathcal F]{} \phi(A_{\rm tar})$ by definition. 
	By using this relationship repeatedly, we obtain the claim. 
\end{proof}

Although the opposite implication is not true, 
we show the following statement. 
\begin{lemma}
	\label{lem:blue05}
	Let $A_{\rm ini}, A_{\rm tar} \in \mathcal{A}$.  
	If $\phi(A_{\rm ini}) \underset{\mathcal{F}}{\leftrightsquigarrow} \phi(A_{\rm tar})$, then 
	there exists $A^\circ_{\rm tar} \in \mathcal{A}$ such that
	$\phi(A^\circ_{\rm tar}) = \phi(A_{\rm tar})$ and
	$A_{\rm ini} \leftrightsquigarrow A^\circ_{\rm tar}$. 
\end{lemma}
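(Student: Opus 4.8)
The plan is to follow a reconfiguration sequence $(A^{\rm blue}_0, d_0), \dots, (A^{\rm blue}_l, d_l)$ in $\mathcal{F}$ from $\phi(A_{\rm ini})$ to $\phi(A_{\rm tar})$ step by step, and at each step lift the move to the level of orientations in $\mathcal{A}$. We maintain the invariant that after processing the first $i$ steps we have produced a feasible orientation $A^{(i)} = A^{\rm blue}_i \cup A^{\rm red,(i)} \in \mathcal{A}$ with $\phi(A^{(i)}) = (A^{\rm blue}_i, d_i)$ and $A_{\rm ini} \leftrightsquigarrow A^{(i)}$. At the end, $A^{(l)}$ is the desired $A^\circ_{\rm tar}$. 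The base case $A^{(0)} := A_{\rm ini}$ is immediate since $\phi(A_{\rm ini}) = (A^{\rm blue}_0, d_0)$.

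For the inductive step we distinguish the two ways an $\mathcal{F}$-move can occur. If $A^{\rm blue}_{i-1} = A^{\rm blue}_i$ but $d_{i-1} \ne d_i$ (a ``red-only'' move), then the blue part is unchanged, and both $A^{\rm red,(i-1)}$ realizes $d_{i-1}$ and, by definition of $\mathcal{F}$, there is \emph{some} orientation $A^{\rm red}$ of $E^{\rm red}$ realizing $d_i$. Here I would invoke Lemma~\ref{lem:blue02} with $A^{\rm red}_{\rm ini} := A^{\rm red,(i-1)}$ and $A^{\rm red}_{\rm tar} := A^{\rm red}$: it yields a sequence of single-arc reversals from $A^{\rm red,(i-1)}$ to some $A^{\rm red,(i)}$ with $\rho_{A^{\rm red,(i)}} = \rho_{A^{\rm red}}$, and crucially $\rho$ of every intermediate orientation dominates $\min\{\rho_{A^{\rm red,(i-1)}}, \rho_{A^{\rm red}}\}$ pointwise. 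Since $A^{\rm blue}_i$ is fixed throughout, feasibility of each intermediate full orientation reduces to checking $\rho_{A^{\rm red}} + 2\rho_{A^{\rm blue}_i} \ge 2$ at every vertex; this follows because at any vertex where the lower envelope $\min\{\rho_{A^{\rm red,(i-1)}}, \rho_{A^{\rm red}}\}$ already has value $\ge 2$ we are fine, and at the remaining vertices — those in $X$ with $d_{i-1}(v) \le 1$ or $d_i(v) \le 1$ — condition (1) of the definition of $\mathcal{F}$ together with the fact that $d_{i-1}, d_i$ agree with the respective $\rho$'s capped at $2$ gives $2\rho_{A^{\rm blue}_i}(v) + \min\{\dots\}(v) \ge 2$. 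So the intermediate orientations stay in $\mathcal{A}$, giving $A^{(i-1)} \leftrightsquigarrow A^{(i)}$ and hence $A_{\rm ini} \leftrightsquigarrow A^{(i)}$; and $\phi(A^{(i)}) = (A^{\rm blue}_i, d_i)$ because $\rho_{A^{\rm red,(i)}}$ capped at $2$ equals $d_i$ on $X$.

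If instead $d_{i-1} = d_i =: d$ and $A^{\rm blue}_{i-1} \leftrightarrow A^{\rm blue}_i$ differ in one blue arc $(x,y)$ becoming $(y,x)$ (a ``blue'' move), then reversing that one blue arc in $A^{(i-1)}$ changes $\rho_{A^{\rm blue}}$ by $-1$ at $x$ and $+1$ at $y$, while $A^{\rm red}$ stays put. The only feasibility worry is vertex $x$: we need $\rho_{A^{\rm red,(i-1)}}(x) + 2\rho_{A^{\rm blue}_i}(x) \ge 2$. But $(A^{\rm blue}_i, d) \in \mathcal{F}$, so condition (1) gives $2\rho_{A^{\rm blue}_i}(x) + d(x) \ge 2$, and since $A^{\rm red,(i-1)}$ realizes $d$ we have $\rho_{A^{\rm red,(i-1)}}(x) \ge d(x)$ (indeed $=d(x)$ if $d(x)\le 1$, $\ge 2 \ge d(x)$ otherwise). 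Hence the single blue-arc reversal keeps the orientation feasible, so $A^{(i)} := A^{\rm blue}_i \cup A^{\rm red,(i-1)}$ works, $A^{(i-1)} \leftrightsquigarrow A^{(i)}$, and $\phi(A^{(i)}) = (A^{\rm blue}_i, d)$. I expect the first case — the red-only move — to be the main obstacle, because it is the only place where a single step in $\mathcal{F}$ can correspond to many steps in $\mathcal{A}$, and making sure feasibility is preserved along the whole detour is exactly what Lemma~\ref{lem:blue02}'s domination guarantee is there to handle; the bookkeeping of which vertices are ``protected'' by the lower envelope versus by condition (1) of the definition of $\mathcal{F}$ is the part that needs care.
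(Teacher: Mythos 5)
Your proposal is correct and follows essentially the same route as the paper: reduce to a single $\mathcal{F}$-step, handle a blue move by reversing that one blue arc while keeping the red part (feasibility at the losing endpoint via condition (1) and the fact that the current red orientation realizes $d$), and handle a red-only move via Lemma~\ref{lem:blue02}, using its indegree-domination guarantee together with feasibility of the two endpoint orientations to keep all intermediate orientations feasible. The only cosmetic difference is that you run the induction along the $\mathcal{F}$-sequence explicitly, choosing a realizing red orientation for each intermediate pair, whereas the paper states that the single-step case suffices and in that case takes the red part of the given target orientation.
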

\begin{proof}
It suffices to consider the case when 
$\phi(A_{\rm ini}) \xleftrightarrow[\mathcal F]{} \phi(A_{\rm tar})$. 
Denote $\phi(A_{\rm ini}) = (A^{\rm blue}_{\rm ini}, d_{\rm ini})$ and $\phi(A_{\rm tar}) = (A^{\rm blue}_{\rm tar}, d_{\rm tar})$. 
By definition, we have either 
$d_{\rm ini} = d_{\rm tar}$ and $A^{\rm blue}_{\rm ini} \leftrightarrow A^{\rm blue}_{\rm tar}$, or 
$A^{\rm blue}_{\rm ini} = A^{\rm blue}_{\rm tar}$. 

If $d_{\rm ini} = d_{\rm tar}$ and $A^{\rm blue}_{\rm ini} \leftrightarrow A^{\rm blue}_{\rm tar}$, then 
$A_{\rm ini} \leftrightarrow A^{\rm blue}_{\rm tar} \cup A^{\rm red}_{\rm ini}$ and 
$\phi(A^{\rm blue}_{\rm tar} \cup A^{\rm red}_{\rm ini}) = (A^{\rm blue}_{\rm tar}, d_{\rm ini}) = (A^{\rm blue}_{\rm tar}, d_{\rm tar}) = \phi(A_{\rm tar})$, 
which means that $A^\circ_{\rm tar} := A^{\rm blue}_{\rm tar} \cup A^{\rm red}_{\rm ini}$ satisfies the conditions. 

Otherwise, let $A^{\rm blue} := A^{\rm blue}_{\rm ini} = A^{\rm blue}_{\rm tar}$. 
By Lemma~\ref{lem:blue02},  
we obtain a sequence $A^{\rm red}_0, A^{\rm red}_1, \dots , A^{\rm red}_l$ of orientations of $E^{\rm red}$
such that $A^{\rm red}_0 = A^{\rm red}_{\rm ini}$, $\rho_{A^{\rm red}_l} = \rho_{A^{\rm red}_{\rm tar}}$, $A^{\rm red}_{i-1} \leftrightarrow A^{\rm red}_i$ for $i=1, \dots , l$, and 
$\rho_{A^{\rm red}_i}(v) \ge \min \{ \rho_{A^{\rm red}_{\rm ini}}(v), \rho_{A^{\rm red}_{\rm tar}}(v)\}$ for any $v \in V$ and any $i \in \{0, 1, \dots , l\}$.
Then, for any $i \in \{0, 1, \dots , l\}$, we have 
\begin{align*}
2 \rho_{A^{\rm blue}} (v) + \rho_{A^{\rm red}_i}(v) 
\ge \min \{2 \rho_{A^{\rm blue}} (v) + \rho_{A^{\rm red}_{\rm ini}}(v), 2 \rho_{A^{\rm blue}} (v) + \rho_{A^{\rm red}_{\rm tar}}(v)\} 
\ge 2
\end{align*}
for any $v \in V$,    
and hence $A^{\rm blue} \cup A^{\rm red}_i$ is feasible. 
Since $A^{\rm blue} \cup A^{\rm red}_{i-1} \leftrightarrow A^{\rm blue} \cup A^{\rm red}_i$ for $i=1, \dots , l$, 
we have 
$$
(A_{\rm ini} =) A^{\rm blue} \cup A^{\rm red}_{\rm ini} \leftrightsquigarrow A^{\rm blue} \cup A^{\rm red}_l.
$$
Furthermore, 
since $\rho_{A^{\rm red}_l} = \rho_{A^{\rm red}_{\rm tar}}$, 
we have $\phi(A^{\rm blue} \cup A^{\rm red}_l) = \phi(A_{\rm tar})$. 
Therefore, $A^\circ_{\rm tar} := A^{\rm blue} \cup A^{\rm red}_l$ satisfies the conditions in the lemma. 
\end{proof}

Note that
we can construct $A^\circ_{\rm tar}$ and a reconfiguration sequence in Lemma~\ref{lem:blue05}
efficiently.

\subsection{Algorithm}
\label{sec:algo}

Let $I = (G, A_{\ini}, A_{\tar})$ be an instance of \ctoc.
We first compute $\phi(A_{\rm ini})$ and $\phi(A_{\rm tar})$, 
and test whether 
$\phi(A_{\rm ini}) \underset{\mathcal{F}}{\leftrightsquigarrow} \phi(A_{\rm tar})$ or not. 
If $\phi(A_{\rm ini}) \not\underset{\mathcal{F}}{\leftrightsquigarrow} \phi(A_{\rm tar})$, 
then we can immediately conclude that $A_{\rm ini}  \not\leftrightsquigarrow A_{\rm tar}$
by Lemma~\ref{lem:blue08}.  

Thus, in what follows, suppose that $\phi(A_{\rm ini}) \underset{\mathcal{F}}{\leftrightsquigarrow} \phi(A_{\rm tar})$. 
In this case, by applying Lemma~\ref{lem:blue05}, 
we can construct $A^\circ_{\rm tar} \in \mathcal{A}$ with $\phi(A^\circ_{\rm tar}) = \phi(A_{\rm tar})$ such that 
$A_{\rm ini} \leftrightsquigarrow A^\circ_{\rm tar}$. 
This shows that 
$A_{\rm ini} \leftrightsquigarrow A_{\rm tar}$ is equivalent to $A^\circ_{\rm tar} \leftrightsquigarrow A_{\rm tar}$, 
which means that we can regard $A^\circ_{\rm tar}$ as a new initial configuration instead of $A_{\rm ini}$. 
Thus, the problem is reduced to the case when $\phi(A_{\rm ini}) = \phi(A_{\rm tar})$. 
In particular, we have $A^{\rm blue}_{\rm ini} = A^{\rm blue}_{\rm tar}$.

Suppose that $A^{\rm blue}_{\rm ini} = A^{\rm blue}_{\rm tar} =: A^{\rm blue}$ and
$\rho_{A^{\rm red}_{\rm ini}} \not= \rho_{A^{\rm red}_{\rm tar}}$. 
Then, by applying Lemma~\ref{lem:blue02},  
we obtain an orientation $A^{\rm red}_l$ of $E^{\rm red}$ such that  
$\rho_{A^{\rm red}_l} = \rho_{A^{\rm red}_{\rm tar}}$ and $A_{\rm ini} \leftrightsquigarrow A^{\rm blue} \cup A^{\rm red}_l$. 
This shows that 
$A_{\rm ini} \leftrightsquigarrow A_{\rm tar}$ is equivalent to $A^{\rm blue} \cup A^{\rm red}_l\leftrightsquigarrow A_{\rm tar}$, 
which means that we can regard $A^{\rm blue} \cup A^{\rm red}_l$ as a new initial configuration instead of $A_{\rm ini}$. 
Thus, the problem is reduced to the case when $\rho_{A^{\rm red}_{\rm ini}} = \rho_{A^{\rm red}_{\rm tar}}$. 
If $A^{\rm red}_{\rm ini} = A^{\rm red}_{\rm tar}$, 
we conclude that $A_{\rm ini} \leftrightsquigarrow A_{\rm tar}$. 
Otherwise, since $\rho_{A^{\rm red}_{\rm ini}} = \rho_{A^{\rm red}_{\rm tar}}$, the set $A^{\rm red}_{\rm ini} \setminus A^{\rm red}_{\rm tar}$ 
can be decomposed into arc-disjoint cycles. 

Note that all of the above procedures can be executed in $2^{O(k)} \cdot {\rm poly} (|V|)$ time, since $|\mathcal{F}| = 2^{O(k)}$. 
In what follows, we give an algorithm for testing whether the direction of each cycle can be reversed or not. 
For this purpose, we show the following lemma.

\begin{lemma}
\label{lem:blue06}
Let $A_{\rm ini} \in \mathcal{A}$ and 
let $C$ be a dicycle with all the arcs in $A^{\rm red}_{\rm ini}$. 
Then, the followings are equivalent. 
\begin{enumerate}
\item[(i)]
$A_{\rm ini} \leftrightsquigarrow (A_{\rm ini} \setminus C) \cup \overline{C}$, where $\overline{C}$ is the reverse dicycle of $C$. 
\item[(ii)]
For any arc $a$ in $C$, 
there exists an orientation $A \in \mathcal A$ such that $A_{\rm ini} \leftrightsquigarrow A$ and $a \not\in A$. 
\item[(iii)]
For any $u \in V(C)$, there exists an orientation $A \in \mathcal{A}$ 
such that 
$A_{\rm ini} \leftrightsquigarrow A$
and
$2 \rho_{A^{\rm blue}}(u) + \rho_{A^{\rm red}}(u) \ge 3$.   
\end{enumerate}
\end{lemma}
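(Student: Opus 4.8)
The plan is to establish the cycle of implications (i)$\Rightarrow$(ii)$\Rightarrow$(iii)$\Rightarrow$(i); throughout we may assume $C$ has length at least two, since a loop has the same orientation in every configuration and does not arise in our application. For (i)$\Rightarrow$(ii): in a reconfiguration sequence from $A_{\rm ini}$ to $(A_{\rm ini}\setminus C)\cup\overline{C}$ the two endpoints disagree on every arc of $C$, so each arc $a\in C$ is reversed at some step, and the configuration right after that step witnesses (ii). For (ii)$\Rightarrow$(iii): given $u\in V(C)$, let $a$ be the unique arc of $C$ entering $u$; by (ii) there is a reachable feasible orientation not containing $a$, so on a reconfiguration sequence from $A_{\rm ini}$ there is a first step reversing $a$, and the orientation $A'$ immediately before it satisfies $A_{\rm ini}\leftrightsquigarrow A'$, $a\in A'$, and, because the orientation obtained from $A'$ by reversing the red arc $a$ is feasible at $u$, also $2\rho_{A'^{\rm blue}}(u)+\rho_{A'^{\rm red}}(u)\ge 3$.

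The core is (iii)$\Rightarrow$(i). First I would produce a reconfiguration sequence $A_{\rm ini}=B_0\leftrightarrow\cdots\leftrightarrow B_s$ along which $C$ occurs unchanged in \emph{every} $B_i$ and which has ``slack on $C$'' at its end, meaning $2\rho_{B_s^{\rm blue}}(y)+\rho_{B_s^{\rm red}}(y)\ge 3$ for some $y\in V(C)$. To obtain it, apply (iii) to an arbitrary vertex of $C$ to get a reachable feasible orientation $A$ with slack there, fix a reconfiguration sequence from $A_{\rm ini}$ to $A$, and let $s$ be the largest index for which $C$ occurs unchanged in each of $B_0,\dots,B_s$. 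If $s$ reaches the end of this sequence we are done with $y$ the chosen vertex; otherwise $B_{s+1}$ equals $B_s$ with some arc $a\in C$ reversed, and feasibility of $B_{s+1}$ forces slack in $B_s$ at the head of $a$ in $C$, which we take as $y$.

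Next I would ``rotate'' $C$ out of $B_s$: writing $C$ as $y=w_0\to w_1\to\cdots\to w_{m-1}\to w_0$, reverse the arcs of $C$ entering $w_0,w_{m-1},w_{m-2},\dots,w_1$, in that order. The first reversal is feasible by the slack at $y$; and just before reversing the arc entering $w_i$, that vertex has just received one extra unit of red in-weight from the previous reversal, so, being feasible beforehand, it has in-weight at least three, which suffices to reverse a leaving red arc. This sub-sequence touches only arcs of $C$ and ends at $A'':=(B_s\setminus C)\cup\overline{C}$, giving $A_{\rm ini}\leftrightsquigarrow B_s\leftrightsquigarrow A''$. Finally I would ``lift'' the prefix $B_0,\dots,B_s$: since $C$ is unchanged throughout it, each step reverses an arc outside $C$, so replacing the arcs of $C$ by $\overline{C}$ in every $B_i$ leaves all indegrees unchanged (reversing a dicycle changes no indegree), hence stays feasible, and the same steps remain valid; thus $(A_{\rm ini}\setminus C)\cup\overline{C}=(B_0\setminus C)\cup\overline{C}\leftrightsquigarrow(B_s\setminus C)\cup\overline{C}=A''$. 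Composing the three parts yields $A_{\rm ini}\leftrightsquigarrow A''\leftrightsquigarrow(A_{\rm ini}\setminus C)\cup\overline{C}$, which is (i).

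The step I expect to be most delicate is making the ``rotation'' and the ``lifting'' dovetail: one must take care that the prefix keeps $C$ unchanged at every intermediate configuration, not merely at its two endpoints, so that flipping $C$ is well defined all along it, and one must carry out the in-weight accounting around the cycle carefully so that every orientation produced during the rotation is feasible, including the short case $m=2$ with $C$ using two parallel edges.
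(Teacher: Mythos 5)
Your proposal is correct and follows essentially the same route as the paper: the implication (iii)$\Rightarrow$(i) is obtained, as in the paper, by first extracting a reconfiguration prefix that keeps $C$ intact and ends with in-weight at least $3$ at some vertex of $C$ (your ``largest prefix'' argument is the paper's ``minimal sequence'' argument), then rotating $C$ arc by arc, and finally lifting the prefix with $C$ replaced by $\overline{C}$, using that reversing a dicycle changes no indegree. The directions (i)$\Rightarrow$(ii) and (ii)$\Rightarrow$(iii) differ only cosmetically (you use intermediate configurations and a direct argument where the paper uses the final orientation and a contrapositive), so no substantive difference remains.
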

\begin{proof}
We prove (i)$\Rightarrow$(ii), (ii)$\Rightarrow$(iii), and (iii)$\Rightarrow$(i), respectively. 

{\bf [(i)$\Rightarrow$(ii)]} 
If (i) holds, then $A := (A_{\rm ini} \setminus C) \cup \overline{C}$ satisfies the conditions in (ii), since it contains no arc in $C$. 

{\bf [(ii)$\Rightarrow$(iii)]} 
We prove the contraposition. 
Assume that (iii) does not hold, that is, 
there exists a vertex $u \in V(C)$ such that 
$2 \rho_{A^{\rm blue}}(u) + \rho_{A^{\rm red}}(u) = 2$
for any $A \in \mathcal A$ with $A_{\rm ini} \leftrightsquigarrow A$. 
Let $a$ be the arc in $C$ that enters $u$. 
Since we cannot reverse the direction of $a$ without violating the feasibility, 
$a$ is contained in any orientation $A \in \mathcal A$ with $A_{\rm ini} \leftrightsquigarrow A$.

{\bf [(iii)$\Rightarrow$(i)]} 
Suppose that (iii) holds. 
We take a sequence 
$A_0, A_1, \dots , A_l$ of feasible orientations of $E$ such that 
$A_0 = A_{\rm ini}$, 
$A_i$ is obtained from $A_{i-1}$ by reversing an arc $a_i \in A_{i-1}$ for $i \in \{1, 2, \dots , l\}$, and 
there exists $u \in V(C)$ such that 
$2 \rho_{A^{\rm blue}_l}(u) + \rho_{A^{\rm red}_l}(u) \ge 3$.  
By taking a minimal sequence with these conditions, we may assume that 
$a_i$ is not contained in $C$ for $i \in \{1, 2, \dots , l\}$. 
Since 
$2 \rho_{A^{\rm blue}_l}(u) + \rho_{A^{\rm red}_l}(u) \ge 3$, 
starting from $A_l$, 
we can change the direction of each arc in $C$ one by one without violating the feasibility, 
which shows that $A_l \leftrightsquigarrow (A_l \setminus C) \cup \overline{C}$. 
On the other hand, since $(A_i \setminus C) \cup \overline{C}$ is obtained from $(A_{i-1} \setminus C) \cup \overline{C}$ 
by reversing $a_i$ for $i \in \{1, 2, \dots , l\}$, we obtain
$(A_{\rm ini} \setminus C) \cup \overline{C} \leftrightsquigarrow (A_l \setminus C) \cup \overline{C}$. 
Thus, it holds that $A_{\rm ini} \leftrightsquigarrow A_l \leftrightsquigarrow (A_l \setminus C) \cup \overline{C} \leftrightsquigarrow (A_{\rm ini} \setminus C) \cup \overline{C}$. 
\end{proof}

Let $C$ be a dicycle in $A^{\rm red}_{\rm ini} \setminus A^{\rm red}_{\rm tar}$. 
Fix a vertex $u \in V(C)$ and consider the following problem, 
for which an algorithm is presented later in Section~\ref{sec:blueprobA}.  

\begin{description}
\item[\ProblemA]
\item[Input:]
A constraint graph $G$, an orientation $A_{\rm ini} \in \mathcal A$, and a vertex $u \in V(G)$. 

\item[Task:]
Find an orientation $A \in \mathcal{A}$ 
s.t.\
$2 \rho_{A^{\rm blue}}(u) + \rho_{A^{\rm red}}(u) \ge 3$ 
and $A_{\rm ini} \leftrightsquigarrow A$ (if exists). 
\end{description}

If \ProblemA has no solution, 
then condition (iii) in Lemma~\ref{lem:blue06} does not hold. 
This shows that the condition (ii) in Lemma~\ref{lem:blue06} does not hold, that is, 
there exists an arc $a$ in $C$ 
that is contained in 
any orientation $A \in \mathcal A$ with $A_{\rm ini} \leftrightsquigarrow A$.  
In this case, since $a \in A_{\rm ini} \setminus A_{\rm tar}$, 
we conclude that 
$A_{\rm ini} \not\leftrightsquigarrow A_{\rm tar}$. 

Otherwise, \ProblemA has a solution, and hence 
the condition (iii) in Lemma~\ref{lem:blue06} holds. 
Since it is equivalent to the condition (i) in Lemma~\ref{lem:blue06}, we have that  
$A_{\rm ini} \leftrightsquigarrow (A_{\rm ini} \setminus C) \cup \overline{C}$. 
Therefore, 
$A_{\rm ini} \leftrightsquigarrow A_{\rm tar}$ is equivalent to $(A_{\rm ini} \setminus C) \cup \overline{C} \leftrightsquigarrow A_{\rm tar}$, 
which means that we can regard $(A_{\rm ini} \setminus C) \cup \overline{C}$ as a new initial configuration instead of $A_{\rm ini}$. 
Then, the problem is reduced to the case with smaller $|A_{\rm ini} \setminus A_{\rm tar}|$. 
By applying this procedure at most $O(|E|)$ times repeatedly, we can solve the original reconfiguration problem. 
The entire algorithm is shown in Algorithm~\ref{alg1}. 

\begin{algorithm}[t]
	\SetKwInOut{Input}{Input}\SetKwInOut{Output}{Output}
	  \AlgoDontDisplayBlockMarkers\SetAlgoNoLine\SetAlgoNoEnd
	\Input{a graph $G=(V,E)$ and orientations $A_{\rm ini}, A_{\rm tar} \in \mathcal A$.}
	\Output{ ``\YES'' if $A_{\rm ini} \leftrightsquigarrow A_{\rm tar}$, and ``\NO'' otherwise. }
	Compute $\mathcal{F}, \phi(A_{\rm ini})$, and $\phi(A_{\rm tar})$ ; \\
	\lIf{$\phi(A_{\rm ini}) \not\underset{\mathcal{F}}{\leftrightsquigarrow} \phi(A_{\rm tar})$}{ 
		\Return ``\NO''  
	}
	\If{$\phi(A_{\rm ini}) \not= \phi(A_{\rm tar})$ or $\rho_{A^{\rm red}_{\rm ini}} \not= \rho_{A^{\rm red}_{\rm tar}}$}{ 
		Compute $A \in \mathcal A$ such that $\phi(A) = \phi(A_{\rm tar})$,
         $\rho_{A^{\rm red}} = \rho_{A^{\rm red}_{\rm tar}}$, and $A_{\rm ini} \leftrightsquigarrow A$ \;
		$A_{\rm ini} \leftarrow A$	  
		\tcp*[r]{See Section~\ref{sec:reconfF}}
	}
	\While{$A^{\rm red}_{\rm ini} \setminus A^{\rm red}_{\rm tar}$ contains a dicycle $C$}
	{
		Take $u \in V(C)$ and solve \ProblemA \;
		\eIf{\ProblemA has no feasible solution}
		{ 
			Return ``\NO'' \;
		}
		{ 
			$A_{\rm ini} \leftarrow  (A_{\rm ini} \setminus C) \cup \overline{C}$ ;  \tcp*[f]{See Section~\ref{sec:algo}}
		}
	}
	\Return ``\YES'' \;
	\caption{Algorithm for the Reconfiguration Problem}
	\label{alg1}
\end{algorithm}

\subsection{Algorithm for \ProblemA}
\label{sec:blueprobA}

The remaining task is to 
give a polynomial time algorithm for \ProblemA. 
For this purpose, we use a similar argument to Section~\ref{sec:reconfF}. 
Suppose we are given a graph $G=(V, E)$ and a vertex $u \in V$. 
Recall that $X \subseteq V$ is the set of all vertices to which 
edges in $E^{\rm blue}$ are incident.  
Define $\mathcal{F}_u$ as the set of all pairs $(A^{\rm blue}, d)$, 
where $A^{\rm blue}$ is an orientation of $E^{\rm blue}$ and $d$ is a vector in $\{0, 1, 2, 3\}^{X \cup \{u\}}$ 
satisfying the following conditions:  
\begin{enumerate}
\item[(1)]
$2 \rho_{A^{\rm blue}} (v) + d(v) \ge 2$ for any $v \in X \cup \{u\}$. 
\item[(2)]
There exists an orientation $A^{\rm red}$ of $E^{\rm red}$ such that for any $v \in V$, 
\[
\rho_{A^{\rm red}} (v)
\begin{cases}
= d(v) & \mbox{if $v \in X \cup \{u\}$ and $d(v) \in \{0, 1, 2\}$,} \\
\ge 3 & \mbox{if $v \in X \cup \{u\}$ and $d(v)=3$, and} \\
\ge 2 & \mbox{if $v \in V \setminus (X \cup \{u\})$}.
\end{cases}
\]
\end{enumerate}
We note that $|\mathcal {F}_u| \le 2^{|E^{\rm blue}|} \cdot 4^{|X \cup \{u\}|} = 2^{O(k)}$. 
We define $\xleftrightarrow[\mathcal{F}_u]{}$, $\underset{\mathcal{F}_u}{\leftrightsquigarrow}$, and $\phi_u$ in the same way as 
$\xleftrightarrow[\mathcal{F}]{}$, $\underset{\mathcal{F}}{\leftrightsquigarrow}$, and $\phi$. 
We obtain the following lemmas in the same way as Lemmas~\ref{lem:blue04},~\ref{lem:blue08}, and~\ref{lem:blue05}. 

\begin{lemma}\label{lem:blue14}
For an orientation $A^{\rm blue}$ of $E^{\rm blue}$ and a vector $d \in \{0, 1, 2, 3\}^X$, 
we can test whether $(A^{\rm blue}, d) \in \mathcal{F}_u$ or not in polynomial time. 
\end{lemma}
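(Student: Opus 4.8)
The plan is to follow the proof of Lemma~\ref{lem:blue04} almost verbatim, adapting only the auxiliary flow network to the modified indegree targets that define $\mathcal{F}_u$. Condition~(1) is a purely local check: for each $v \in X \cup \{u\}$ we simply verify $2\rho_{A^{\rm blue}}(v) + d(v) \ge 2$, which takes linear time. The work, such as it is, lies in checking condition~(2), i.e.\ deciding whether there is an orientation $A^{\rm red}$ of $E^{\rm red}$ with $\rho_{A^{\rm red}}(v) = d(v)$ when $v \in X \cup \{u\}$ and $d(v) \in \{0,1,2\}$, with $\rho_{A^{\rm red}}(v) \ge 3$ when $v \in X \cup \{u\}$ and $d(v) = 3$, and with $\rho_{A^{\rm red}}(v) \ge 2$ for the remaining $v \in V \setminus (X \cup \{u\})$.

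For this I would reuse the digraph $\hat G = (\hat V, \hat A)$ of Lemma~\ref{lem:blue04}: a vertex $w_e$ for each $e \in E^{\rm red}$, a source $s$ and a sink $t$, $\hat V := V \cup \{w_e \mid e \in E^{\rm red}\} \cup \{s,t\}$, and $\hat A := \hat A_1 \cup \hat A_2 \cup \hat A_3$ with $\hat A_1 = \{(s,w_e) \mid e \in E^{\rm red}\}$, $\hat A_2 = \{(w_e,v) \mid e \in E^{\rm red},\ v \in V,\ \mbox{$e$ incident to $v$ in $G$}\}$, and $\hat A_3 = \{(v,t) \mid v \in V\}$. The bounds on $\hat A_1$ and $\hat A_2$ are unchanged ($l = u = 1$ on each $(s,w_e)$, which forces each red edge to be routed to exactly one of its endpoints and hence oriented; $l = 0$, $u = 1$ on each $(w_e,v)$), and the only change is on $\hat A_3$: set $l(v,t) := u(v,t) := d(v)$ if $v \in X \cup \{u\}$ and $d(v) \in \{0,1,2\}$; set $l(v,t) := 3$ and $u(v,t) := +\infty$ if $v \in X \cup \{u\}$ and $d(v) = 3$; and set $l(v,t) := 2$, $u(v,t) := +\infty$ if $v \in V \setminus (X \cup \{u\})$. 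Exactly as in Lemma~\ref{lem:blue04}, an integral feasible $s$-$t$ flow in $\hat G$ obeying these bounds corresponds to an orientation $A^{\rm red}$ (the unique $(w_e,v)$ carrying flow says $e$ points to $v$, and the flow on $(v,t)$ equals $\rho_{A^{\rm red}}(v)$), and the three cases of the bounds on $\hat A_3$ encode precisely the three cases of condition~(2). Thus $(A^{\rm blue},d) \in \mathcal{F}_u$ iff condition~(1) holds and $\hat G$ admits such an integral flow, and the latter is tested in polynomial time by a standard maximum-flow-with-lower-bounds computation (see e.g.\ \cite[Corollary 11.3a]{Schrijver:03}).

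I expect no genuine obstacle here, since the statement is the exact analogue of Lemma~\ref{lem:blue04} and the proof is the same modulo the bookkeeping above; the only point deserving a word of care is the new upper bound $+\infty$ on some $\hat A_3$ arcs (coming from the $\ge 3$ and $\ge 2$ cases). This is harmless: since all lower bounds are integral and every feasible flow has value at most $|E^{\rm red}|$, one may replace each $+\infty$ by $|E^{\rm red}|$ without affecting feasibility, so classical integral max-flow with lower bounds applies directly, and in particular a feasible flow exists iff an integral one does. The analogues of Lemmas~\ref{lem:blue08} and~\ref{lem:blue05} for $\mathcal{F}_u$, $\phi_u$, $\xleftrightarrow[\mathcal{F}_u]{}$ and $\underset{\mathcal{F}_u}{\leftrightsquigarrow}$ then go through by the identical arguments, now using Lemma~\ref{lem:blue02} with the indegree targets shifted so that a vertex with $d(v) \ge 3$ keeps indegree at least $3$ along the whole red-reversal sequence.
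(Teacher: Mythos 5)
Your proposal is correct and is essentially the paper's own approach: the paper proves Lemma~\ref{lem:blue14} by declaring it ``in the same way as Lemma~\ref{lem:blue04},'' i.e.\ by the same flow network with the lower/upper bounds on the arcs $(v,t)$ adjusted exactly as you describe (fixing the flow to $d(v)$ when $d(v)\in\{0,1,2\}$ for $v\in X\cup\{u\}$, and lower bounds $3$ resp.\ $2$ otherwise). Your extra remarks on replacing $+\infty$ by $|E^{\rm red}|$ and on the analogues of the other lemmas are harmless and do not change the argument.
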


\begin{lemma}\label{lem:blue18}
Let $A_{\rm ini}, A_{\rm tar} \in \mathcal{A}$.
If $A_{\rm ini} \leftrightsquigarrow A_{\rm tar}$, then 
$\phi_u(A_{\rm ini}) \underset{\mathcal{F}_u}{\leftrightsquigarrow} \phi_u(A_{\rm tar})$. 
\end{lemma}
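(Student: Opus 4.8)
The plan is to prove Lemma~\ref{lem:blue18} in exactly the same way as Lemma~\ref{lem:blue08}, since the two statements are structurally identical: they assert that a single reconfiguration step in $\mathcal{A}$ maps, under the relevant classification map, to a single step (or non-step) in the auxiliary relation. First I would observe that it suffices to treat the case $A_{\rm ini} \leftrightarrow A_{\rm tar}$; the general statement then follows by applying this base case along each step of a reconfiguration sequence and using transitivity of $\underset{\mathcal{F}_u}{\leftrightsquigarrow}$ (which holds by definition, as it is the reflexive-transitive-style closure of $\xleftrightarrow[\mathcal{F}_u]{}$).

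So assume $A_{\rm tar}$ is obtained from $A_{\rm ini}$ by reversing a single arc, and write $\phi_u(A_{\rm ini}) = (A^{\rm blue}_{\rm ini}, d_{\rm ini})$ and $\phi_u(A_{\rm tar}) = (A^{\rm blue}_{\rm tar}, d_{\rm tar})$. There are two cases according to whether the reversed arc is blue or red. If a blue arc is reversed, then $A^{\rm red}_{\rm ini} = A^{\rm red}_{\rm tar}$, so the same red orientation witnesses condition~(2) for both classes and hence $d_{\rm ini} = d_{\rm tar}$ (recall $d$ is determined from $\rho_{A^{\rm red}}$ via the truncation to $\{0,1,2\}$, together with the special value $3$ at $u$ — but this too depends only on $A^{\rm red}$, not on the blue orientation), while $A^{\rm blue}_{\rm ini} \leftrightarrow A^{\rm blue}_{\rm tar}$; this is exactly the first bullet in the definition of $\xleftrightarrow[\mathcal{F}_u]{}$. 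If instead a red arc is reversed, then $A^{\rm blue}_{\rm ini} = A^{\rm blue}_{\rm tar}$, which is precisely the second bullet in the definition of $\xleftrightarrow[\mathcal{F}_u]{}$. In either case $\phi_u(A_{\rm ini}) \xleftrightarrow[\mathcal{F}_u]{} \phi_u(A_{\rm tar})$, completing the base case.

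I do not anticipate a genuine obstacle here; the only point requiring a small amount of care is to confirm that $\phi_u$ is well-defined on all of $\mathcal{A}$ — i.e.\ that for every feasible $A = A^{\rm blue} \cup A^{\rm red}$ the pair $(A^{\rm blue}, d)$ with $d(v)$ defined from $\rho_{A^{\rm red}}(v)$ (truncated at $2$, or at $3$ for the coordinate $u$, as dictated by the definition of $\mathcal{F}_u$) really lies in $\mathcal{F}_u$: condition~(1) follows from feasibility of $A$, and condition~(2) is witnessed by $A^{\rm red}$ itself. This mirrors the discussion of $\phi$ in Section~\ref{sec:classA} verbatim, so I would simply state that $\phi_u$ is defined analogously to $\phi$ and that the verification is identical. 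Thus the proof is the one-line argument ``If $A_{\rm ini} \leftrightarrow A_{\rm tar}$, then $\phi_u(A_{\rm ini}) \xleftrightarrow[\mathcal{F}_u]{} \phi_u(A_{\rm tar})$ by definition; using this relationship repeatedly yields the claim,'' exactly paralleling the proof of Lemma~\ref{lem:blue08}.
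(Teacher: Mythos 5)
Your proof is correct and follows exactly the paper's route: the paper proves Lemma~\ref{lem:blue18} verbatim as Lemma~\ref{lem:blue08}, i.e.\ by noting that a single step $A_{\rm ini}\leftrightarrow A_{\rm tar}$ reverses either a blue arc (giving the first bullet of $\xleftrightarrow[\mathcal{F}_u]{}$, since $d$ depends only on $A^{\rm red}$) or a red arc (giving the second bullet), and then iterating along the reconfiguration sequence. Your additional check that $\phi_u$ is well defined on all of $\mathcal{A}$ is a correct, if routine, supplement to what the paper leaves implicit.
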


\begin{lemma}\label{lem:blue15}
Let $A_{\rm ini}, A_{\rm tar} \in \mathcal{A}$.  
If $\phi_u(A_{\rm ini}) \underset{\mathcal{F}_u}{\leftrightsquigarrow} \phi_u(A_{\rm tar})$, then 
there exists $A^\circ_{\rm tar} \in \mathcal{A}$ with $\phi_u(A^\circ_{\rm tar}) = \phi_u(A_{\rm tar})$ such that 
$A_{\rm ini} \leftrightsquigarrow A^\circ_{\rm tar}$. 
\end{lemma}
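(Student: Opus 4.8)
The plan is to mimic the proof of Lemma~\ref{lem:blue05} almost verbatim, the only difference being that the in-degree bookkeeping recorded by $\mathcal{F}_u$ runs up to $3$ and ranges over $X \cup \{u\}$ rather than up to $2$ over $X$. First I would record, exactly as the paper does for $\mathcal F$, the two immediate facts that make $\phi_u$ well behaved: if $(A^{\rm blue},d)\in\mathcal F_u$ and $A^{\rm red}$ realizes $d$ then $A^{\rm blue}\cup A^{\rm red}\in\mathcal A$, and conversely $\phi_u(A^{\rm blue}\cup A^{\rm red})=(A^{\rm blue},d)$ with $d(v)=\min\{\rho_{A^{\rm red}}(v),3\}$ for $v\in X\cup\{u\}$; both are immediate from conditions~(1)--(2) of $\mathcal F_u$. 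Then I reduce to a single step: it suffices to show that whenever $A\in\mathcal A$ and $\phi_u(A)=(A^{\rm blue},d)\xleftrightarrow[\mathcal F_u]{}(B^{\rm blue},d')$, there is $A'\in\mathcal A$ with $\phi_u(A')=(B^{\rm blue},d')$ and $A\leftrightsquigarrow A'$; iterating this along a witnessing sequence in $\mathcal F_u$ (starting from $A^{(0)}:=A_{\rm ini}$) and using transitivity of $\leftrightsquigarrow$ then yields the desired $A^\circ_{\rm tar}$.

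For the single step I would split according to the definition of $\xleftrightarrow[\mathcal F_u]{}$. If $d=d'$ and $A^{\rm blue}\leftrightarrow B^{\rm blue}$, take $A':=B^{\rm blue}\cup A^{\rm red}$: then $A\leftrightarrow A'$, and feasibility of $A'$ only needs to be checked at the endpoint of the reversed blue edge that lost in-weight, where it follows from condition~(1) of $\mathcal F_u$ for $(B^{\rm blue},d')$ together with $d(v)=\min\{\rho_{A^{\rm red}}(v),3\}\le\rho_{A^{\rm red}}(v)$; also $\phi_u(A')=(B^{\rm blue},d)=(B^{\rm blue},d')$. If instead $A^{\rm blue}=B^{\rm blue}=:C^{\rm blue}$, pick a red orientation $\tilde A^{\rm red}$ realizing $d'$ (it exists by condition~(2) of $\mathcal F_u$), apply Lemma~\ref{lem:blue02} to $A^{\rm red}$ and $\tilde A^{\rm red}$ to get a sequence $A^{\rm red}_0=A^{\rm red},\dots,A^{\rm red}_l$ with $\rho_{A^{\rm red}_l}=\rho_{\tilde A^{\rm red}}$, consecutive terms differing by one reversal, and $\rho_{A^{\rm red}_i}(v)\ge\min\{\rho_{A^{\rm red}}(v),\rho_{\tilde A^{\rm red}}(v)\}$ for all $v$ and $i$. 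Each $C^{\rm blue}\cup A^{\rm red}_i$ is feasible because $2\rho_{C^{\rm blue}}(v)+\rho_{A^{\rm red}_i}(v)\ge\min\{2\rho_{C^{\rm blue}}(v)+\rho_{A^{\rm red}}(v),\,2\rho_{C^{\rm blue}}(v)+\rho_{\tilde A^{\rm red}}(v)\}\ge 2$, using $A=C^{\rm blue}\cup A^{\rm red}\in\mathcal A$ and $C^{\rm blue}\cup\tilde A^{\rm red}\in\mathcal A$. Hence $A\leftrightsquigarrow C^{\rm blue}\cup A^{\rm red}_l=:A'$, and $\rho_{A^{\rm red}_l}=\rho_{\tilde A^{\rm red}}$ gives $\phi_u(A')=(C^{\rm blue},d')$.

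I do not expect a genuine obstacle: Lemma~\ref{lem:blue02} already supplies precisely the $\min$-domination needed to keep every intermediate orientation feasible, and the argument is the same bookkeeping as in Lemma~\ref{lem:blue05}. The one point worth a moment's care is that $\mathcal F_u$ tracks in-degrees up to $3$ and over $X\cup\{u\}$; this is harmless because feasibility only requires in-degree $\ge2$ everywhere, and the exact match $\phi_u(A')=(C^{\rm blue},d')$ at the end of the second case holds because Lemma~\ref{lem:blue02} terminates with the exact in-degree vector $\rho_{\tilde A^{\rm red}}$, whose coordinatewise minimum with $3$ reproduces $d'$ since $\tilde A^{\rm red}$ realizes $d'$. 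Likewise $u$ need not lie in $X$, in which case no blue edge touches $u$ and feasibility at $u$ is inherited unchanged throughout, so this causes no extra work.
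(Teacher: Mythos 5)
Your proposal is correct and is essentially the paper's intended argument: the paper proves Lemma~\ref{lem:blue15} by declaring it follows ``in the same way as'' Lemma~\ref{lem:blue05}, and your adaptation (single-step reduction, the $d=d'$ case via condition~(1), and the equal-blue case via Lemma~\ref{lem:blue02} with the $\min$-domination feasibility bound) is exactly that proof transplanted to $\mathcal{F}_u$. Your explicit remarks that every element of $\mathcal{F}_u$ is realized by some feasible orientation (so the single step can be iterated along an arbitrary sequence in $\mathcal{F}_u$) and that the cap $3$ over $X\cup\{u\}$ changes nothing are the right points of care and are consistent with the paper.
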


\begin{proposition}
\label{prop:blue:A}
\ProblemA has a solution if and only if
there exists a pair $(A^{\rm blue}, d) \in \mathcal{F}_u$
such that $2 \rho_{A^{\rm blue}} (u) + d(u) \ge 3$ and $\phi_u(A_{\rm ini}) \underset{\mathcal{F}_u}{\leftrightsquigarrow} (A^{\rm blue}, d)$.
\end{proposition}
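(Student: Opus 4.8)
The plan is to mimic the structure of Section~\ref{sec:reconfF}, now using the finer classification $\mathcal{F}_u$ instead of $\mathcal{F}$, and to show that the existence of a solution to \ProblemA is exactly captured by reachability within $\mathcal{F}_u$ to a pair whose $d$-value at $u$ is large enough. I would prove the two directions separately.

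For the forward direction, suppose \ProblemA has a solution, i.e.\ there is $A \in \mathcal{A}$ with $A_{\rm ini} \leftrightsquigarrow A$ and $2\rho_{A^{\rm blue}}(u) + \rho_{A^{\rm red}}(u) \ge 3$. Set $(A^{\rm blue}, d) := \phi_u(A)$. By the definition of $\phi_u$ (analogous to $\phi$, taking $d(v) = \min\{\rho_{A^{\rm red}}(v), 3\}$ on $X \cup \{u\}$), we have $2\rho_{A^{\rm blue}}(u) + d(u) \ge 3$: indeed either $\rho_{A^{\rm red}}(u) \ge 3$, in which case $d(u) = 3$, or $\rho_{A^{\rm red}}(u) \le 2$ and then $d(u) = \rho_{A^{\rm red}}(u)$, and the inequality transfers directly since $\phi_u$ preserves this quantity. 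Moreover, $(A^{\rm blue}, d) \in \mathcal{F}_u$, and by Lemma~\ref{lem:blue18} applied to $A_{\rm ini} \leftrightsquigarrow A$ we get $\phi_u(A_{\rm ini}) \underset{\mathcal{F}_u}{\leftrightsquigarrow} \phi_u(A) = (A^{\rm blue}, d)$, as required.

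For the converse, suppose $(A^{\rm blue}, d) \in \mathcal{F}_u$ satisfies $2\rho_{A^{\rm blue}}(u) + d(u) \ge 3$ and $\phi_u(A_{\rm ini}) \underset{\mathcal{F}_u}{\leftrightsquigarrow} (A^{\rm blue}, d)$. Since $(A^{\rm blue}, d) \in \mathcal{F}_u$, condition~(2) gives an orientation $A^{\rm red}$ of $E^{\rm red}$ realizing $d$, and $A_{\rm tar}' := A^{\rm blue} \cup A^{\rm red}$ is then a feasible orientation with $\phi_u(A_{\rm tar}') = (A^{\rm blue}, d)$; here one checks feasibility exactly as in Section~\ref{sec:classA}, using condition~(1) at the vertices of $X \cup \{u\}$ and the bound $\rho_{A^{\rm red}}(v) \ge 2$ elsewhere. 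Now apply Lemma~\ref{lem:blue15} with this $A_{\rm tar}'$: from $\phi_u(A_{\rm ini}) \underset{\mathcal{F}_u}{\leftrightsquigarrow} \phi_u(A_{\rm tar}')$ we obtain $A^\circ \in \mathcal{A}$ with $\phi_u(A^\circ) = (A^{\rm blue}, d)$ and $A_{\rm ini} \leftrightsquigarrow A^\circ$. Finally, since $\phi_u$ preserves the quantity $2\rho_{A^{\rm blue}}(u) + \min\{\rho_{A^{\rm red}}(u),3\}$ and the hypothesis gives $2\rho_{(A^\circ)^{\rm blue}}(u) + d(u) \ge 3$, we conclude $2\rho_{(A^\circ)^{\rm blue}}(u) + \rho_{(A^\circ)^{\rm red}}(u) \ge 3$, so $A^\circ$ is a solution to \ProblemA.

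The main subtlety I expect is the bookkeeping around the definition of $\phi_u$ and its interaction with the clamping at value $3$: one must be careful that $\phi_u$ records $d(u) = \min\{\rho_{A^{\rm red}}(u), 3\}$, so that the inequality $2\rho_{A^{\rm blue}}(u) + \rho_{A^{\rm red}}(u) \ge 3$ and the inequality $2\rho_{A^{\rm blue}}(u) + d(u) \ge 3$ are genuinely equivalent (they are, because the constraint only needs the value to reach $3$, and $2\rho_{A^{\rm blue}}(u) \le 2$ would force $\rho_{A^{\rm red}}(u) \ge 1$ anyway — but in all cases the truncation at $3$ does not destroy the ``$\ge 3$'' threshold). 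Everything else is a direct transcription of the arguments already carried out for $\mathcal{F}$, so once Lemmas~\ref{lem:blue14}--\ref{lem:blue15} are in hand the proof is short; combined with $|\mathcal{F}_u| = 2^{O(k)}$ and Lemma~\ref{lem:blue14}, this also yields the promised polynomial-time (indeed $2^{O(k)} \cdot \mathrm{poly}(|V|)$) algorithm for \ProblemA by searching $\mathcal{F}_u$ for such a pair reachable from $\phi_u(A_{\rm ini})$.
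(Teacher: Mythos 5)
Your proof is correct and follows essentially the same route as the paper's: the forward direction is Lemma~\ref{lem:blue18} applied to $\phi_u(A)$, and the converse is Lemma~\ref{lem:blue15} together with the observation that $\rho_{A^{\rm red}}(u) \ge d(u)$ by the definition of $\mathcal{F}_u$. Your extra care in constructing an explicit $A_{\rm tar}'$ realizing $(A^{\rm blue},d)$ before invoking Lemma~\ref{lem:blue15}, and in checking the clamping at $3$, only makes explicit what the paper leaves implicit.
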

\begin{proof}
If $A$ is a solution of \ProblemA, then 
$\phi_u(A) = (A^{\rm blue}, d)$ satisfies the conditions by Lemma~\ref{lem:blue18}. 
Conversely, assume that there exists a pair $(A^{\rm blue}, d)\in \mathcal{F}_u$ 
such that $2 \rho_{A^{\rm blue}} (u) + d(u) \ge 3$ and $\phi_u(A_{\rm ini}) \underset{\mathcal{F}_u}{\leftrightsquigarrow} (A^{\rm blue}, d)$.  
By Lemma~\ref{lem:blue15},  
there exists an orientation $A \in \mathcal{A}$ with $\phi_u(A) = (A^{\rm blue}, d)$ such that 
$A_{\rm ini} \leftrightsquigarrow A$. 
Since $2 \rho_{A^{\rm blue}} (u) + \rho_{A^{\rm red}} (u) \ge 2 \rho_{A^{\rm blue}} (u) + d (u) \ge 3$, 
$A$ is a solution of \ProblemA. 
\end{proof}

By this proposition, in order to solve \ProblemA,   
it suffices to test whether there exists a pair $(A^{\rm blue}, d)$
such that $2 \rho_{A^{\rm blue}} (u) + d(u) \ge 3$ and $\phi_u(A_{\rm ini}) \underset{\mathcal{F}_u}{\leftrightsquigarrow} (A^{\rm blue}, d)$. 
Since $|\mathcal {F}_u| = 2^{O(k)}$, it can be checked in $2^{O(k)} \cdot {\rm poly} (|V|)$ time. 
Note that the elements of $\mathcal {F}_u$ can be computed in $2^{O(k)} \cdot {\rm poly} (|V|)$ time by Lemma~\ref{lem:blue14}. 
Thus, Algorithm~\ref{alg1} solves the problem \ctoc in $2^{O(k)} \cdot {\rm poly} (|V|)$ time.

Using similar arguments as in Theorem~\ref{thm:blue01} we can also solve the \ctoe
version.

\begin{corollary}
\label{cor:blue01}
	\ctoe parameterized by the number $k$ of blue edges can be solved in  time
	$2^{O(k)} \cdot {\rm poly}(|V|)$.
\end{corollary}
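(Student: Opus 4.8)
The plan is to reuse wholesale the machinery built for Theorem~\ref{thm:blue01}: the classification $\mathcal{F}$ of feasible orientations, the map $\phi\colon\mathcal{A}\to\mathcal{F}$, the relation $\underset{\mathcal{F}}{\leftrightsquigarrow}$, and the problem \ProblemA together with its refinement $\mathcal{F}_u$, $\phi_u$. Given a \ctoe instance $(G, A_{\ini}, vw)$ with $e = vw$, I would split into two cases according to the colour of $e$. If $e$ is a loop ($v = w$) its orientation can never change, so the answer is trivially \NO; assume $v \neq w$ from now on. The point is that in each case the existence of a reachable configuration in which $e$ is reversed can be phrased as a reachability question that the tools of Section~\ref{sec:blue} already decide in $2^{O(k)}\cdot\operatorname{poly}(|V|)$ time.

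\textbf{Blue edge.} Without loss of generality $(v,w)\in A_{\ini}^{\rm blue}$, and we must decide whether some $A\in\mathcal{A}$ with $A_{\ini}\leftrightsquigarrow A$ has $(w,v)\in A^{\rm blue}$. Since the first component $A^{\rm blue}$ of a class $\phi(A)=(A^{\rm blue},d)$ records the orientation of \emph{every} blue edge, such an $A$ exists if and only if there is a class $(A^{\rm blue},d)\in\mathcal{F}$ with $\phi(A_{\ini})\underset{\mathcal{F}}{\leftrightsquigarrow}(A^{\rm blue},d)$ in which the arc corresponding to $e$ is $(w,v)$: the forward direction is Lemma~\ref{lem:blue08}, and the converse is Lemma~\ref{lem:blue05}, which turns such a class into a genuine reachable configuration with $e$ reversed. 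As $|\mathcal{F}| = 2^{O(k)}$ and membership in $\mathcal{F}$ is polynomial-time testable by Lemma~\ref{lem:blue04}, we build the reconfiguration graph on $\mathcal{F}$ and run breadth-first search from $\phi(A_{\ini})$.

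\textbf{Red edge.} Without loss of generality $(v,w)\in A_{\ini}^{\rm red}$. I claim the answer is \YES if and only if \ProblemA with input $(G, A_{\ini}, w)$ has a solution, i.e.\ there is a reachable $A\in\mathcal{A}$ with $2\rho_{A^{\rm blue}}(w) + \rho_{A^{\rm red}}(w) \ge 3$. For ``only if'', take a shortest reconfiguration sequence $A_{\ini} = B_0 \leftrightarrow\cdots\leftrightarrow B_m$ ending in an orientation with $e$ reversed and let the step $B_{j-1}\leftrightarrow B_j$ be the first one that reverses $e$; feasibility of $B_j$ forces the in-weight at $w$ in $B_{j-1}$ to be at least $3$, and $B_{j-1}$ is reachable. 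For ``if'', a reachable $A$ with in-weight at least $3$ at $w$ is obtained by some sequence; if that sequence already reverses $e$ we are done, and otherwise $(v,w)\in A$, so reversing $e$ in $A$ stays feasible (the in-weight at $w$ drops to at least $2$, while the in-weight at $v$ only grows). This is the single-arc specialisation of the equivalence (ii)$\Leftrightarrow$(iii) of Lemma~\ref{lem:blue06}. By Proposition~\ref{prop:blue:A}, together with $|\mathcal{F}_w| = 2^{O(k)}$ and Lemma~\ref{lem:blue14}, \ProblemA$(G,A_{\ini},w)$ is decidable in $2^{O(k)}\cdot\operatorname{poly}(|V|)$ time, finishing this case.

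The main obstacle is the red-edge case: unlike in \ctoc there is no target orientation to drive the search, so one has to isolate the correct certificate for being able to flip a given red arc — reachability of a configuration with surplus in-weight at its head — and check that the natural path arguments are not derailed by $e$ possibly being flipped (and flipped back) before that configuration is reached. Once this equivalence is established, everything reduces to the already-analysed searches in $\mathcal{F}$ and $\mathcal{F}_w$, giving the claimed $2^{O(k)}\cdot\operatorname{poly}(|V|)$ bound.
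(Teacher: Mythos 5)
Your proposal is correct and follows exactly the paper's route: for a blue $e^*$ you solve the reachability question in $\mathcal{F}$ via Lemmas~\ref{lem:blue08} and~\ref{lem:blue05}, and for a red $e^*$ you solve \ProblemA with $u$ the head of $e^*$. In fact you supply the justification (the single-arc analogue of (ii)$\Leftrightarrow$(iii) in Lemma~\ref{lem:blue06}) that the paper's own two-line proof leaves implicit, so nothing is missing.
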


\section{Conclusion}
\label{sec:conclusion}

We investigated the parameterized complexity of \ncl for four natural
parameters related to the constraint graph: The number of \andsc/\orsc vertices
of an \andorsc graph and the number of red/blue edges of a general constraint
graph. We give FPT algorithms for the \ctoc and \ctoe version of \ncl for each parameter
and in particular a linear kernel for \ncl parameterized by he number of red
edges. 
An interesting question for future work is whether there is a
polynomial kernel for \ncl parameterized by the number of \orsc vertices or the
number of blue edges.

\bibliography{references}

\clearpage
\appendix

\section{Proofs Omitted from Section~\ref{sec:OR}}

In this section we prove Theorem~\ref{the:FPT_OR}. To do so, we first formally
define and prove the correctness of the preprocessing step. We then proceed to
give an FPT-reduction from preprocessed instances of \ctoc parameterized by the
number of \orsc vertices to \BCSR{}.

\subsection{Correctness of the Preprocessing}
Let $(\oriG, \oriA_{\ini}, \oriA_{\tar})$ be a given instance of \ncl
such that $\oriG = (\oriV, \oriE)$ is an \andorsc graph.  For each $xy
\in \Blueset{\oriE}$ which is not a loop, we subdivide it by adding a
new vertex $z$; and let $\weig(xz) = \weig(zy) = 2$.  We call the newly
inserted vertex $z$ a \emph{middle vertex} between $x$ and $y$.  Let
$G=(V,E)$ be the resulting graph.  Note that $G$ is not an \andorsc
graph, and $V$ can be partitioned into $\Vand(G)$, $\Vor(G)$ and
$\Vmid(G)$, where $\Vmid(G)$ (or simply $\Vmid$) is the set of middle
vertices in $G$.  Since $\Vor(G) = \Vor(\oriG)$ holds, we will show in
this subsection that solving $(\oriG, \oriA_{\ini}, \oriA_{\tar})$ with
the parameter $|\Vor(\oriG)| \le k$ is equivalent to solving $(G,
A_{\ini}, A_{\tar})$ with the parameter $|\Vor(G)| = |\Vor(\oriG)| \le
k$. 

Consider any orientation $\oriA$ of the original edge set $\oriE$.
Then, we define an orientation $A$ of $E$, as follows: for each $(x,y)
\in \Blueset{\oriA}$ such that $x \neq y$, we delete $(x,y)$ from
$\oriA$, and add two arcs $(x,z)$ and $(z, y)$; let $A$ be the
resulting orientation of $E$.  We observe that the following lemma
holds. 

In particular, we construct two (feasible) orientations $A_{\ini}$ and
$A_{\tar}$ of $E$ which correspond to the original (feasible)
orientations $\oriA_{\ini}$ and $\oriA_{\tar}$ of $\oriE$,
respectively.  In this way, we obtain the instance $(G, A_{\ini},
A_{\tar})$ of \ncl as the result of the preprocessing to $(\oriG,
\oriA_{\ini}, \oriA_{\tar})$. 
\begin{lemma} \label{clm:cons1}
	$A$ is a feasible orientation of $E$ if and only if $\oriA$ is a feasible orientation of $\oriE$.
\end{lemma}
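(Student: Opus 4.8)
The plan is to check feasibility vertex by vertex, using the two facts that subdividing a (non-loop) blue edge leaves the in-weight contributed to its original endpoints unchanged, and that every middle vertex is automatically feasible in $A$. First I would note that $A$ is a genuine orientation of $E$: every red edge and every blue loop of $\oriE$ is untouched and simply keeps its arc from $\oriA$, while each non-loop blue edge $xy$ of $\oriE$ has been replaced in $G$ by the two blue edges $xz$ and $zy$, which in $A$ receive the two arcs $(x,z),(z,y)$ when $(x,y)\in\Blueset{\oriA}$, and $(y,z),(z,x)$ when $(y,x)\in\Blueset{\oriA}$. Thus $A$ assigns exactly one arc to every edge of $E$.

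The key step is the claim that for every original vertex $v\in\oriV$ the quantity $\indegree{\Ared}(v)+2\indegree{\Ablue}(v)$ takes the same value whether it is computed in $A$ or in $\oriA$. Red arcs and blue loops at $v$ are literally the same in the two orientations, so they contribute equally. For a non-loop blue edge $e=vx$ of $\oriE$ with middle vertex $z$: if $e$ is oriented towards $v$ in $\oriA$ (i.e.\ $(x,v)\in\oriA$) then in $A$ the arc $(z,v)$ enters $v$, so $e$ contributes in-weight $2$ in both cases; if $e$ is oriented away from $v$ in $\oriA$ then in $A$ the arc $(v,z)$ leaves $v$, so $e$ contributes $0$ in both cases. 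Summing these contributions over all edges incident to $v$ gives the equality, and hence $v$ satisfies its in-weight requirement in $A$ if and only if it does in $\oriA$.

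Next I would observe that a middle vertex $z$ between $x$ and $y$ is incident in $G$ only to the two blue edges $xz$ and $zy$, and by construction $A$ orients them as a directed path through $z$, so exactly one of the two arcs enters $z$; hence $\indegree{\Ablue}(z)=1$ and $\indegree{\Ared}(z)+2\indegree{\Ablue}(z)=2$, i.e.\ every middle vertex is feasible in $A$ regardless of $\oriA$. Combining: if $\oriA$ is feasible then every $v\in\oriV$ meets its requirement in $\oriA$, hence in $A$, and together with the middle vertices being automatically feasible, $A$ is feasible; conversely, if $A$ is feasible then in particular every $v\in\oriV$ meets its requirement in $A$, hence in $\oriA$, so $\oriA$ is feasible. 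This is essentially bookkeeping; the only mild subtlety, and the point one must not skip, is to handle blue loops separately (they are not subdivided and are simply carried over) and to verify carefully that a subdivided blue edge contributes the same in-weight to its original endpoint in $A$ as in $\oriA$, which is exactly the case analysis above. I do not anticipate any genuinely difficult step.
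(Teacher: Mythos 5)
Your proof is correct and follows essentially the same route as the paper: middle vertices automatically have in-weight exactly $2$ because the two arcs replacing a subdivided blue edge form a directed path, and the in-weight at each original vertex is unchanged by the subdivision, so feasibility of $A$ reduces to feasibility of $\oriA$. Your write-up merely makes explicit the case analysis (loops, red edges, and the two orientations of a subdivided blue edge) that the paper leaves implicit.
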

\begin{proof}
	For each middle vertex $z \in \Vmid$ between $x$ and $y$, it holds that $\indegree{\Ared}(z) + 2 \cdot \indegree{\Ablue}(z) = 2$ because $(x,z) \in A$ and $(y,z) \not\in A$. 
	Thus, $A$ is feasible if and only if $\indegree{\Ared}(v) + 2 \cdot \indegree{\Ablue}(v) \ge 2$ for every $v \in V(G) \setminus \Vmid = V(\oriG)$. 
	Therefore, the lemma holds. 
\end{proof}

Let $(\oriG, \oriA_{\ini}, \oriA_{\tar})$ be a given instance of \ncl such that
$\oriG = (\oriV, \oriE)$ is an \andorsc graph.  For each $xy \in
\Blueset{\oriE}$ which is not a loop, we subdivide it by adding a new vertex
$z$; and let $\weig(xz) = \weig(zy) = 2$.  We call the newly inserted vertex
$z$ a \emph{middle vertex} between $x$ and $y$.  Let $G=(V,E)$ be the resulting
graph.  Note that $G$ is not an \andorsc graph, and $V$ can be partitioned into
$\Vand(G)$, $\Vor(G)$ and $\Vmid(G)$, where $\Vmid(G)$ (or simply $\Vmid$) is
the set of middle vertices in $G$.  Since $\Vor(G) = \Vor(\oriG)$ holds, we
will show in this subsection that solving $(\oriG, \oriA_{\ini}, \oriA_{\tar})$
with the parameter $|\Vor(\oriG)| \le k$ is equivalent to solving $(G,
A_{\ini}, A_{\tar})$ with the parameter $|\Vor(G)| = |\Vor(\oriG)| \le k$. 

Consider any orientation $\oriA$ of the original edge set $\oriE$.  Then, we
define an orientation $A$ of $E$, as follows: for each $(x,y) \in
\Blueset{\oriA}$ such that $x \neq y$, we delete $(x,y)$ from $\oriA$, and add
two arcs $(x,z)$ and $(z, y)$; let $A$ be the resulting orientation of $E$.  We
observe that the following lemma holds. 

In particular, we construct two (feasible) orientations $A_{\ini}$ and
$A_{\tar}$ of $E$ which correspond to the original (feasible) orientations
$\oriA_{\ini}$ and $\oriA_{\tar}$ of $\oriE$, respectively.  In this way, we
obtain the instance $(G, A_{\ini}, A_{\tar})$ of \ncl as the result of the
preprocessing to $(\oriG, \oriA_{\ini}, \oriA_{\tar})$.  Lemma~\ref{clm:cons2}
ensures that the preprocessing preserves the reconfigurability. 

\begin{proof}[Proof of Lemma~\ref{clm:cons2}]
	We first prove the if direction.  Assume that $(\oriG, \oriA_{\ini},
	\oriA_{\tar})$ is a $\yes$-instance, and hence there exists a
	reconfiguration sequence $\reconfseq{\oriA_0, \oriA_1, \dots ,
	\oriA_\ell}$ between $\oriA_0 = \oriA_{\ini}$ and $\oriA_{\ell} =
	\oriA_{\tar}$.  For each  $i \in \{0,1,\ldots, \ell\}$, we define the
	arc set $A_i$ by replacing $(x,y) \in \oriA_i$ with two arcs $(x,z),
	(z, y)$ for all middle vertices $z \in \Vmid$ that subdivide $xy \in
	\Blueset{\oriE}$.  Since each $\oriA_i$ is a feasible orientation of
	$\oriE$, Lemma~\ref{clm:cons1} says that $A_i$ is a feasible
	orientation of $E$.  In addition, by the definition, we know that $A_0
	= A_{\ini}$ and $A_\ell = A_{\tar}$.  Then, the following claim proves
	that $(G, A_{\ini}, A_{\tar})$ is a $\yes$-instance. 
	\begin{myclaim} \label{myclaim:preprocess1}
		$A_{i-1} \reachable A_{i}$ holds for each $i \in \{1,2,\ldots, \ell\}$.
	\end{myclaim}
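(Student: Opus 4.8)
We need to show that if $\oriA_{i-1} \onestep \oriA_i$ in $\oriG$, then the corresponding orientations $A_{i-1}$ and $A_i$ of $E$ satisfy $A_{i-1} \reachable A_i$. The plan is to do a case analysis on the single edge whose orientation differs between $\oriA_{i-1}$ and $\oriA_i$.

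**The approach.** Let $e = xy$ be the unique edge of $\oriG$ reversed between $\oriA_{i-1}$ and $\oriA_i$, say $(x,y) \in \oriA_{i-1}$ and $(y,x) \in \oriA_i$. First I would handle the easy case: if $e$ is red, or if $e$ is a blue loop, then $e$ is not subdivided, so $A_{i-1}$ and $A_i$ differ in exactly one arc; moreover, by Lemma~\ref{clm:cons1}, both $A_{i-1}$ and $A_i$ are feasible, so $A_{i-1} \onestep A_i$ and in particular $A_{i-1} \reachable A_i$. The interesting case is when $e = xy$ is a (non-loop) blue edge, subdivided by a middle vertex $z$. Then $A_{i-1}$ contains the arcs $(x,z),(z,y)$ while $A_i$ contains $(y,z),(z,x)$, so the two orientations differ in \emph{two} arcs and we must exhibit an intermediate feasible orientation. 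The natural candidate is the orientation $A'$ obtained from $A_{i-1}$ by first reversing $zy$ (getting arcs $(x,z),(y,z)$, i.e.\ both arcs point into $z$), and then from $A'$ reversing $xz$ (getting arcs $(z,x),(y,z)$), which is exactly $A_i$. So $A_{i-1} \onestep A' \onestep A_i$ provided $A'$ is feasible.

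**Why $A'$ is feasible.** In $A'$, the middle vertex $z$ has in-weight $4 \ge 2$, so its constraint holds. For any vertex $v \notin \{x,y,z\}$, the in-weight is unchanged from $A_{i-1}$, hence $\ge 2$. For $y$: in $A'$ the arc of $zy$ points into $z$ rather than into $y$, so $y$ loses in-weight $2$ compared to $A_{i-1}$; but $y$ gains nothing, so we need $\indegree{\Ared}(y) + 2\indegree{\Ablue}(y) \ge 2$ after removing this contribution. Here is the key point: in $\oriA_i$, the edge $xy$ points into $y$ is \emph{false} — wait, in $\oriA_i$ we have $(y,x)$, so in $\oriA_i$ the edge $xy$ does \emph{not} contribute to $y$'s in-weight, yet $\oriA_i$ is feasible, so $y$'s remaining incident edges already supply in-weight $\ge 2$ in $\oriA_i$, hence in $A_i$, hence in $A'$ (the other edges at $y$ are oriented identically in $A_{i-1}, A', A_i$). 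Similarly, in $A'$ the vertex $x$ still has $(x,z)$ pointing away, exactly as in $\oriA_{i-1}$ where $(x,y)$ points away, so $x$'s in-weight in $A'$ equals its in-weight in $A_{i-1}$ restricted to its other edges, which is $\ge 2$ because $\oriA_{i-1}$ is feasible with $xy$ oriented away from $x$. Hence $A'$ is feasible and $A_{i-1} \onestep A' \onestep A_i$, giving $A_{i-1} \reachable A_i$.

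**Main obstacle.** There is no deep obstacle; the only thing to be careful about is the bookkeeping in the blue non-loop case — specifically, making sure one reverses the two subdivided arcs in the correct order so that the middle vertex $z$ is the ``sink'' of in-weight $4$ in the intermediate step, rather than a ``source'' of in-weight $0$ (which would be infeasible). One should also note the symmetric subcase where $x$ and $y$ play swapped roles (i.e.\ if one instead wants to route via the other order), but by symmetry the argument is identical. Concatenating the short reconfiguration sequences obtained for each $i \in \{1,\dots,\ell\}$ then yields a reconfiguration sequence between $A_0 = A_{\ini}$ and $A_\ell = A_{\tar}$, proving that $(G, A_{\ini}, A_{\tar})$ is a $\yes$-instance and completing the claim.
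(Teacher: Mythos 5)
Your proposal is correct and follows essentially the same route as the paper: in the nontrivial case of a subdivided blue edge, both reverse the two arcs $zy$ and $xz$ in the order that makes the middle vertex $z$ a temporary sink of in-weight $4$, and verify feasibility of the intermediate orientation by comparing in-weights at $y$ with $A_i$ and at all other vertices with $A_{i-1}$. The only (harmless) omissions are the trivial subcase $\oriA_{i-1}=\oriA_i$ and a small slip of phrasing in the argument for $y$, which you immediately correct.
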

	\begin{proof}[Proof of Claim~\ref{myclaim:preprocess1}]
		If $\oriA_{i-1} = \oriA_i$ holds, then we have $A_{i-1} = A_i$ and hence the claim trivially holds.
		We thus assume that $\oriA_{i-1} \neq \oriA_i$. 
		Since $\oriA_{i-1} \onestep \oriA_i$, there exists an arc $(x,y) \in \oriA_{i-1}$ such that $\oriA_i = \oriA_{i-1} - (x,y) +(y,x)$ where $x \neq y$. 
		If the arc $(x,y) \in \oriA_{i-1}$ is red, then $(x,y) \in A_{i-1}$ and we have $A_{i-1} - (x,y) + (y,x) = A_i$. 
		Therefore, $A_{i-1} \onestep A_i$ holds, and hence we have $A_{i-1} \reachable A_i$.  

		We thus consider the remaining case, that is, the arc $(x,y) \in \oriA_{i-1}$ is blue and $x \neq y$.
		Let $z$ be the middle vertex between $x$ and $y$. 
		Then, we know that $A_{i-1} \setminus A_i = \{(x,z), (z,y)\}$ and $A_i \setminus A_{i-1} = \{(z,x), (y,z)\}$.
		Let $A := A_{i-1} - (z,y) + (y,z)$, then $A_{i-1} \onestep A$ holds.
		We now prove that $\reconfseq{A_{i-1}, A, A_i}$ is a reconfiguration sequence between $A_{i-1}$ and $A_i$. 
		Note that $A \onestep A_i$ holds since $A_i = A - (x,z) + (z,x)$. 
		Therefore, it suffices to show that $A$ is a feasible orientation of $E$. 
		To see this, we observe that for each $v \in V(G)$, 
		\[
			\indegree{\Ared}(v) + 2 \cdot \indegree{\Ablue}(v) = \begin{cases}
				4 & \mbox{if $v=z$};\\
				\indegree{\Ared_i}(v) + 2 \cdot \indegree{\Ablue_i}(v) & \mbox{if $v=y$}; \\
				\indegree{\Ared_{i-1}}(v) + 2 \cdot \indegree{\Ablue_{i-1}}(v) & \mbox{otherwise}.
			\end{cases}
		\]
		Since both $A_{i-1}$ and $A_i$ are feasible orientations of $E$, it thus holds that $\indegree{\Ared}(v) + 2 \cdot \indegree{\Ablue}(v) \ge 2$ for all $v \in V(G)$.
		Therefore, $A$ is also feasible, and hence we have $A_{i-1} \reachable A_i$. 
	\end{proof}
	This completes the proof of the if direction. 
	\medskip

	We then prove the only-if direction.  Assume that $(G, A_{\ini},
	A_{\tar})$ is a $\yes$-instance, and hence there exists a
	reconfiguration sequence $\reconfseq{A_0, A_1, \dots , A_\ell}$ between
	$A_0 = A_{\ini}$ and $A_{\ell} = A_{\tar}$.  For each $i \in
	\{0,1,\ldots,\ell\}$, we define an orientation $\oriA_i$ of $\oriE$
	from $A_i$, as follows: for each middle vertex $z$ in $G$ which
	subdivides $xy \in \Blueset{\oriE}$, 
	\begin{itemize}
		\item if $(z,y) \in A_i$, then we replace two arcs $(x,z), (z,y) \in A_i$ with a single arc $(x,y)$; 
		\item otherwise we replace two arcs containing $z$ with a single arc $(y,x)$;
	\end{itemize}
	let $\oriA_i$ be the resulting arc set.  Note that both $\oriA_0 =
	\oriA_{\ini}$ and $\oriA_{\ell} = \oriA_{\tar}$ hold.  We observe that
	each $\oriA_i$ is a feasible orientation of $\oriE$, because $A_i$ is a
	feasible orientation of $E$ and it holds that 
	\[
		\indegree{\Redset{\oriA}_i}(v) + 2 \cdot \indegree{\Blueset{\oriA}_i}(v) \ge \indegree{\Ared_i}(v) + 2 \cdot \indegree{\Ablue_i}(v)
	\]
	for each $v \in V(\oriG) = V(G)\setminus \Vmid$.  Then, we have the
	following claim. 
	\begin{myclaim} \label{myclaim:preprocess2}
		For each $i \in \{1,2,\ldots, \ell\}$, it holds that $\oriA_{i-1} \onestep \oriA_{i}$.
	\end{myclaim}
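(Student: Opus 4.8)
The plan is to reverse the argument used in Claim~\ref{myclaim:preprocess1}: there we pushed a single move in $\oriG$ up to a (possibly two-move) transformation in $G$; here we must show that collapsing a single move in $G$ either does nothing or yields exactly one move in $\oriG$. Fix $i \in \{1,2,\ldots,\ell\}$. Since $A_{i-1} \onestep A_i$, either $A_{i-1} = A_i$, in which case the contracted orientations $\oriA_{i-1}$ and $\oriA_i$ are literally equal and there is nothing to prove, or there is a single arc $(x,y) \in A_{i-1}$ with $A_i = A_{i-1} - (x,y) + (y,x)$. I would split into cases according to the type of the edge $xy$ in $G$.

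\emph{Case 1: $xy$ is a red edge of $G$.} Red edges of $G$ are precisely the red edges of $\oriG$ (no subdivision touched them), and the contraction map leaves red arcs untouched. Hence $\oriA_i = \oriA_{i-1} - (x,y) + (y,x)$ directly, so $\oriA_{i-1} \onestep \oriA_i$. \emph{Case 2: $xy$ is a blue edge of $G$.} Then $xy$ is a half of a subdivided blue edge of $\oriG$, say the original edge is $pq$ and $xy$ is either $pz$ or $zq$ for the middle vertex $z$. Reversing one of the two arcs incident to $z$ changes which of the two "states" $z$ is in; by the definition of the contraction (which reads off the orientation of $pq$ from whether $(z,q) \in A_i$), this either leaves the contracted arc on $pq$ unchanged — if the reversed arc is the one not determining the direction, i.e.\ the move was $A_{i-1} \to A_i$ toggling the arc between the "both point into $z$" and "split" configurations in a way that keeps $(z,q)$'s status — or it flips the contracted arc on $pq$ from $(p,q)$ to $(q,p)$ (or vice versa). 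In the first subcase $\oriA_{i-1} = \oriA_i$; in the second $\oriA_{i-1} \onestep \oriA_i$. In either event the conclusion holds.

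The one point that needs care — and the main obstacle — is the bookkeeping in Case~2: one must check that a single-arc reversal in $G$ never simultaneously changes the contracted direction of $pq$ and leaves $z$ in a feasible-but-"both-in" state in a way the contraction can't read off, i.e.\ that the two arcs incident to $z$ always carry enough information to make the contraction well defined after one move. This follows because at every step $A_i$ is feasible, so $z$ has in-weight at least $2$, meaning at least one of $(p,z),(z,q)$ points into $z$ — exactly the invariant that makes the map $A_i \mapsto \oriA_i$ well defined — and a single reversal preserves feasibility by hypothesis. Once this invariant is in hand, the case analysis above is routine. Combining all $i$, the sequence $\reconfseq{\oriA_0,\dots,\oriA_\ell}$ (after deleting consecutive repetitions) is a reconfiguration sequence between $\oriA_\ini$ and $\oriA_\tar$, completing the only-if direction and hence the proof of Lemma~\ref{clm:cons2}.
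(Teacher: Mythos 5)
Your proof is correct and takes essentially the same route as the paper's: a case split on whether the reversed arc is red or blue, with the blue case settled by observing that the contracted direction of the original edge is read off from the arc between the middle vertex and one designated endpoint, so a single reversal either leaves $\oriA_{i-1}=\oriA_i$ or flips exactly that one contracted arc, the well-definedness invariant being supplied by feasibility of each $A_i$. If anything, your explicit treatment of the subcase where the arc toward the designated endpoint is the one reversed is slightly more careful than the paper's wording of that subcase.
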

	\begin{proof}[Proof of Claim~\ref{myclaim:preprocess2}]
		If $A_{i-1} = A_i$ holds, then we have $\oriA_{i-1} = \oriA_i$ and hence the claim holds.
		We thus assume that $A_{i-1} \neq A_i$.
		Since $A_{i-1} \onestep A_i$, there exists an arc $(u,v) \in A_{i-1}$ such that $A_i = A_{i-1} - (u,v) +(u,v)$ where $u \neq v$.  
		If the arc $(u,v) \in A_{i-1}$ is red, then $(u,v) \in \oriA_{i-1}$ and we have $\oriA_{i-1} - (u,v) + (v,u) = \oriA_i$. 
		Therefore, $\oriA_{i-1} \onestep \oriA_i$ holds.  

		We thus consider the remaining case, that is, the arc $(u,v) \in A_{i-1}$ is blue and $u \neq v$.
		Then, we know that either $u$ or $v$ is a middle vertex $z$ which was inserted to a blue edge $xy \in \Blueset{\oriE}$. 
		If $(u,v) = (z,y) \in A_{i-1}$, then $(y,z) \in A_i$ and hence we have $(x,y) \in \oriA_{i-1}$ and $\oriA_{i-1} - (x,y) + (y,x) = \oriA_i$; it thus holds that $\oriA_{i-1} \onestep \oriA_i$.  
		If $(u,v) \neq (z,y)$, then we have $\oriA_{i-1} = \oriA_i$ and hence  $\oriA_{i-1} \onestep \oriA_i$.
	\end{proof}

	By Claim~\ref{myclaim:preprocess2} we can obtain a reconfiguration
	sequence between $\oriA_{\ini} = \oriA_0$ and $\oriA_{\tar} =
	\oriA_\ell$ as a sub-sequence of $\reconfseq{\oriA_0, \oriA_1, \ldots,
	\oriA_\ell}$ by ignoring repetitions of the same orientations.  Thus,
	$(\oriG, \oriA_{\ini}, \oriA_{\tar})$ is a $\yes$-instance.  This
	completes the proof of the only-if direction. 
\end{proof}

In this way, to prove Theorem~\ref{the:FPT_OR}, it suffices to solve the
instance $(G, A_{\ini}, A_{\tar})$ with the parameter $|\Vor(G)| =
|\Vor(\oriG)| \le k$.  Recall that $V(G)$ can be partitioned into three subsets
$\Vand(G)$, $\Vor(G)$ and $\Vmid(G)$.  By the construction of $G$, observe that
$G$ has no multiple blue edges.  In addition, no edge in $G$ joins an \andsc
vertex and an \orsc vertex, and hence we can partition $E$ into two subsets
$\Eand$ and $\Eor$, defined as follows: $\Eand$ is the set of edges in $G$ that
are incident to \andsc vertices in $V(G)$; and $\Eor$ is the set of edges in
$G$ that are incident to \orsc vertices in $V(G)$. 

\subsection{FPT-reduction to \BCSR}

In this subsection, we construct an FPT-reduction to \BCSR.
Recall that \BCSR can be solved in time $O^*(\doms^{O(\NB)})$, where $\doms :=
\max_{x \in X} |\Dom{x}|$ and $\NB$ is the number of non-Boolean variables in
$X$~\cite[Theorem~18]{HIZ18}.  Therefore, as a proof of
Theorem~\ref{the:FPT_OR}, we construct an FPT-reduction from a preprocessed
instance $(G, A_{\ini}, A_{\tar})$ of \ncl with the parameter $|\Vor(G)| \le k$
to an instance $(H,\Dommap,\Cnsmap,\mapf_\ini,\mapf_\tar)$ of \BCSR such that
both $\doms$ and $\NB$ are bounded by some computable functions depending only
on $k$.

We first construct the set $\Var$ of variables, as follows:
\begin{itemize}
	\item for each $v \in \Vor(G)$, we introduce a variable $x_v$, called an \emph{\orsc variable}; and 
	\item for each $e\in \Eand$, we introduce a variable $x_e$, called an \emph{edge variable}.
\end{itemize}

We then construct the domain $\Dom{x}$ for each $x \in \Var$, as follows:
\begin{itemize}
	\item If $x$ is an \orsc variable $x_v$ for $v \in \Vor(G)$, then we consider the following two cases: 
		\begin{itemize}
			\item Consider the case where there is a loop $vv \in \Eor$.
				Since $v$ is an \orsc vertex, it has exactly one blue edge $vv' \in \Eor$ such that $v'$ is a middle vertex.
				In this case, let $\Dom{x_v} := \{\emptyset, \{v'\}\}$. 
				We regard that assigning $\{v'\} \in \Dom{x_v}$ to $x_v$ corresponds to directing the edge $vv'$ as $(v,v')$, while $\emptyset \in \Dom{x_v}$ to $(v',v)$. 
				Note that the loop $vv$ has only one possible direction, and any orientation of $E$ contains the arc $(v,v)$.

			\item Consider the other case, that is, $vv \not\in \Eor$.
				Since $G$ has no multiple blue edges, $v$ has three distinct neighbors (middle vertices), say $v_1$, $v_2$ and $v_3$, in $G$.
				In this case, let $\Dom{x_v} := \{ \emptyset, \{v_1 \}, \{v_2 \}, \{v_3 \}, \{v_1, v_2 \}, \{v_2, v_3 \}, \{v_3 , v_1 \} \}$.
				We regard that assigning a set $S \in \Dom{x_v}$ to $x_v$ corresponds to directing the three edges $vv_1$, $vv_2$ and $vv_3$ as follows: $(v,v')$ if $v' \in S$, and $(v',v)$ if $v' \in \{v_1, v_2, v_3\} \setminus S$. 
		\end{itemize} 
	\item If $x$ is an edge variable $x_e$ for $e = uv \in \Eand$, then let $\Dom{x_e} := \{ \{u\}, \{v\} \}$. 
		We regard that assigning $\{u\} \in \Dom{x_e}$ to $x_e$ corresponds to directing $uv$ as $(v,u)$, while $\{v\} \in \Dom{x_e}$ to $(u,v)$.
\end{itemize}
Since $E$ can be partitioned into $\Eor$ and $\Eand$, any solution $\mapf$ of $H$ defines an orientation of $E$.
Conversely, any orientation of $E$ defines a solution $\mapf$ of $H$. 
We note that $\doms = \max_{x \in X} |\Dom{x}| = 7$. 
Furthermore, notice that only \orsc variables $x_v$ without loops are non-Boolean variables, and the other variables are Boolean variables.
Therefore, $\NB \le |\Vor(G)| \le k$, where $\NB$ is the number of non-Boolean variables in $X$. 

We finally construct the set of constraints, which also defines the set of edges in $H$.
Our aim here is to ensure that a solution $\mapf$ of $G$ is proper if and only if the corresponding orientation $A_{\mapf}$ of $E$ is feasible. 
By the construction of domains above, we know that $\indegree{\Ared_{\mapf}}(v) + 2 \cdot \indegree{\Ablue_{\mapf}}(v) \ge 2$ holds for each $v \in \Vor(G)$.
Therefore, we construct three types of constraints for middle vertices and \andsc vertices, as follows:
\begin{description}
	\item[Type 1:] Constraints for middle vertices.

		Let $v$ be a middle vertex between two vertices $v_1$ and $v_2$.
		Since both $v_1 v$ and $v v_2$ are blue edges, $\indegree{\Ared_{\mapf}}(v) + 2 \cdot \indegree{\Ablue_{\mapf}}(v) \ge 2$ holds if and only if $(v_1,v) \in A_{\mapf}$ or $(v_2,v) \in A_{\mapf}$ hold. 
		For each $i \in \{1,2\}$, let
		\[
			x_i = \begin{cases}
				x_{v_i} & \mbox{if $v_i$ is an \orsc vertex};\\
				x_{vv_i} & \mbox{otherwise}.
			\end{cases}
		\]
		Then, we let $\Cns{x_1 x_2} := \{ S_1 S_2 \in \Dom{x_1} \times \Dom{x_2} \midcolon v \in S_1 \mbox{ or } v \in S_2 \}$.

	\item[Type 2-1:] Constraints for \andsc vertices having loops.

		Let $v$ be an \andsc vertex having a loop $vv$. 
		Since $v$ is an \andsc vertex, we know that $vv$ must be red, and the remaining edge $vv_3 \in \Eand$ is blue where $v_3$ is a middle vertex. 
		Then, $\indegree{\Ared_{\mapf}}(v) + 2 \cdot \indegree{\Ablue_{\mapf}}(v) \ge 2$ holds if and only if $(v_3,v) \in A_{\mapf}$.
		Since $vv$ and $vv_3$ are in $\Eand$, there are corresponding edge variables $x_{vv}$ and $x_{vv_3}$.
		Then, we let $\Cns{x_{vv} x_{vv_3}}:=\{ S S_3 \in \Dom{x_{vv}} \times \Dom{x_{vv_3}} \midcolon v \in S_3 \}$.

	\item[Type 2-2:] Constraints for \andsc vertices having no loop.

		Let $v$ be an \andsc vertex, and let $vv_1$, $vv_2$, $vv_3$ be three (distinct) edges incident to $v$ such that $vv_1$ and $vv_2$ are red, and $vv_3$ is blue; it may hold that $v_1 =v_2$.
		Then, $\indegree{\Ared_{\mapf}}(v) + 2 \cdot \indegree{\Ablue_{\mapf}}(v) \ge 2$ holds if and only if $A_{\mapf}$ satisfies both the following two conditions:
		\begin{enumerate}
			\item it holds that $(v_1,v) \in A_{\mapf}$ or $(v_3,v) \in A_{\mapf}$; and 
			\item it holds that $(v_2,v) \in A_{\mapf}$ or $(v_3,v) \in A_{\mapf}$.
		\end{enumerate}
		Since $vv_1$, $vv_2$ and $vv_3$ are in $\Eand$, there are corresponding edge variables $x_{vv_1}$, $x_{vv_2}$ and $x_{vv_3}$.
		Then, we let $\Cns{x_{vv_1} x_{vv_3}}:=\{ S_1 S_3 \in \Dom{x_{vv_1}} \times \Dom{x_{vv_3}} \midcolon v \in S_1 \mbox{ or } v \in S_3 \}$, and $\Cns{x_{vv_2} x_{vv_3}}:=\{ S_2 S_3 \in \Dom{x_{vv_2}} \times \Dom{x_{vv_3}} \midcolon v \in S_2 \mbox{ or } v \in S_3 \}$.
\end{description}
By the construction of constraints above, we know that a solution $\mapf$ of
$G$ is proper if and only if the corresponding orientation $A_{\mapf}$ of $E$
is feasible.  Therefore, we can define proper solutions $\mapf_{\ini}$ and
$\mapf_{\tar}$ of $H$ which correspond to feasible orientations $A_{\ini}$ and
$A_{\tar}$ of $E$, respectively. 

In this way, from a preprocessed instance $(G, A_{\ini}, A_{\tar})$ of \ncl
with the parameter $|\Vor(G)| \le k$, we have constructed the corresponding
instance $(H,\Dommap,\Cnsmap,\mapf_\ini,\mapf_\tar)$ of \BCSR such that $\doms
= \max_{x \in X} |\Dom{x}| = 7$ and $\NB \le |\Vor(G)| \le k$.  In addition, we
have shown that there is a one-to-one correspondence between proper solutions
of $H$ and feasible orientations of $E$.  Since \BCSR can be solved in time
$O^*(\doms^{O(\NB)})$~\cite{HIZ18}, the following lemma completes the proof of
Theorem~\ref{the:FPT_OR} for \ctoc.
\begin{lemma} \label{lem:equivalence}
	$(G, A_{\ini}, A_{\tar})$ is a $\yes$-instance of \ncl if and only if $(H,\Dommap,\Cnsmap,\mapf_{\ini},\mapf_{\tar})$ is a $\yes$-instance of \BCSR.
\end{lemma}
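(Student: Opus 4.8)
The plan is to prove both directions by translating reconfiguration sequences, exploiting the one-to-one correspondence between orientations $A$ of $E$ and solutions $\mapf$ of $H$ set up above, under which $A$ is feasible exactly when $\mapf$ is proper. I will also use the structural facts established there: $E$ partitions into $\Eand$ and $\Eor$; every edge of $\Eor$ is incident to a unique \orsc vertex and to a middle vertex; and for each \orsc variable $x_v$ the value $\mapf(x_v)$ records precisely the set of edges at $v$ oriented \emph{away} from $v$. For the only-if direction, suppose $\reconfseq{A_0,\dots,A_\ell}$ is a reconfiguration sequence of feasible orientations of $E$, and map it to $\reconfseq{\mapf_0,\dots,\mapf_\ell}$ of proper solutions; the point is that a single edge reversal changes the value of exactly one variable. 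If the reversed edge $e$ lies in $\Eand$ it is exactly the edge-variable $x_e$; if $e$ lies in $\Eor$, say incident to the \orsc vertex $v$, then reversing it adds or removes the corresponding middle vertex from $\mapf(x_v)$, again a change of a single variable. In the latter case one confirms the new value still lies in $\Dom{x_v}$: in any feasible orientation the away-set at $v$ has size at most two (an \orsc vertex with all three incident edges oriented away would have in-weight $0$), which is exactly why the full set was excluded from $\Dom{x_v}$; if $v$ carries a loop, $x_v$ is Boolean and there is nothing to check. Deleting consecutive repetitions then gives a valid \BCSR reconfiguration sequence, so the \BCSR instance is a $\yes$-instance.

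The if direction is the substantive one. Given a reconfiguration sequence $\reconfseq{\mapf_0,\dots,\mapf_\ell}$ of proper solutions, pass to the corresponding feasible orientations $A_0,\dots,A_\ell$ and show that consecutive $A_{i-1},A_i$ are joined by a reconfiguration subsequence. A change in an edge-variable gives $A_{i-1}\onestep A_i$ directly. The subtlety is a change in an \orsc variable $x_v$: here $\mapf_{i-1}(x_v)$ and $\mapf_i(x_v)$ may differ in up to three elements, so $A_{i-1}$ and $A_i$ may disagree on the orientation of up to three edges $vv_1,vv_2,vv_3$ incident to $v$, and this disagreement must be realized by single edge reversals through feasible orientations. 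I claim the ``remove, then add'' order works: first reverse, one at a time, the edges that switch from pointing to some $v_j$ to pointing to $v$ (shrinking the away-set at $v$ down to $\mapf_{i-1}(x_v)\cap\mapf_i(x_v)$), then reverse the remaining ones in the opposite direction.

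To see that this keeps every orientation feasible, I use two observations. First, reversing $vv_j$ alters the in-weight only at $v$ and at the middle vertex $v_j$, so every other vertex keeps the in-weight it had in the feasible orientation $A_{i-1}$ (resp.\ $A_i$). Second, for each $j$ for which $vv_j$ genuinely changes direction between $A_{i-1}$ and $A_i$, the middle vertex $v_j$ is ``safe'': in whichever of the two orientations $vv_j$ points away from $v_j$, feasibility at $v_j$ forces its other (blue) incident edge to point towards $v_j$, and since that edge is never reoriented in our construction, $v_j$ keeps in-weight at least $2$ no matter how $vv_j$ is oriented. Finally, along the ``remove, then add'' path the away-set at $v$ is always a subset of $\mapf_{i-1}(x_v)$ or of $\mapf_i(x_v)$, hence of size at most two, so at least one edge at $v$ always points towards $v$ and $v$ keeps in-weight at least $2$; if $v$ carries a loop the $x_v$-change is already a single edge reversal and the loop only helps. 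Concatenating the subsequences produces a reconfiguration sequence from $A_\ini$ to $A_\tar$, so the \ncl instance is a $\yes$-instance.

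The main obstacle is precisely this expansion step in the if direction: a single \BCSR move on an \orsc variable is in general not a single \ncl move, and one must check that the obvious simulation stays within the set of feasible orientations. Once the ``remove, then add'' ordering and the safety of the affected middle vertices are in place, the rest is routine bookkeeping.
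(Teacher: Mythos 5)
Your proof is correct and follows essentially the same route as the paper: translate sequences via the orientation--solution correspondence, and in the if direction expand a single \orsc-variable change into at most three edge reversals using exactly the ``orient toward $v$ first, then away'' ordering of the paper's Case 2-2. Your explicit feasibility check for the intermediate orientations (the away-set at $v$ staying within $\mapf_{i-1}(x_v)\cup\mapf_i(x_v)$ and hence of size at most two, plus the safety of the affected middle vertices) is a welcome elaboration of a step the paper merely asserts.
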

\begin{proof}
	We first prove the only-if direction.
	Assume that $(G, A_{\ini}, A_{\tar})$ is a $\yes$-instance, and hence there exists a reconfiguration sequence $\reconfseq{A_0, A_1, \dots , A_\ell}$ of feasible orientations of $E$ between $A_0 = A_{\ini}$ and $A_\ell = A_{\tar}$.
	For each $i \in \{0,1,\ldots, \ell \}$, let $\mapf_i$ be the proper solution of $H$ defined by $A_i$.
	Then, we know that $\mapf_0 = \mapf_\ini$ and $\mapf_\ell = \mapf_\tar$ hold. 
	To show that $(H,\Dommap,\Cnsmap,\mapf_{\ini},\mapf_{\tar})$ is a $\yes$-instance, it thus suffices to prove that $\mapf_{i-1} \onestep \mapf_i$ holds for each $i \in \{1,2,\ldots, \ell \}$.
	Since $A_{i-1} \onestep A_i$, there exists an arc $(u,v) \in A_{i-1}$ such that $A_i = A_{i-1} - (u,v) +(v,u)$.  
	We know that the edge $uv$ in $G$ is either in $\Eand$ or in $\Eor$.  
	If $uv \in \Eand$, then we have $|\{x \in \Var \midcolon \mapf_{i-1}(x) \neq \mapf_i(x) \}| = |\{x_{uv}\}| = 1$.
	Otherwise (i.e., if $uv \in \Eor$), then we have $|\{x \in \Var \midcolon \mapf_{i-1}(x) \neq \mapf_i(x)\}| = |\{x_v\}| = 1$ where we assume without loss of generality that $v$ is an \orsc vertex and $u$ is a middle vertex.
	Therefore, $\mapf_{i-1} \onestep \mapf_i$ holds for both cases, as claimed. 
	This completes the proof of the only-if direction. 

	We then prove the if direction.
	Assume that $(H,\Dommap,\Cnsmap,\mapf_{\ini},\mapf_{\tar})$ is a $\yes$-instance, and hence there exists a sequence $\reconfseq{\mapf_0, \mapf_1, \ldots , \mapf_\ell}$ of proper solutions of $H$ such that $\mapf_0 = \mapf_\ini$, $\mapf_\ell = \mapf_\tar$, and $\mapf_{i-1} \onestep \mapf_i$ holds for each $i \in \{1,2,\ldots, \ell \}$.
	For each $i \in \{0,1,\ldots, \ell \}$, let $A_i$ be the feasible orientation of $E$ defined by $\mapf_i$.
	Then, we know that $A_0 = A_\ini$ and $A_\ell = A_\tar$ hold. 
	To show that $(G, A_{\ini}, A_{\tar})$ is a $\yes$-instance, we thus prove that $A_{i-1} \reachable A_i$ holds for each $i \in \{1,2,\ldots, \ell \}$. 
	Since $\mapf_{i-1} \onestep \mapf_i$, there exists exactly one variable $x \in \Var$ such that $\mapf_{i-1}(x) \neq \mapf_i(x)$. 
	Then, we consider the following three cases.
	\begin{description}
		\item[Case 1.] $x$ is an edge variable $x_{e}$ for $e \in \Eand$.

			In this case, we know that the difference between $A_{i-1}$ and $A_i$ is only the direction of $e$. 
			Therefore, $A_i$ can be obtained from $A_{i-1}$ by reversing the direction of $e$, and hence we have $A_{i-1} \onestep A_i$.
			Thus, $A_{i-1} \reachable A_i$ holds.

		\item[Case 2-1.] $x$ is an \orsc variable $x_v$ for $v \in \Vor(G)$ having a loop $vv \in \Eor$. 

			In this case, we know that the difference between $A_{i-1}$ and $A_i$ is only the direction of $vv' \in \Eor$, where $v'$ is the (unique) middle vertex adjacent to $v$. 
			Therefore, we have $A_{i-1} \onestep A_i$, and hence $A_{i-1} \reachable A_i$ holds.

		\item[Case 2-2.]	$x$ is an \orsc variable $x_v$ for $v \in \Vor(G)$ having no loop. 

			Let $v_1$, $v_2$, $v_3$ be three (distinct) middle vertices adjacent to $v$.
			Recall that a solution $\mapf$ of $H$ defines the directions of three edges $vv_1$, $vv_2$ and $vv_3$ in the corresponding orientation of $E$, as follows:
			$(v,v')$ if $v' \in \mapf(v)$, and $(v',v)$ if $v' \in \{v_1, v_2, v_3\} \setminus \mapf(v)$. 
			Then, we construct a sequence of orientations of $E$ between $A_{i-1}$ and $A_i$, as follows: 
			\begin{enumerate}
				\item for each $v' \in \mapf_{i-1}(v) \setminus \mapf_i(v)$, reverse the direction of $vv'$ from $(v, v')$ to $(v', v)$ one by one; and 
				\item for each $v' \in \mapf_i(v) \setminus \mapf_{i-1}(v)$, reverse the direction of $vv'$ from $(v', v)$ to $(v, v')$ one by one. 
			\end{enumerate}
			Let $\reconfseq{A^0, A^1, \ldots, A^q}$ be the sequence of orientations of $G$ defined as above, where $A^0 := A_{i-1}$ and $q := |\mapf_{i-1}(v) \symdiff \mapf_i(v)|$.
			By the construction of the sequence, we know that $A^q = A_i$ and $A^{j-1} \onestep A^j$ for each $j \in \{1,2,\ldots, q\}$. 
			Furthermore, 
			all orientations $A^j$ are feasible.
			Thus, $\reconfseq{A^0, A^1, \ldots, A^q}$ is a reconfiguration sequence between $A^0 = A_{i-1}$ and $A^q = A_i$, and hence we have $A_{i-1} \reachable A_i$. 
	\end{description}
	This completes the proof of the if direction.
\end{proof}

It remains to prove the statement for \ctoe.
For the \ctoc case, we give a reduction from \ncl to \BCSR.
We use the same reduction for \ctoe case.
While an FPT-algorithm for \ctoc of \BCSR is given in~\cite{HIZ18}, that for \ctoe is not.
However, we can simply improve the FPT-algorithm for \ctoc to \ctoe as follows:
In the FPT-algorithm of~\cite{HIZ18}, the authors first construct a contracted solution graph (CSG).
Then they determine whether there is a path on CSG between two nodes corresponding to the initial and the target solutions.
Since the size of CSG is FPT-size, the algorithm takes only FPT-time.
Then for the \ctoe case, it is enough to determine whether there is a path on CSG between two nodes corresponding to the initial solution and any solution we wish.

In the remaining part of this proof, we show that the edge we wish to reverse in \ncl is which variable in \BCSR.
In the preprocessing of our reduction, we subdivide each blue edge that is not a loop into two blue edges.
Let $G$ be the graph before the preprocessing and $\oriG$ be a graph after preprocessing.
Let $(u^*, v^*)$ be the orientation in $A_{\ini}$ of $G$ that we wish to reverse.
Note that the edge $u^* v^*$ may be subdivided into $u^* z^*$ and $z^* v^*$ in the preprocessing.
In other words, reversing $(u^*, v^*)$ in $G$ corresponds to reversing both of $(u^*, z^*)$ and $(z^*, v^*)$ in $\oriG$.
However, since $z^* v^*$ is a blue edge, after $(z^*, v^*)$ is reversed, $z^*$ has enough in-weight and we can always reverse $(u^*, z^*)$.
Therefore, reversing $(u^*, v^*)$ in $G$ corresponds to reversing only $(z^*, v^*)$ in $\oriG$.

Let $e^*$ be the edge of $\oriG$ that we wish to reverse.
If $e^*$ is incident to an \orsc vertex, the three edges including $e^*$ in \ncl correspond to one non-Boolean variable in \BCSR, say $v$.
While $v$ can be assigned seven values, four of them correspond to incoming direction of $e^*$, and three of them correspond to outgoing direction of $e^*$.
Therefore, we need to search for a path between a node corresponding to the initial orientation
and any node such that the value of $v$ corresponds to a different direction of $e^*$ from the initial one.

If $e^*$ is incident to an \andsc vertex, $e^*$ in \ncl corresponds to a Boolean variable in \BCSR.
Therefore, reversing the direction of $e^*$ corresponds to changing a value of this variable.
In this case, a node corresponds to the initial orientation and a node corresponds to an orientation we wish to obtain. However, some nodes might be contracted in CSG.
Before some nodes are contracted, some variables of which we cannot change the value at all
are deleted.
Therefore, if $e^*$ corresponds to such a variable, we answer NO.
Otherwise, we only check whether there exists at least one feasible solution in \BCSR such that the variable has a different value from the initial one.
Since contracted nodes in CSG are always connected, if there exists such a feasible solution, we answer YES, and otherwise NO.

By above discussion, our reduction for \ctoc also works for \ctoe.

\section{Proofs Omitted from Section~\ref{sec:red}}

\begin{proof}[Proof of Proposition~\ref{prop:reduction}]
  We prove that the four reduction rules are safe one by one. We first show
  that Reduction Rule~\ref{red-rule:onecycle} is safe.

  Let $C$ be a component of $G$ that is a blue chordless cycle.  Observe that
  no arc on $C$ can be reversed. Therefore, if $A_{\rm ini}$ and $A_{\rm tar}$
  disagree on $C$, then we have a \NO instance.  On the other hand, if $A_{\rm
  ini}$ and $A_{\rm tar}$ agree on $C$ then we may remove the component $C$
  from $G$ and continue.

  We now show that Reduction Rule~\ref{red-rule:twocycles} is safe.
  Let $C$ be a component of $(V, E^{\rm blue})$ containing at least two cycles $K_1$ and $K_2$ and let
  $k_C$ be the number of red vertices of $C$. Without loss of generality we
  assume that no proper subset of $E(K_1) \cup E(K_2)$ contains two distinct
  cycles.
  Let $G_C := (V(C), E^{\rm blue} \cap E(C))$ be the graph induced by the vertices of the 
  component $C$ of the blue subgraph of $G$ and let $A_{\rm ini}$ and $A_{\rm tar}$ be the
  start and target orientations of $E$.
  We first prove the following two claims.
  \setcounter{myclaim}{0}
  \begin{myclaim}
    There is a feasible orientation $A^\circ_{\rm ini}$ of the constraint graph $G$ with
    the following properties:
    \begin{enumerate}
      \label{myclaim:alpha1-alpha2}
      \item The orientations $A_{\rm ini}$ and $A^\circ_{\rm ini}$ agree on $E - E(C)$,
      \item $K_1$ and $K_2$ are oriented cycles with respect to $A^\circ_{\rm ini}$, 
      \item each vertex $v$ of $C$ has at least one edge in $E(C)$ oriented towards $v$ by $A^\circ_{\rm ini}$, and
      \item there is a transformation from $A_{\rm ini}$ to $A^\circ_{\rm ini}$. 
    \end{enumerate} 
  \end{myclaim}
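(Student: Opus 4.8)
The plan is to split the four properties into two parts. Properties~1--3 concern a single orientation, so I would first exhibit an explicit feasible orientation $A^\circ_{\rm ini}$ satisfying them, and only afterwards address property~4, the reconfigurability, by transforming $A_{\rm ini}$ into $A^\circ_{\rm ini}$ using only reversals of edges of $E(C)$, so that property~1 is automatically preserved along the way. Write $G_C=(V(C),\,E^{\rm blue}\cap E(C))$ for the connected blue subgraph carried by $C$. Since $C$ contains the two distinct cycles $K_1$ and $K_2$, the cycle space of $G_C$ has dimension at least two, hence $|E(C)|\ge |V(C)|+1$; this slack is what the whole argument rests on.

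\emph{Construction of $A^\circ_{\rm ini}$.} The assumed minimality of $E(K_1)\cup E(K_2)$ pins down the shape of $K_1\cup K_2$: if it is disconnected it must be two vertex-disjoint cycles, and if it is connected then, being a minimal connected graph whose cycle space has dimension two, it is bridgeless with exactly two independent cycles, so by an ear decomposition it is either two cycles sharing a single vertex or a theta graph (two vertices joined by three internally disjoint paths, with $K_1$ and $K_2$ each consisting of two of the three). In every case one can orient $E(K_1)\cup E(K_2)$ so that both $K_1$ and $K_2$ become directed cycles while every vertex of $K_1\cup K_2$ gets at least one of these edges pointing toward it: in the first two cases orient each $K_i$ cyclically; in the theta case orient the common path as a directed path from one apex to the other and each private path as a directed path the other way. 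Extend this to all of $E(C)$ by fixing a spanning tree $T$ of $G_C$, regarding $V(K_1)\cup V(K_2)$ as its root, orienting every edge of $T$ from parent to child, and orienting every remaining edge of $E(C)$ arbitrarily. Let $A^\circ_{\rm ini}$ agree with $A_{\rm ini}$ on $E-E(C)$ and with this orientation on $E(C)$. Properties~1 and~2 hold by construction, and every vertex of $C$ has an $E(C)$-edge pointing toward it -- from the orientation of $K_1\cup K_2$ if it lies there, from its tree parent otherwise -- which is property~3. As $E(C)$ is entirely blue, property~3 gives every vertex of $C$ in-weight at least $2$, and vertices outside $C$ are unchanged, so $A^\circ_{\rm ini}$ is feasible.

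\emph{Reachability.} It remains to produce a reconfiguration sequence from $A_{\rm ini}$ to $A^\circ_{\rm ini}$ that reverses only edges of $E(C)$. I would do this in two stages. First, reach some feasible orientation in which every vertex of $C$ has an incoming $E(C)$-edge: as long as some $v\in V(C)$ has none, the bound $|E(C)|\ge |V(C)|+1$ forces the blue in-degrees inside $E(C)$ to sum to more than $|V(C)|$, so some vertex carries a surplus, and one pushes this surplus to $v$ along a path of the connected graph $G_C$, reversing one arc at a time -- each reversal only removes an incoming arc from a vertex that still has another incoming arc (or enough red in-weight), so feasibility is maintained. Second, once both the current orientation and $A^\circ_{\rm ini}$ restrict to orientations of $G_C$ with minimum in-degree at least one, I would invoke the following reconfiguration statement and finish: any two such orientations of $G_C$ are connected by legal single-edge reversals inside $E(C)$.

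\emph{The main obstacle} is proving that last reconfiguration statement. The naive induction -- on the number of edges on which the current orientation and $A^\circ_{\rm ini}$ disagree, reversing one such edge at each step -- can stall when the head of the chosen edge currently has in-degree exactly one. The remedy is to first route surplus in-degree to that head along a path in $G_C$ (possible precisely because $|E(C)|\ge|V(C)|+1$ guarantees surplus somewhere and $G_C$ is connected), reverse the edge, and then push the borrowed surplus back; the delicate point is to organize this monotonically, so that already-installed edges of $A^\circ_{\rm ini}$ are never disturbed and the induction strictly progresses. This is cleanest if one fixes in advance the order in which the edges of $A^\circ_{\rm ini}$ are put into place -- the tree edges from the root outward, then the cycle edges of $K_1\cup K_2$, then the remaining edges -- and verifies that the surplus needed at each step is always available within the part of $G_C$ that is not yet frozen.
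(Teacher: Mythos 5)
There is a genuine gap, and you identify it yourself: property~4 is never actually established. Your plan fixes a canonical target orientation $A^\circ_{\rm ini}$ in advance (choosing, in particular, the directions of $K_1$ and $K_2$ a priori) and then needs the statement that any two orientations of $G_C$ with minimum in-degree at least one are connected by single-arc reversals inside $E(C)$ that never drop any vertex of $C$ below in-degree one. That connectivity lemma is the entire difficulty of the claim -- it is essentially the assertion that the configuration space of a blue component with two independent cycles is connected -- and your proposal only sketches a strategy for it (route surplus to the head, reverse, push back, freeze edges in a fixed order) while explicitly conceding that the monotonicity/availability-of-surplus verification is the ``delicate point'' and leaving it unverified. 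Note also that in the tight case $|E(C)|=|V(C)|+1$ there is exactly one unit of surplus in-degree, so the borrow-and-return scheme must be organized so that the single surplus token can always reach the edge you want to flip without disturbing the frozen part; nothing in the proposal rules out the frozen region blocking it. Until that lemma is proved, properties~1--3 are established but the transformation from $A_{\rm ini}$ to $A^\circ_{\rm ini}$ is not.

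The paper avoids this obstacle by not committing to a target in advance: it starts from $A_{\rm ini}$ and performs legal reversals directly -- if some vertex of $K_1$ has two incoming cycle arcs, reverse one and propagate the surplus around $K_1$ until it becomes a directed cycle (in whichever direction the process happens to produce), do the same for $K_2$ consistently with the shared path when the cycles intersect, and finally orient the remaining edges of a spanning tree of $G_C$ away from the cycles one by one. Every step is a feasible single-edge move touching only $E(C)$, so reachability (property~4) and property~1 are automatic, and properties~2--3 hold for whatever orientation the process ends at. In other words, the existential form of the claim lets the transformation choose $A^\circ_{\rm ini}$; by prescribing $A^\circ_{\rm ini}$ first you are trying to prove a strictly stronger reachability statement, which is precisely where your argument stalls. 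Your construction of an orientation satisfying properties~1--3 (via the case analysis: disjoint cycles, cycles sharing a vertex, theta graph) is fine, and your stage-1 argument for giving every vertex of $C$ an incoming blue arc can be made rigorous, but without a complete proof of the stage-2 reconfiguration lemma the claim is not proved.
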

  
    \begin{proof}[Proof of Claim \ref{myclaim:alpha1-alpha2}]
    If $K_1$ or $K_2$ is a directed cycle with respect to the orientation
    $A_{\rm ini}$, we leave the orientation of the cycle as it is.
    Otherwise, assume without loss of generality that $K_1$ is not a directed cycle with respect to $A_{\rm ini}$.
    Then 
    there is at least one vertex $v$ of $K_1$ having two blue edges of $K_1$ oriented towards $v$ by $A_{\rm ini}$. 
    Reversing one of the two arcs yields a feasible orientation. 
    After this step one of the neighbors of $v$, say $w$, has one additional incoming edge of $K_1$.
    If the remaining edge of $w$ on $K_1$ is an incoming edge, we reverse this edge.
    Otherwise we leave it as it is.
    By performing these
    steps in a consistent manner we obtain an orientation such that
    $K_1$ is an oriented cycle.   
    If $K_1$ and $K_2$ intersect, 
    due to the minimality of $E(K_1) \cup E(K_2)$, they intersect in a path $P$.
    In a similar way as above, we turn $K_2$ into an
    oriented cycle. In the case that $K_1$ and $K_2$ intersect we orient the edges of $K_2$, 
    such that the orientation is consistent with that of $P$.
    Let $A'_{\rm ini}$ be the resulting feasible orientation.  
    Consider a spanning tree $T$ of $G_C$. Since each vertex of $K_1$ and
    $K_2$ has in-degree at least one with respect to $A'_{\rm ini}$, we may
    (iteratively) direct each edge in $E(T) - (E(K) \cup E(K'))$ away from
    $K_1$.  Let $A^\circ_{\rm ini}$ be the resulting feasible orientation.  Observe that
    each vertex in $G_C$ has in-degree at least one with respect to $A^\circ_{\rm ini}$
    and that only the orientation of edges in $G_C$ were changed. Hence the three
    claimed properties are satisfied by $A^\circ_{\rm ini}$. Furthermore, there is a
    transformation from $A_{\rm ini}$ to $A^\circ_{\rm ini}$.
  \end{proof}

  \begin{myclaim}
    \label{myclaim:beta1-beta2}
    There is a feasible orientation $A^\circ_{\rm tar}$ of the constraint graph $G$ with
    the following properties:
    \begin{enumerate}
      \item The orientations $A_{\rm tar}$ and $A^\circ_{\rm tar}$ agree on $E - E(C)$, 
      \item $A^\circ_{\rm ini}$ and $A^\circ_{\rm tar}$ agree on $E(C)$, and
      \item there is a transformation from $A_{\rm tar}$ to $A^\circ_{\rm tar}$.
    \end{enumerate} 
  \end{myclaim}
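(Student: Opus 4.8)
The plan is to take $A^\circ_{\rm tar}$ to be the orientation that agrees with $A_{\rm tar}$ on $E-E(C)$ and with $A^\circ_{\rm ini}$ on $E(C)$. Properties~1 and~2 then hold by construction, and feasibility of $A^\circ_{\rm tar}$ is immediate: every vertex outside $V(C)$ retains its in-weight from the feasible orientation $A_{\rm tar}$ (only blue edges inside $C$ were touched), while every vertex of $V(C)$ has blue in-degree at least one under $A^\circ_{\rm ini}$ by the third property of Claim~\ref{myclaim:alpha1-alpha2}, hence in-weight at least two. So the whole content of the claim is Property~3, namely producing a transformation from $A_{\rm tar}$ to $A^\circ_{\rm tar}$.

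Since reversing an edge of $E(C)$ never affects the in-weight of a vertex outside $V(C)$, it suffices to transform the restriction to $E(C)$ from $A_{\rm tar}|_{E(C)}$ to $A^\circ_{\rm ini}|_{E(C)}$ through orientations keeping every vertex of $V(C)$ at in-weight at least two (equivalently, keeping blue in-degree at least one at each vertex of $V(C)$ whose red in-weight is at most one; a vertex of $V(C)$ with red in-weight two imposes no constraint). First I would run the construction in the proof of Claim~\ref{myclaim:alpha1-alpha2} verbatim, but starting from $A_{\rm tar}$: maintaining feasibility of the whole constraint graph and touching only $E(C)$, it yields an orientation $B$ with $B|_{E\setminus E(C)}=A_{\rm tar}|_{E\setminus E(C)}$ in which $K_1$ and $K_2$ are oriented cycles and every vertex of $C$ has in-degree at least one in the connected blue graph $G_C:=(V(C),E^{\rm blue}\cap E(C))$.

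Now both $B|_{E(C)}$ and $A^\circ_{\rm ini}|_{E(C)}$ are orientations of $G_C$ of minimum in-degree at least one, so everything reduces to the following reconfiguration fact, which I expect to be the main obstacle: \emph{in a connected graph containing two distinct cycles, any two orientations of minimum in-degree at least one are joined by single-arc reversals through orientations of minimum in-degree at least one}. Applied to $G_C$ together with $B|_{E(C)}$ and $A^\circ_{\rm ini}|_{E(C)}$, this fact finishes the proof: performing the resulting reversals on $G$ (leaving $E\setminus E(C)$ untouched) transforms $A_{\rm tar}$ into $A^\circ_{\rm tar}$, because at every step each vertex of $V(C)$ has blue in-degree at least one, hence in-weight at least two, and vertices outside $V(C)$ are never disturbed.

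It remains to sketch the fact, and this is exactly where the hypothesis that $C$ contains two cycles is indispensable: a lone oriented cycle cannot be reversed without creating a source, which is precisely why Reduction Rule~\ref{red-rule:onecycle} outputs \NO when the two orientations disagree on a chordless blue cycle. Two cycles force $|E(G_C)|\ge|V(G_C)|+1$, so every orientation of minimum in-degree at least one has a vertex of in-degree at least two. Given a current orientation and a target edge $uv$ oriented into $v$ that should be oriented into $u$: if $v$ has in-degree at least two, flip $uv$ and recurse on a smaller symmetric difference; otherwise follow out-edges from $v$ until one either reaches a vertex of in-degree at least two -- then reversing that directed path arc by arc from the $v$-end raises the in-degree of $v$ to two without dropping any in-degree below one -- or closes a directed cycle of in-degree-one vertices, in which case one first routes a spare in-degree onto some vertex of that cycle (along a directed path to a vertex of in-degree at least two, which exists, then reversed arc by arc) and then reverses the whole directed cycle arc by arc in cyclic order, moving the spare in-degree one step around at each flip. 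In every case the symmetric difference with the target strictly decreases, so the procedure terminates.
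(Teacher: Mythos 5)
Your reduction of the claim to the statement that, in a connected graph containing two distinct cycles, any two orientations of minimum in-degree one are flip-connected, is a genuinely different and more ambitious route than the paper's, which instead builds $A^\circ_{\rm tar}$ by hand: it re-runs the procedure of Claim~\ref{myclaim:alpha1-alpha2} on $A_{\rm tar}$ and, when $K_1$ and $K_2$ are disjoint and one of them is already a directed cycle, shuttles the orientation of the connecting path back and forth to fix $K_2$ and then $K_1$. Your target fact is true, but your sketch of it --- which you yourself flag as the main obstacle --- has two concrete gaps. First, the termination argument fails: after you route a spare in-degree to $v$ along a directed path and flip $uv$, every edge of that routing path (and, in the other case, every edge of the reversed closed dicycle other than $uv$) has changed orientation, and these edges may have agreed with the target beforehand; so the symmetric difference need not decrease. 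Concretely, take two triangles joined by a path, oriented $a\to b\to c\to a$, $c\to p\to x$, $x\to y\to z\to x$, with target obtained by reversing the first triangle: running your procedure on $ab$ via the out-walk $b\to c\to p\to x$ fixes $ab$ and $bc$ but newly breaks $cp$ and $px$, leaving the symmetric difference at three. Second, ``follow out-edges until one reaches a vertex of in-degree at least two or closes a directed cycle'' is not a complete dichotomy: the walk can dead-end at a leaf or inside a pendant out-tree all of whose vertices have in-degree one, and the existence of \emph{some} vertex of in-degree two somewhere in the graph does not by itself give a directed path from $v$ to one.

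Both gaps are repairable, essentially by mirroring the paper's Section~\ref{sec:blue}. First equalize the two in-degree sequences via Lemma~\ref{lem:blue02} (feasibility is preserved since both sequences are pointwise at least one), so that the remaining difference decomposes into arc-disjoint dicycles; then reverse these one at a time using the net-effect argument of Lemma~\ref{lem:blue06} (route excess to the dicycle, reverse it, replay the routing backwards on the reversed configurations), so that each step genuinely shrinks $|A\setminus T|$. The missing existence statement is that the set $S$ of vertices out-reachable from the dicycle contains a vertex of in-degree at least two: otherwise all boundary edges of $S$ point into $S$, $H[S]$ is connected and contains a cycle, and summing in-degrees over $S$ forces $S=V$ and $|E|=|V|$, contradicting the presence of two cycles. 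Two smaller points: the routing path must be reversed starting from its far end, not ``from the $v$-end'', or the first flip creates a vertex of in-degree zero; and the closed dicycle found by the out-walk necessarily re-enters $v$ through the arc $(u,v)$ (since $v$ has in-degree one), so reversing it does flip $uv$, but again at the cost of flipping edges that may already agree with the target. As written, the core lemma carrying the whole proof is unproven.
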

  
    \begin{proof}[Proof of Claim \ref{myclaim:beta1-beta2}] 
    We distinguish the two cases that $K_1$ and $K_2$ are disjoint or not.
    Let us first assume that $K_1$ and $K_2$ are disjoint.
    We first show that we can transform $A_{\rm tar}$ into a feasible orientation 
    that agrees with $A^\circ_{\rm ini}$ on $K_1$ and $K_2$. 
    If $K_1$ has a vertex that has at least two incoming edges,
    we apply the same procedure as in the proof of Claim \ref{myclaim:alpha1-alpha2}.
    Else, $K_1$ is a directed cycle (but possibly not directed as in $A^\circ_{\rm ini}$).
    Since $C$ is connected, $K_1$ and $K_2$ are connected by some path $P$.
    Since $K_1$ is a directed cycle, we can direct $P$ away from $K_1$ towards $K_2$.
    Hence, there is at least one vertex in $V(K_2)$ that
    has at least two incoming edges. 
    By the same steps of the proof of Claim \ref{myclaim:alpha1-alpha2} we 
    may obtain a feasible orientation that agrees with
    $A^\circ_{\rm ini}$ on $E(K_2)$. 
    By reversing $P$ and applying the same steps for $K_1$, we obtain an orientation
    $A_{\rm tar}'$ such that $A^\circ_{\rm ini}$ and $A'_{\rm tar}$ agree on $K_1$ and $K_2$.
    We consider the same spanning tree $T$ as in the proof of Claim
    \ref{myclaim:alpha1-alpha2} and (iteratively) direct all edges of $T$ away
    from $K_1$. Let $A''_{\rm tar}$ be the resulting feasible orientation.  Since each
    vertex of $G_C$ has at least one incoming arc with respect to $A''_{\rm tar}$, we
    can direct the remaining edges $E(G_C) - (E(K_1) \cup E(K_2) \cup E(T))$ as
    in $A^\circ_{\rm ini}$. Let the resulting feasible orientation be $A^\circ_{\rm tar}$.
    Observe that in the steps above only the orientation of edges of $G_C$ were
    changed, thus the first property of Claim \ref{myclaim:beta1-beta2} holds.
    Also observe that $A^\circ_{\rm tar}$ and $A^\circ_{\rm ini}$ agree on $E(G_C)$. Thus, Property
    2 also holds.
    
    It remains to consider the case that $K_1$ and $K_2$ are not disjoint.  To obtain a
    feasible orientation $A^\circ_{\rm tar}$ with the desired properties, simply apply
    the steps in the proof of Claim \ref{myclaim:alpha1-alpha2} to $A_{\rm tar}$.  
  \end{proof}
  It follows that there is a transformation
  from $A_{\rm ini}$ to $A_{\rm tar}$ if and only if there is a transformation from
  $A^\circ_{\rm ini}$ to $A^\circ_{\rm tar}$.
  Let $G'$ be the constraint graph obtained from $G$ by deleting the blue
  vertices of $C$ and connecting each red vertex of $C$ with a copy of the
  gadget shown in \figurename~\ref{fig:felix}. 
  We refer to the copies of the gadget as $G_1, G_2, \ldots, G_{k_C}$.
  Let $A^{\rm new}_{\rm ini}$ (resp.  $A^{\rm new}_{\rm tar}$) be an orientation of
  $E(G')$ such that $A^\circ_{\rm ini}$ and $A^{\rm new}_{\rm ini}$ (resp. $A^\circ_{\rm tar}$ and $A^{\rm new}_{\rm tar}$)
  agree on $E(G') - \bigcup_{1 \leq i \leq k_C} E(G_i)$. Furthermore, at each
  gadget $G_i$, $1 \leq i \leq k_C$, the orientations $A^{\rm new}_{\rm ini}$ and
  $A^{\rm new}_{\rm tar}$ are as shown in \figurename~\ref{fig:felix}.
  Observe that by construction each red vertex of $G'$ has at least one
  incoming blue arc from the gadget shown in \figurename~\ref{fig:felix}, so
  the orientations $A^{\rm new}_{\rm ini}$ and $A^{\rm new}_{\rm tar}$ are feasible. 
  
  In order to show that Rule~\ref{red-rule:twocycles} is safe it remains to
  prove that there is a transformation from $A^\circ_{\rm ini}$ to $A^\circ_{\rm tar}$ if and only
  if there is a transformation from $A^{\rm new}_{\rm ini}$ to $A^{\rm new}_{\rm tar}$. So first suppose
  that there is a transformation from $A^\circ_{\rm ini}$ to $A^\circ_{\rm tar}$. Then we obtain a
  transformation from $A^{\rm new}_{\rm ini}$ to $A^{\rm new}_{\rm tar}$ by skipping all the moves that
  change the orientation of an edge in $E(C)$.
  On the other hand, from a transformation from $A^{\rm new}_{\rm ini}$ to $A^{\rm new}_{\rm tar}$ we
  obtain a transformation from $A^\circ_{\rm ini}$ to $A^\circ_{\rm tar}$ by ignoring all the
  moves that change the orientation of an edge of one of the gadgets $G_i$, $1
  \leq i \leq k_C$. Therefore, Rule~\ref{red-rule:twocycles} is safe.

  We now prove that Reduction Rule~\ref{red-rule:degree1} is safe.  Consider a
  blue vertex $v$ of degree one in $G$. Observe that
  Rule~\ref{red-rule:degree1} is safe since in any feasible orientation of
  $E(G)$, the blue edge incident to $v$ is oriented towards $v$.

  It remains to prove the safeness of Reduction Rule~\ref{red-rule:2-path}.
  Let $v$ be a blue vertex of degree 2 in $G$ and let $u$ and $w$ be the
  neighbors of $v$, such that  $uw \notin E(G)$ and let $A$ be a
  feasible orientation of $E$. Since $v$ is a blue vertex of degree 2, there
  are at most three possible orientation of the edges $uv$ and $vw$:  $uv$ and
  $vw$, $wv$ and $vu$, and $uv$ and $wv$.  Suppose we obtain the graph $G'$ by
  replacing the vertex $v$ by a new blue edge $uw$.  We obtain from a feasible
  orientation of $G$ a feasible orientation of $G'$ by orienting $uw$, such
  that the in-weights of $u$ and $w$ in $G'$ are at least the in-weights of $u$
  and $w$ in $G$. Since this is always possible, we obtain from a
  transformation between two feasible orientations of $G$ a transformation
  between two corresponding feasible orientations of $G'$ and vice versa.
  Hence, Rule~\ref{red-rule:2-path} is safe if $u$ and $w$ are non-adjacent in
  $G$.
\end{proof}

\begin{proof}[Proof of Proposition~\ref{prop:running-time:rules}] 
To apply rules~\ref{red-rule:onecycle} and \ref{red-rule:twocycles} we first
run a DFS on $G_B = (V, E^{\rm blue})$. For each component, we check if it is a
cycle, or whether it contains at least two cycles (this can be done by counting).
This takes time $O(|V| + |E|)$.
Additionally it is easy to see that also the rules \ref{red-rule:degree1} and \ref{red-rule:2-path}
can be applied in $O(|V| + |E|)$.
Since in each iteration we delete at least one vertex of the constraint graph, the total running time is
$O(|V| \cdot (|V| + |E|))$.
\end{proof}

\begin{proof}[Proof of Corollary~\ref{cor:red-edges-kernel:c2e}]
We also work with the four rules introduced earlier, but modify Rule~\ref{red-rule:twocycles} slightly.
The key difference is that we only have a starting configuration $A_{\rm ini}$ 
and ask whether a certain edge $e^*$ can
be reversed. Thus, starting from the initial orientation we obtain an instance of size $O(k)$,
but have to keep track of $e^*$.

If $e^*$ is red, then we simply apply Rules \ref{red-rule:twocycles}, \ref{red-rule:degree1} and \ref{red-rule:2-path} as before.
After applying the reduction rules until no longer possible, we observe that the resulting instance
has size $O(k)$ and we can check whether the current orientation is connected
to an orientation in which $e^*$ is reversed.

Thus, we now assume that $e^*$ is blue.
We first consider Rule \ref{red-rule:twocycles} and show how we keep track of $e^*$.
Let $C$ be a connected component of $(V, \Eblue)$ containing at least two cycles.
If $e^* \notin E(C)$, we simply do the same as before.
Observe that this preserves \YES and \NO instances.
We now assume that $e^* \in E(C)$.
In the proof of Claim~\ref{myclaim:alpha1-alpha2} we showed that there is an
orientation $A^\circ_{\rm ini}$ such that
\begin{itemize}
\item[a)] we can choose the orientation of each cycle,
\item[b)] we can choose the orientation of each path connecting two cycles,
\item[c)] all other edges are oriented away from the cycles and
\item[d)] $A_{\rm ini} \leftrightsquigarrow A^\circ_{\rm ini}$.
\end{itemize} 
Thus if $e^*$ is an edge satisfying a) or b), the answer to the decision problem is \YES.
Else, $e^*$ is an edge of type c). If the target configuration of $e^*$ is oriented away from the cycles, then we can output \YES.
Otherwise we will now work with $A^\circ_{\rm ini}$.
Let $e^* = (v w)$ be oriented from $v$ to $w$ in $A^\circ_{\rm ini}$.
Let $P$ be the shortest path (neglecting orientations) in $C$ from $v$ to a vertex $u$ 
of a cycle in $C$. Note that $P$ and $u$ are unique. 
Let $L \subseteq V$ be all vertices
that can be reached from $w$ by the arcs in $A^\circ_{\rm ini}$.

Rule~\ref{red-rule:twocycles} is now modified in the following way. Instead adding
a gadget to every red vertex of $C$ as before, we only add a vertex to each red vertex of $C$
that is not contained in $L$. 
Furthermore, we add a gadget to $v$ (even though it might be a blue vertex).
We then delete all edges and blue vertices of $C$ apart from $e^*$ and 
that are not contained in $L$.

Observe that, similar to Rule~\ref{red-rule:twocycles}, we have that the modified instance is a \YES
instance if and only if the old instance is a \YES instance.
After applying the modified Rule~\ref{red-rule:twocycles} until no longer possible, we obtain a 
pseudo-forest in which each blue component has at most one cycle.

Next, we consider Rule~\ref{red-rule:degree1}. If $e^*$ is not affected by the Rule, it can be applied
safely. Otherwise, if $e^*$ is the only edge adjacent to some vertex, then $e^*$ can not be reversed and
we output \NO. Hence Rule~\ref{red-rule:degree1} is safe.

Finally, we consider Rule~\ref{red-rule:2-path}. 
Again, if $e^*$ is not affected by the rule then applying the rule is safe.
Otherwise we can assume that $w$ is adjacent to precisely two blue edges, say $e^*$ and $e$. 
Similar to the original proof of Rule~\ref{red-rule:twocycles} we can show that
contracting $e$ is safe.
Hence applying any of the three (modified) rules is safe. After applying the Rules~\ref{red-rule:twocycles}-\ref{red-rule:2-path} until no longer possible, we have that the resulting instance
has at most $O(k)$ vertices and edges and thus we have a kernel of size $O(k)$.
We then have to check whether the modified start configuration is connected to a configuration
in which $e^*$ is reversed.
\end{proof}

\begin{proof}[Proof of Theorem~\ref{thm:ncl-red-edges-bound-kernel}]
  Let $(G, A_{\rm ini}, A_{\rm tar})$ be the instance of \ncl that we obtained by applying
  the reduction rules \ref{red-rule:onecycle}--\ref{red-rule:2-path}
  until no longer possible. 
  We show that $G = (V, E)$ has at most $8k$ vertices and $11k$ edges, where $k = |E^{\rm red}|$.
  Let $G_B = (V, E^{\rm blue})$ be the blue subgraph containing blue edges only.
  Furthermore, let $V_B$ be the set of blue vertices of the copies of the
  gadget shown in \figurename~\ref{fig:felix} present in $G$.

  We first bound the number of vertices and edges in $G_B - V_B$.  Note that
  each component in $G_B - V_B$ is a pseudo-forest, since otherwise
  Rule~\ref{red-rule:twocycles} is applicable.  Also note that there are at
  most $k$ blue vertices of degree 2 that are not contained in a blue cycle in
  $G_B - V_B$.  This is due to the fact that a vertex of degree 2 in $G_B -
  V_B$ is one of the exceptions of Rule \ref{red-rule:2-path}.
  All remaining vertices of
  $G_B - V_B$ have degree at least three.  Let $G'_B$ be the graph obtained
  from $G_B$ by contracting each vertex of degree 2 in $G_B - V_B$.
  
  We argue that $|V( G'_B - V_B ) | \leq 2k$ and $|E( G'_B - V_B ) | \leq 2k$.
  Since $G'_B-V_B$ is a pseudo-forest, we have that $|E( G'_B - V_B ) | \leq |V( G'_B - V_B ) |$.
  For now let us assume that $k \leq |V( G'_B - V_B ) |/2 -1$. This implies
  that there are at least $|V( G'_B - V_B ) |/2 +1$ blue vertices in $G'_B -
  V_B$. Also note that each red vertex is incident to at least one blue edge in
  $G'_B - V_B$, as otherwise it is an isolated vertex in $G'_B - V_B$.  But
  since each blue vertex has at least 3 incident edges, we have that $|E( G'_B
  - V_B ) | \geq (3 ( |V( G'_B - V_B ) |/2 +1) +  |V( G'_B - V_B ) |/2 -1)/2 >
  |V( G'_B - V_B ) |$, a contradiction. 
  Hence we have $k \geq |V( G'_B - V_B ) |/2$ and therefore $|V( G'_B - V_B ) | \leq 2k$.
  This also implies $|E( G'_B - V_B ) | \leq 2k$, as claimed. 
  Since there are up to $k$ vertices of degree 2 in $G_B - V_B$, we get $|V(
  G_B - V_B ) | \leq 3k$ and $|E( G_B - V_B ) | \leq 3k$.  For each gadget
  $G_i$ we have that $| V(G_i)| = 5$ and $|E(G_i)| = 8$.  Since $G$
  contains at most $k$ gadgets we have $|V(G)| \leq 8k$ and $|E(G)| \leq 11k$.
  
  Hence \ncl for $G$ admits a kernel of size $O(k)$\textup{;} and,
  in particular, \ncl for $G$ can be solved in time $O^*(2^{O(k)})$. 
\end{proof}

\section{Proofs Omitted from Section~\ref{sec:blue}}

\begin{proof}[Proof of Lemma~\ref{lem:blue02}]
We prove the lemma by induction on 
$|A^{\rm red}_{\rm ini} \setminus A^{\rm red}_{\rm tar}|$. 
If $\rho_{A^{\rm red}_{\rm ini}} = \rho_{A^{\rm red}_{\rm tar}}$, then the claim is obvious, because 
the sequence consisting of only one orientation $A^{\rm red}_0 = A^{\rm red}_{\rm ini}$ satisfies the conditions. 
Thus, it suffices to consider the case when $\rho_{A^{\rm red}_{\rm ini}} \not= \rho_{A^{\rm red}_{\rm tar}}$. 
In this case, there exists a vertex $u \in V$ such that $\rho_{A^{\rm red}_{\rm ini}} (u) > \rho_{A^{\rm red}_{\rm tar}}(u)$, 
because 
$\sum_{v \in V} \rho_{A^{\rm red}_{\rm ini}} (v) = \sum_{v \in V}  \rho_{A^{\rm red}_{\rm tar}}(v)$. 
Then, there exists an arc $a \in A^{\rm red}_{\rm ini} \setminus A^{\rm red}_{\rm tar}$ that enters $u$. 
Let $A^{\rm red}_1$ be the orientation of $E^{\rm red}$ obtained from $A^{\rm red}_{\rm ini}$ by reversing the direction of $a$. 
Since $|A^{\rm red}_1 \setminus A^{\rm red}_{\rm tar}|  < |A^{\rm red}_{\rm ini} \setminus A^{\rm red}_{\rm tar}|$, by induction hypothesis, 
there exists a sequence $A^{\rm red}_1, \dots , A^{\rm red}_l$ of orientations of $E^{\rm red}$
such that $\rho_{A^{\rm red}_l} = \rho_{A^{\rm red}_{\rm tar}}$, $A^{\rm red}_{i-1} \leftrightarrow A^{\rm red}_i$ for $i=2, \dots , l$, and 
$\rho_{A^{\rm red}_i}(v) \ge \min \{ \rho_{A^{\rm red}_1}(v), \rho_{A^{\rm red}_{\rm tar}}(v)\}$ for any $v \in V$ and any $i \in \{1, \dots , l\}$. 
By letting $A^{\rm red}_0 = A^{\rm red}_{\rm ini}$, the sequence $A^{\rm red}_0, A^{\rm red}_1, \dots , A^{\rm red}_l$ satisfies the conditions, 
because $A^{\rm red}_0 \leftrightarrow A^{\rm red}_1$, 
$\rho_{A^{\rm red}_1}(v) \ge \rho_{A^{\rm red}_{\rm ini}}(v)$ for each $v \in V \setminus \{u\}$, and 
$\min \{ \rho_{A^{\rm red}_1}(u), \rho_{A^{\rm red}_{\rm tar}}(u)\} = \rho_{A^{\rm red}_{\rm tar}}(u) = \min \{ \rho_{A^{\rm red}_{\rm ini}}(u), \rho_{A^{\rm red}_{\rm tar}}(u)\}$. 
\end{proof}

\begin{proof}[Proof of Lemma~\ref{lem:blue04}]
We can easily check the condition (1). 
To check the condition (2), 
we construct a digraph $\hat G=(\hat V, \hat A)$ and consider a network flow problem in it. 
Introduce a new vertex $w_e$ for each $e \in E^{\rm red}$ and two new vertices $s$ and $t$, 
and define $\hat V := V \cup \{w_e \mid e \in E^{\rm red}\} \cup \{s, t\}$. 
Define the arc set $\hat A:= \hat A_1 \cup \hat A_2 \cup \hat A_3$ by
\begin{align*}
\hat A_1 &:= \{(s, w_e) \mid e \in E^{\rm red} \}, \\
\hat A_2 &:= \{(w_e, v) \mid e \in E^{\rm red}, v\in V, \mbox{$e$ is incident to $v$ in $G$} \}, \\
\hat A_3 &:= \{(v, t) \mid v\in V \}. 
\end{align*}
For each $a \in \hat A$, define the lower bound $l(a)$ and the upper bound $u(a)$ of 
the amount of flow through $a$ as follows. 
\begin{itemize}
\item
For each $(s, w_e) \in \hat A_1$, define $l(s, w_e) := u (s, w_e) := 1$. 
\item
For each $(w_e, v) \in \hat A_2$, define $l(w_e, v) := 0$ and $u (w_e, v) := 1$. 
\item
For each $(v, t) \in \hat A_3$, define $l(v, t) := u (v, t) := d(v)$ if $v \in X$ and $d(v) \in \{0, 1\}$, 
and define $l(v, t) := 2$ and $u (v, t) := + \infty$ otherwise. 
\end{itemize}
Then, the condition (2) holds if and only if $\hat G$ has an integral $s$-$t$ flow satisfying the above constraint.  
This can be tested in polynomial time 
by a standard maximum flow algorithm (see e.g. \cite[Corollary 11.3a]{Schrijver:03}). 
\end{proof}

\begin{proof}[Proof of Lemma~\ref{lem:blue05}]
It suffices to consider the case when 
$\phi(A_{\rm ini}) \xleftrightarrow[\mathcal F]{} \phi(A_{\rm tar})$. 
Denote $\phi(A_{\rm ini}) = (A^{\rm blue}_{\rm ini}, d_{\rm ini})$ and $\phi(A_{\rm tar}) = (A^{\rm blue}_{\rm tar}, d_{\rm tar})$. 
By definition, we have either 
$d_{\rm ini} = d_{\rm tar}$ and $A^{\rm blue}_{\rm ini} \leftrightarrow A^{\rm blue}_{\rm tar}$, or 
$A^{\rm blue}_{\rm ini} = A^{\rm blue}_{\rm tar}$. 

If $d_{\rm ini} = d_{\rm tar}$ and $A^{\rm blue}_{\rm ini} \leftrightarrow A^{\rm blue}_{\rm tar}$, then 
$A_{\rm ini} \leftrightarrow A^{\rm blue}_{\rm tar} \cup A^{\rm red}_{\rm ini}$ and 
$\phi(A^{\rm blue}_{\rm tar} \cup A^{\rm red}_{\rm ini}) = (A^{\rm blue}_{\rm tar}, d_{\rm ini}) = (A^{\rm blue}_{\rm tar}, d_{\rm tar}) = \phi(A_{\rm tar})$, 
which means that $A^\circ_{\rm tar} := A^{\rm blue}_{\rm tar} \cup A^{\rm red}_{\rm ini}$ satisfies the conditions. 

Otherwise, let $A^{\rm blue} := A^{\rm blue}_{\rm ini} = A^{\rm blue}_{\rm tar}$. 
By Lemma~\ref{lem:blue02},  
we obtain a sequence $A^{\rm red}_0, A^{\rm red}_1, \dots , A^{\rm red}_l$ of orientations of $E^{\rm red}$
such that $A^{\rm red}_0 = A^{\rm red}_{\rm ini}$, $\rho_{A^{\rm red}_l} = \rho_{A^{\rm red}_{\rm tar}}$, $A^{\rm red}_{i-1} \leftrightarrow A^{\rm red}_i$ for $i=1, \dots , l$, and 
$\rho_{A^{\rm red}_i}(v) \ge \min \{ \rho_{A^{\rm red}_{\rm ini}}(v), \rho_{A^{\rm red}_{\rm tar}}(v)\}$ for any $v \in V$ and any $i \in \{0, 1, \dots , l\}$.
Then, for any $i \in \{0, 1, \dots , l\}$, we have 
\begin{align*}
2 \rho_{A^{\rm blue}} (v) + \rho_{A^{\rm red}_i}(v) 
\ge \min \{2 \rho_{A^{\rm blue}} (v) + \rho_{A^{\rm red}_{\rm ini}}(v), 2 \rho_{A^{\rm blue}} (v) + \rho_{A^{\rm red}_{\rm tar}}(v)\} 
\ge 2
\end{align*}
for any $v \in V$,    
and hence $A^{\rm blue} \cup A^{\rm red}_i$ is feasible. 
Since $A^{\rm blue} \cup A^{\rm red}_{i-1} \leftrightarrow A^{\rm blue} \cup A^{\rm red}_i$ for $i=1, \dots , l$, 
we have 
$$
(A_{\rm ini} =) A^{\rm blue} \cup A^{\rm red}_{\rm ini} \leftrightsquigarrow A^{\rm blue} \cup A^{\rm red}_l.
$$
Furthermore, 
since $\rho_{A^{\rm red}_l} = \rho_{A^{\rm red}_{\rm tar}}$, 
we have $\phi(A^{\rm blue} \cup A^{\rm red}_l) = \phi(A_{\rm tar})$. 
Therefore, $A^\circ_{\rm tar} := A^{\rm blue} \cup A^{\rm red}_l$ satisfies the conditions in the lemma. 
\end{proof}

\begin{proof}[Proof of Lemma~\ref{lem:blue06}]
We prove (i)$\Rightarrow$(ii), (ii)$\Rightarrow$(iii), and (iii)$\Rightarrow$(i), respectively. 

{\bf [(i)$\Rightarrow$(ii)]} 
If (i) holds, then $A := (A_{\rm ini} \setminus C) \cup \overline{C}$ satisfies the conditions in (ii), since it contains no arc in $C$. 

{\bf [(ii)$\Rightarrow$(iii)]} 
We prove the contraposition. 
Assume that (iii) does not hold, that is, 
there exists a vertex $u \in V(C)$ such that 
$2 \rho_{A^{\rm blue}}(u) + \rho_{A^{\rm red}}(u) = 2$
for any $A \in \mathcal A$ with $A_{\rm ini} \leftrightsquigarrow A$. 
Let $a$ be the arc in $C$ that enters $u$. 
Since we cannot reverse the direction of $a$ without violating the feasibility, 
$a$ is contained in any orientation $A \in \mathcal A$ with $A_{\rm ini} \leftrightsquigarrow A$.

{\bf [(iii)$\Rightarrow$(i)]} 
Suppose that (iii) holds. 
We take a sequence 
$A_0, A_1, \dots , A_l$ of feasible orientations of $E$ such that 
$A_0 = A_{\rm ini}$, 
$A_i$ is obtained from $A_{i-1}$ by reversing an arc $a_i \in A_{i-1}$ for $i \in \{1, 2, \dots , l\}$, and 
there exists $u \in V(C)$ such that 
$2 \rho_{A^{\rm blue}_l}(u) + \rho_{A^{\rm red}_l}(u) \ge 3$.  
By taking a minimal sequence with these conditions, we may assume that 
$a_i$ is not contained in $C$ for $i \in \{1, 2, \dots , l\}$. 
Since 
$2 \rho_{A^{\rm blue}_l}(u) + \rho_{A^{\rm red}_l}(u) \ge 3$, 
starting from $A_l$, 
we can change the direction of each arc in $C$ one by one without violating the feasibility, 
which shows that $A_l \leftrightsquigarrow (A_l \setminus C) \cup \overline{C}$. 
On the other hand, since $(A_i \setminus C) \cup \overline{C}$ is obtained from $(A_{i-1} \setminus C) \cup \overline{C}$ 
by reversing $a_i$ for $i \in \{1, 2, \dots , l\}$, we obtain
$(A_{\rm ini} \setminus C) \cup \overline{C} \leftrightsquigarrow (A_l \setminus C) \cup \overline{C}$. 
Thus, it holds that $A_{\rm ini} \leftrightsquigarrow A_l \leftrightsquigarrow (A_l \setminus C) \cup \overline{C} \leftrightsquigarrow (A_{\rm ini} \setminus C) \cup \overline{C}$. 
\end{proof}

\begin{proof}[Proof of Proposition \ref{prop:blue:A}]
If $A$ is a solution of \ProblemA, then 
$\phi_u(A) = (A^{\rm blue}, d)$ satisfies the conditions by Lemma~\ref{lem:blue18}. 
Conversely, assume that there exists a pair $(A^{\rm blue}, d)\in \mathcal{F}_u$ 
such that $2 \rho_{A^{\rm blue}} (u) + d(u) \ge 3$ and $\phi_u(A_{\rm ini}) \underset{\mathcal{F}_u}{\leftrightsquigarrow} (A^{\rm blue}, d)$.  
By Lemma~\ref{lem:blue15},  
there exists an orientation $A \in \mathcal{A}$ with $\phi_u(A) = (A^{\rm blue}, d)$ such that 
$A_{\rm ini} \leftrightsquigarrow A$. 
Since $2 \rho_{A^{\rm blue}} (u) + \rho_{A^{\rm red}} (u) \ge 2 \rho_{A^{\rm blue}} (u) + d (u) \ge 3$, 
$A$ is a solution of \ProblemA. 
\end{proof}

\begin{proof}[Proof of Corollary \ref{cor:blue01}]
Let $e^*$ be the edge of the orientation $A_{\ini}$ that we wish to reverse.
If $e^*$ is a blue edge, we simply solve the reconfiguration problem in 
$\mathcal{F}$ (see Section~\ref{sec:reconfF}). 
Since $| \mathcal{F} | = 2^{O(k)}$ this can be done in FPT time.
If $e^*$ is a red edge, we solve \ProblemA with $u$ being the head of 
$e^*$ (see Section~\ref{sec:blueprobA}).
This can also be done in FPT time.
\end{proof}

\end{document}